\newtheorem{remark}{Remark}[section]
\def\Q{{\mathbb Q}}
\def\R{\mathbb R}
\def\ve{\varepsilon}
\def\se{{\rm se}}
\def\ce{{\rm ce}}
\def\Mc{{\rm Mc}}
\def\Ms{{\rm Ms}}
\title{Localization of Laplacian eigenfunctions in circular, spherical and elliptical domains}
\author{B.-T. Nguyen\footnotemark[1] 
\and D.~S. Grebenkov\footnotemark[1]\ \footnotemark[2] \ \footnotemark[3] \ \footnotemark[4]}
\begin{document}
\maketitle

\renewcommand{\thefootnote}{\fnsymbol{footnote}}
\footnotetext[1]{Laboratoire de Physique de la Mati\`ere Condens\'ee, CNRS -- Ecole Polytechnique, 91128 Palaiseau, France}
\footnotetext[2]{Laboratoire Poncelet, CNRS -- Independent University of Moscow, Bolshoy Vlasyevskiy Pereulok 11, 119002 Moscow, Russia}
\footnotetext[3]{Chebyshev Laboratory, Saint Petersburg State University, 14th line of Vasil'evskiy Ostrov 29, Saint Petersburg, Russia}
\footnotetext[4]{Corresponding author: denis.grebenkov@polytechnique.edu}
\renewcommand{\thefootnote}{\arabic{footnote}}

\date{Received: \today / Revised version: }

\begin{abstract}
We consider Laplacian eigenfunctions in circular, spherical and
elliptical domains in order to discuss three kinds of high-frequency
localization: whispering gallery modes, bouncing ball modes, and
focusing modes.  Although the existence of these modes was known for a
class of convex domains, the separation of variables for above domains
helps to better understand the ``mechanism'' of localization, i.e. how
an eigenfunction is getting distributed in a small region of the
domain, and decays rapidly outside this region.  Using the properties
of Bessel and Mathieu functions, we derive the inequalities which
imply and clearly illustrate localization.  Moreover, we provide an
example of a non-convex domain (an elliptical annulus) for which the
high-frequency localized modes are still present.  At the same time,
we show that there is no localization in most of rectangle-like
domains.  This observation leads us to formulating an open problem of
localization in polygonal domains and, more generally, in piecewise
smooth convex domains.
\end{abstract}

\begin{keywords}
Laplacian eigenfunctions, Localization, Bessel and Mathieu functions,
Diffusion, Laplace operator
\end{keywords}

\begin{AMS}
35J05, 35Pxx, 51Pxx, 33C10, 33E10
\end{AMS}

% 35J05 - Laplacian operator, reduced wave equation (Helmholtz equation), Poisson equation
% 35Pxx - Spectral theory and eigenvalue problems
% 51Pxx - Geometry and physics
% 33C10   	Bessel and Airy functions, cylinder functions, ${}_0F_1$
% 33E10   	Lamé, Mathieu, and spheroidal wave functions

\pagestyle{myheadings}
\thispagestyle{plain}

\section{Introduction}

A hundred years ago, Lord Rayleigh documented an interesting
acoustical phenomenon that occured in the whispering gallery under the
dome of Saint Paul's Cathedral in London \cite{Rayleigh1910} (see also
\cite{Raman1921,Raman1922}).  A whisper of one person propagated along
the curved wall to another person stood near the wall.  This acoustical
effect and many related wave phenomena can be mathematically described
by Laplacian eigenmodes satisfying $-\Delta u = \lambda u$ in a
bounded domain, with an appropriate boundary condition:
\begin{equation}
\begin{split} 
u & = 0  \hskip 5mm  \textrm{(Dirichlet)} , \\
\frac{\partial u}{\partial n} & = 0 \hskip 5mm \textrm{(Neumann)} , \\
\frac{\partial u}{\partial n} + h u & = 0 \hskip 5mm \textrm{(Robin)} , \\
\end{split}
\end{equation}
where $h \geq 0$ is a positive constant, and $\partial/\partial n$ is
the normal derivative directed outwards the boundary.  It turns out
that the eigenmodes that are ``responsible'' for the whispering
effect, are mostly distributed near the boundary of the domain and
almost zero inside.  Keller and Rubinow discussed these so-called
\emph{whispering gallery modes} and also \emph{bouncing ball} modes.
The existence of such \emph{localized} eigenmodes in the limit of
large eigenvalues was shown for any two-dimensional domain with
arbitrary smooth convex curve as its boundary (so-called
high-frequency or high-energy localization) \cite{Keller60}.  A
further semiclassical approximation of Laplacian eigenfunctions in
convex domains was developed by Lazutkin
\cite{Babich68,Lazutkin68,Lazutkin73,Lazutkin93} (see also
\cite{Arnold72,Smith74,Ralston76,Ralston77}).  Chen and
co-workers analyzed Mathieu and modified Mathieu functions and
reported another type of localization named \emph{focusing modes}
\cite{Chen94}.  These and other localized eigenmodes have been
intensively studied for various domains, named quantum billiards
\cite{Gutzwiller,Heller98b,Stockmann,Jakobson01,Sarnak11}.  
It is also worth mentioning that low-frequency localization of
Laplacian eigenfunctions in simple and irregular domains has attracted
a considerable attention during the last two decades
\cite{Sapoval91,Even99,Felix07,Heilman10,Delitsyn12a,Delitsyn12b}.

The aim of this paper consists in revisiting and illustrating the
aforementioned three types of high-frequency localization.  For this
purpose, we consider circular, spherical and elliptical domains for
which the separation of variables reduces the analysis to the behavior
of special functions.  Using the properties of Bessel and Mathieu
functions, we derive the inequalities that clearly show the existence
of infinitely many localized eigenmodes in circular, spherical and
elliptical domains with Dirichlet, Neumann or Robin boundary
condition.  More precisely, we call an eigenfunction $u$ of the
Laplace operator in a bounded domain $\Omega\subset \R^d$
$L_p$-\emph{localized} ($p\geq 1$) if it is essentially supported by a
small subdomain $\Omega_\alpha \subset
\Omega$, i.e.
\begin{equation}
\label{eq:def_loc}
 \frac{\|u\|_{L_p(\Omega \setminus \Omega_\alpha)}}{\|u\|_{L_p(\Omega)}} \ll 1, 
\qquad \frac{\mu_d(\Omega_\alpha)}{\mu_d(\Omega)} \ll 1 ,
\end{equation}
where $\|.\|_{L_p}$ is the $L_p$-norm, and $\mu_d$ is the Lebesgue
measure.  We stress that this ``definition'' is qualitative as there
is no objective criterion for deciding how small these ratios have to
be.  This is the major problem in defining the notion of localization.
For circular, spherical and elliptical domains, we will show in
Sect. \ref{sec:circular} and \ref{sec:elliptical} that both ratios can
be made arbitrarily small.  In other words, for any prescribed
threshold $\ve$, there exist a subdomain $\Omega_\alpha$ and
infinitely many eigenfunctions for which both ratios are smaller than
$\ve$.  Most importantly, we will provide a simple example of a
non-convex domain for which the high-frequency localization is still
present.  At the same time, we will show in Sect. \ref{sec:rectangle}
the absence of localization in most of rectangle-like domains.  This
observation will lead us to formulating an open problem of
localization in polygonal domains and, more generally, in piecewise
smooth convex domains.

\section{Localization in circular and spherical domains}
\label{sec:circular}

\subsection{Eigenfunctions for circular domains}

The rotation symmetry of a disk $\Omega = \{ x\in\R^2 ~:~ |x| < R\}$
of radius $R$ leads to an explicit representation of the
eigenfunctions in polar coordinates:
\begin{equation}
\label{eq:u_disk}
u_{nki}(r,\varphi) = J_n(\alpha_{nk} r/R) 
\times \begin{cases} \cos(n\varphi), \quad i = 1,  \cr  \sin(n\varphi), \quad i = 2 ~ (n\ne 0),\end{cases}
\end{equation}
where $J_n(z)$ are the Bessel functions of the first kind
\cite{Watson,Bowman,Abramowitz} and $\alpha_{nk}$ are the positive
zeros of $J_n(z)$ (Dirichlet), $J'_n(z)$ (Neumann) and $J'_n(z) + h
J_n(z)$ (Robin).  The eigenfunctions are enumerated by the triple
index $nki$, with $n = 0,1,2,...$ counting the order of Bessel
functions, $k = 1,2,3,...$ counting the positive zeros, and $i = 1,2$.
Since $u_{0k2}(r,\varphi)$ are trivially zero, they are not counted as
eigenfunctions.  The eigenvalues $\lambda_{nk} = \alpha_{nk}^2/R^2$,
which are independent of the last index $i$, are simple for $n = 0$
and twice degenerate for $n > 0$.  In the latter case, the
eigenfunction is any nontrivial linear combination of $u_{nk1}$ and
$u_{nk2}$.  As we will derive the estimates that will be independent
of the angular coordinate $\varphi$, the last index $i$ will be
omitted.

\newpage
\subsection{Whispering gallery modes}
\label{sec:whispering}

The disk is the simplest shape for illustrating the whispering gallery
and focusing modes.  The explicit form (\ref{eq:u_disk}) of
eigenfunctions allows one to derive accurate bounds, as shown below.
When the index $k$ is fixed, while $n$ increases, the Bessel functions
$J_n(\alpha_{nk} r/R)$ become strongly attenuated near the origin (as
$J_n(z)\sim (z/2)^n/n!$ at small $z$) and essentially localized near
the boundary, yielding whispering gallery modes.  In turn, when $n$ is
fixed while $k$ increases, the Bessel functions rapidly oscillate, the
amplitude of oscillations decreasing towards to the boundary.  In that
case, the eigenfunctions are mainly localized at the origin, yielding
focusing modes.  These qualitative arguments are rigorously formulated
in the following
\begin{theorem}
\label{theo:disk}
Let $D = \{x \in \R^2 ~:~ |x| < R\}$ be a disk of radius $R > 0$, and
$D_{nk} = \{ x\in\R^2~:~ |x| < R d_n/\alpha_{nk} \}$, where $d_n = n -
n^{2/3}$, and $\alpha_{nk}$ are the positive zeros of $J_n(z)$
(Dirichlet), $J'_n(z)$ (Neumann) or $J'_n(z) + h J_n(z)$ for some $h >
0$ (Robin), with $n = 0,1,2,...$ denoting the order of Bessel function
$J_n(z)$ and $k=1,2,3,...$ counting zeros.  Then for any $p \geq 1$
(including $p = \infty$), there exists a universal constant $C_p > 0$
such that for any $k = 1,2,3,...$ and any large enough $n$, the
Laplacian eigenfunction $u_{nk}$ for Dirichlet, Neumann or Robin
boundary condition satisfies
\begin{equation} 
\label{eq:eigen_disk_L2}
\frac{\|u_{nk}\|_{L_p(D_{nk})}}{\|u_{nk}\|_{L_p(D)}} < C_p n^{\frac{1}{3} + \frac{2}{3p}} 2^{-n^{1/3}/3} .
\end{equation}
This estimate implies that
\begin{equation}
\lim\limits_{n \to \infty} \frac{\|u_{nk}\|_{L_p(D_{nk})}}{\|u_{nk}\|_{L_p(D)}} = 0,
\hskip 10mm \mathrm{while}  \hskip 5mm  \lim\limits_{n \to \infty} {\frac{\mu_2(D_{nk})}{\mu_2(D)}} = 1.
\end{equation}
\end{theorem}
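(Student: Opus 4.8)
The plan is to bound the ratio of $L_p$-norms by separating the integral over $D$ into the "inner" part $D_{nk}$ and the remaining annulus, and then to exploit the monotone growth of $J_n$ on the interval $[0,\alpha_{nk}]$ together with the smallness of $J_n$ near the origin. First I would observe that in polar coordinates the angular factor $\cos(n\varphi)$ or $\sin(n\varphi)$ contributes the same constant to numerator and denominator (after raising to the $p$-th power and integrating in $\varphi$), so the ratio reduces to a one-dimensional ratio of weighted integrals of $|J_n(\alpha_{nk}r/R)|^p$ against the radial measure $r\,dr$. Rescaling $t=\alpha_{nk}r/R$, the task becomes to bound
\begin{equation}
\frac{\int_0^{d_n}|J_n(t)|^p\,t\,dt}{\int_0^{\alpha_{nk}}|J_n(t)|^p\,t\,dt}.
\end{equation}

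The key analytic input is the behaviour of $J_n(t)$ for $0\le t\le d_n=n-n^{2/3}$, i.e. in the region to the left of the turning point $t=n$, where $J_n$ is non-oscillatory, positive and increasing. Here I would use a uniform bound of the form $J_n(t)\le J_n(d_n)$ for $t\le d_n$, so the numerator is at most $|J_n(d_n)|^p \cdot d_n^2/2$. For the denominator I need a lower bound; the cleanest route is to bound it from below by the contribution of a fixed-size neighbourhood of the first maximum of $J_n$ just past the turning point (near $t\approx n + c\,n^{1/3}$, by the Airy/uniform asymptotics of Bessel functions), where $J_n$ attains a value of order $n^{-1/3}$ on an interval of length of order $n^{1/3}$. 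This gives a denominator bounded below by a constant times $n^{-p/3}\cdot n^{1/3}\cdot n = c\, n^{4/3-p/3}$, valid for all $k\ge 1$ since $\alpha_{nk}\ge \alpha_{n1}$ exceeds this neighbourhood for large $n$ (for Neumann/Robin one must check $\alpha_{n1}'$ likewise clears the turning point, which it does). Combining, the ratio is at most $C_p\,n^{2}|J_n(d_n)|^p / (c\,n^{4/3-p/3})$.

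It remains to estimate $J_n(d_n)$ with $d_n=n-n^{2/3}$. This is the main obstacle and the technical heart of the argument: I need that $J_n(n-n^{2/3})$ decays like $2^{-n^{1/3}/3}$ up to polynomial factors. I would get this from the classical Debye-type exponential estimate for $J_n$ below the turning point, $J_n(n\operatorname{sech}\alpha)\le \frac{1}{\sqrt{2\pi n\tanh\alpha}}\,e^{-n(\alpha-\tanh\alpha)}$, or equivalently $J_n(n x)\lesssim n^{-1/2}(\,\text{something}\,)^{n}$ for $0<x<1$; setting $x=1-n^{-1/3}$ and Taylor-expanding the rate function $\phi(x)$ near $x=1$, where $\phi(1-\delta)\sim \tfrac{2\sqrt2}{3}\delta^{3/2}$, yields $n\phi(1-n^{-1/3})\sim \tfrac{2\sqrt2}{3}n^{1/2}\cdot$ — wait, that overshoots; the correct bookkeeping is $n\cdot\delta^{3/2}$ with $\delta=n^{-1/3}$ giving exponent $n^{1/2}$, too strong, so one must instead use the uniform Airy expansion which near the turning point gives $J_n(n-n^{2/3})\asymp n^{-1/3}\operatorname{Ai}(n^{-2/3}\cdot n^{2/3}\cdot c)=n^{-1/3}\operatorname{Ai}(c\cdot)$ — I would therefore use the transitional (Airy) regime: with $d_n/n = 1-n^{-1/3}$, the Olver uniform asymptotic gives $J_n(d_n)\sim (2/n)^{1/3}\operatorname{Ai}\!\big((2)^{1/3}n^{1/3}\big)\cdot(1+O(n^{-2/3}))$ times an algebraic prefactor, and $\operatorname{Ai}(y)\sim \tfrac{1}{2\sqrt\pi}y^{-1/4}e^{-\frac23 y^{3/2}}$ gives $J_n(d_n)\lesssim \mathrm{poly}(n)\,e^{-\frac23 (2)^{1/2} n^{1/2}}$ — still $n^{1/2}$; so the honest conclusion is that the paper's bound with exponent $n^{1/3}$ must come from $|J_n(d_n)|^p$ with $p$ absorbed and a cruder estimate, namely the elementary bound $J_n(z)\le (z/2)^n/n!$ valid for \emph{all} $z>0$: then $J_n(d_n)\le (d_n/2)^n/n! = (d_n/n)^n n^n/(2^n n!) \le (1-n^{-1/3})^n\cdot \frac{1}{2^n}\cdot\frac{n^n}{n!}$, and since $n^n/n!\le e^n$ and $(1-n^{-1/3})^n\le e^{-n^{2/3}}$, one gets $J_n(d_n)\le 2^{-n}e^{n-n^{2/3}}$; this is far smaller than $2^{-n^{1/3}/3}$, so in fact $J_n(d_n)^p\le 2^{-pn}e^{pn}$ and raising to $1/p$... . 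The cleanest path, and the one I would ultimately commit to, is: bound the numerator by $d_n^2\,\max_{[0,d_n]}|J_n|^p = d_n^2 |J_n(d_n)|^p$, use $J_n(d_n)\le (d_n/2)^n/n!$ directly, bound the denominator below by a constant (the $L_p$ norm of a Bessel eigenfunction is bounded below uniformly after accounting for $\alpha_{nk}$ scaling — this needs the elementary fact $\int_0^{\alpha_{nk}}|J_n|^p t\,dt \ge \int_0^{j}|J_n|^p t\,dt$ for a fixed modest $j$, independent of $k$), and then show $d_n^{2/p}|J_n(d_n)| \le C\, n^{1/3+2/(3p)} 2^{-n^{1/3}/3}$ by the crude elementary estimate above, which has enormous room to spare. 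The only real care needed is the uniformity in $k$ of the lower bound on the denominator and the passage to $p=\infty$ (where numerator is $\sup_{D_{nk}}|u|=|J_n(d_n)|$ and denominator is $\sup_D|u|\ge |J_n(\alpha_{nk})\cdot 0|$... rather $\ge$ the first maximum value, bounded below by $c\,n^{-1/3}$), so that the power of $n$ in the statement is exactly $1/3 = 1/3+2/(3\cdot\infty)$.
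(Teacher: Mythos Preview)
Your overall architecture matches the paper's: reduce to radial integrals, bound the numerator by $\max_{[0,d_n]}|J_n|^p$ times $d_n^2/2$, and bound the denominator from below by the contribution of an interval near the first maximum of $J_n$. Your earlier sketch for the denominator (``value $\asymp n^{-1/3}$ on an interval of length $\asymp n^{1/3}$ near $t\approx n$, giving $\gtrsim n^{4/3-p/3}$'') is exactly what the paper does: it integrates over $[n,j'_{n,1}]$, using $J_n(n)>C_1 n^{-1/3}$ and $j'_{n,1}>n+C_2 n^{1/3}$.

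The problems are in the path you finally commit to. First, your ``elementary'' numerator bound is vacuous: $(d_n/2)^n/n!\sim (2\pi n)^{-1/2}(e/2)^n(1-n^{-1/3})^n$, and since $e/2>1$ the factor $(e/2)^n$ dominates $e^{-n^{2/3}}$, so your bound $2^{-n}e^{n-n^{2/3}}\to\infty$. It is not ``far smaller than $2^{-n^{1/3}/3}$''; it blows up. Second, your alternative denominator bound $\int_0^{\alpha_{nk}}|J_n|^p t\,dt\ge \int_0^j |J_n|^p t\,dt$ for a fixed $j$ is not uniform in $n$: for fixed $t$, $J_n(t)\to 0$ as $n\to\infty$, so this lower bound collapses to $0$ and cannot compete with the numerator.

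Your confusion about the Debye/Airy estimates is also worth flagging: getting $J_n(d_n)\lesssim e^{-c n^{1/2}}$ is not ``too strong'' --- a stronger bound trivially implies the weaker one $2^{-n^{1/3}/3}$. The specific exponent $n^{1/3}$ in the theorem is not forced by the truth about $J_n(d_n)$; it comes from the particular lemma the paper quotes (Kr\"oger's inequality $0<J_n(nz)<2^{-n^{\varepsilon}/3}$ for $z\in(0,1-n^{\varepsilon-2/3})$, taken with $\varepsilon=1/3$), which bounds $J_n$ uniformly on all of $(0,d_n)$, not just at the endpoint. Either Kr\"oger's bound or a carefully executed Debye estimate would close the argument; the elementary $(z/2)^n/n!$ bound will not.
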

The theorem shows the existence of infinitely many Laplacian
eigenmodes which are $L_p$-localized near the boundary $\partial D$
(see Appendix \ref{sec:Adisk} for a proof).  In fact, for any
prescribed thresholds for both ratios in (\ref{eq:def_loc}), there
exists $n_0$ such that for all $n > n_0$, the eigenfunctions $u_{nk}$
are $L_p$-localized.  These eigenfunctions are called ``whispering
gallery eigenmodes'' and illustrated on Fig. \ref{fig:whispering}.

A simple consequence of the above theorem is 
\begin{corollary}
\label{theo:disk2}
For any $p\geq 1$ and any open subset $V$ compactly included in $D$
(i.e., $\bar{V}\cap \partial D = \emptyset$), one has
\begin{equation}
\label{eq:disk_local}
\lim\limits_{n \to \infty} \frac{\|u_{nk}\|_{L_p(V)}}{\|u_{nk}\|_{L_p(D)}} = 0.
\end{equation}
As a consequence, 
\begin{equation}
\label{eq:CV}
C_p(V) \equiv \inf\limits_{nk} \left\{ \frac{\|u_{nk}\|_{L_p(V)}}{\|u_{nk}\|_{L_p(\Omega)}} \right\} = 0 .
\end{equation}
\end{corollary}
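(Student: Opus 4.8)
The plan is to reduce the statement for a general compactly included open set $V$ to the already-established Theorem \ref{theo:disk}, whose "small" subdomain $D_{nk}$ is a disk centered at the origin with radius $R\,d_n/\alpha_{nk}$ and $d_n = n - n^{2/3}$. The key observation is that $\alpha_{nk} \to \infty$ as $n\to\infty$ for each fixed $k$ (indeed $\alpha_{nk} \geq \alpha_{n1}$ and the first zero of $J_n$ grows like $n$), and more precisely $d_n/\alpha_{nk} = (n - n^{2/3})/\alpha_{nk}$. Using the classical bound $\alpha_{n1} \geq n$ (the first positive zero of $J_n$, its derivative, or $J_n' + hJ_n$ all lie beyond $n$ for the relevant cases, since $J_n$ has no zero in $(0,n)$), one gets $d_n/\alpha_{nk} \leq (n - n^{2/3})/n = 1 - n^{-1/3} \to 1^-$. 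Hence the radius of $D_{nk}$ tends to $R$ from below, so for any fixed $V$ with $\bar V \subset D$ there is $r_0 < R$ with $V \subset \{|x| < r_0\}$, and for all sufficiently large $n$ we have $R\,d_n/\alpha_{nk} > r_0$, whence $V \subset D_{nk}$.

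First I would make the inclusion $V \subset D_{nk}$ rigorous: since $\bar V$ is compact and disjoint from $\partial D$, set $r_0 = \sup_{x\in V}|x| < R$; then fix $n_0$ so large that $1 - n^{-1/3} > r_0/R$ for all $n \geq n_0$, using $d_n/\alpha_{nk} \leq 1 - n^{-1/3}$ uniformly in $k$. Second, for all such $n$ and all $k$, monotonicity of the $L_p$-norm under restriction gives
\begin{equation*}
\frac{\|u_{nk}\|_{L_p(V)}}{\|u_{nk}\|_{L_p(D)}} \le \frac{\|u_{nk}\|_{L_p(D_{nk})}}{\|u_{nk}\|_{L_p(D)}} < C_p\, n^{\frac{1}{3} + \frac{2}{3p}}\, 2^{-n^{1/3}/3}.
\end{equation*}
Third, the right-hand side tends to $0$ as $n\to\infty$ because the exponential decay $2^{-n^{1/3}/3}$ dominates any power of $n$; this proves \eqref{eq:disk_local}. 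Finally, for \eqref{eq:CV}, note that the infimum over all pairs $(n,k)$ of the nonnegative quantities $\|u_{nk}\|_{L_p(V)}/\|u_{nk}\|_{L_p(D)}$ is bounded above by the $\liminf$ along the subsequence $n\to\infty$ (with $k$ fixed, say $k=1$), which is $0$; since the quantity is nonnegative, the infimum equals $0$.

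The only mildly delicate point — and the one I would treat most carefully — is the uniform-in-$k$ lower bound $\alpha_{nk} \geq n$ (or more modestly $\alpha_{nk} \ge \alpha_{n1} \ge c\,n$ for some constant), needed so that $d_n/\alpha_{nk}$ is uniformly bounded away from values that would let $D_{nk}$ shrink. For the Dirichlet case this is standard ($J_n$ is positive on $(0,n]$, so its first zero exceeds $n$); for Neumann and Robin one uses that $J_n'$ (resp.\ $J_n' + hJ_n$) likewise has no zero in $(0,n)$, which follows from the same monotonicity/positivity properties of $J_n$ and $J_n'$ on that interval. Once this is in hand, everything else is a soft limiting argument, and no further estimate on $u_{nk}$ beyond Theorem \ref{theo:disk} is required.
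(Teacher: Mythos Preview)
Your overall strategy matches the paper's: show $V\subset D_{nk}$ for all large $n$ and then invoke Theorem~\ref{theo:disk}. However, the inequality you use to establish the inclusion points the wrong way. You argue from $\alpha_{nk}\ge n$ that $d_n/\alpha_{nk}\le 1-n^{-1/3}$, and then try to conclude that the radius $R\,d_n/\alpha_{nk}$ eventually exceeds $r_0$. But an \emph{upper} bound on $d_n/\alpha_{nk}$ cannot force $d_n/\alpha_{nk}>r_0/R$; for that you need a \emph{lower} bound on $d_n/\alpha_{nk}$, i.e.\ an \emph{upper} bound on $\alpha_{nk}$. Your sentence ``fix $n_0$ so large that $1-n^{-1/3}>r_0/R$ \ldots\ using $d_n/\alpha_{nk}\le 1-n^{-1/3}$'' combines two facts that together say nothing about the desired inequality.

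The correct input is the upper bound $\alpha_{nk}\le j_{n,k}<n+c_k n^{1/3}$ for large $n$ (Olver's expansion, Lemma~\ref{lem:Olver}, combined with Lemma~\ref{lem:Robin}), which gives
\[
\frac{d_n}{\alpha_{nk}}>\frac{n-n^{2/3}}{n+c_k n^{1/3}}\longrightarrow 1\qquad(n\to\infty,\ k\ \text{fixed}),
\]
or equivalently just cite the part of Theorem~\ref{theo:disk} asserting $\mu_2(D_{nk})/\mu_2(D)\to 1$. Note that this bound is \emph{not} uniform in $k$ (indeed $\alpha_{nk}\to\infty$ as $k\to\infty$ with $n$ fixed, so $D_{nk}$ shrinks to a point), so your attempt to obtain a uniform-in-$k$ inclusion is both impossible and unnecessary: the corollary concerns the limit $n\to\infty$ for each fixed $k$, and $n_0$ is allowed to depend on $k$. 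Once the inclusion is repaired this way, the rest of your argument is fine and coincides with the paper's.
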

In fact, for any open subset $V$ compactly included in $D$, there
exists $n_0$ such that for all $n > n_0$, $V \subset D_{nk}$ so that
$\|u_{nk}\|_{L_p(V)} \leq \|u_{nk}\|_{L_p(D_{nk})}$ yielding
Eq. (\ref{eq:disk_local}).

\begin{figure}
\begin{center}
\includegraphics[width=120mm]{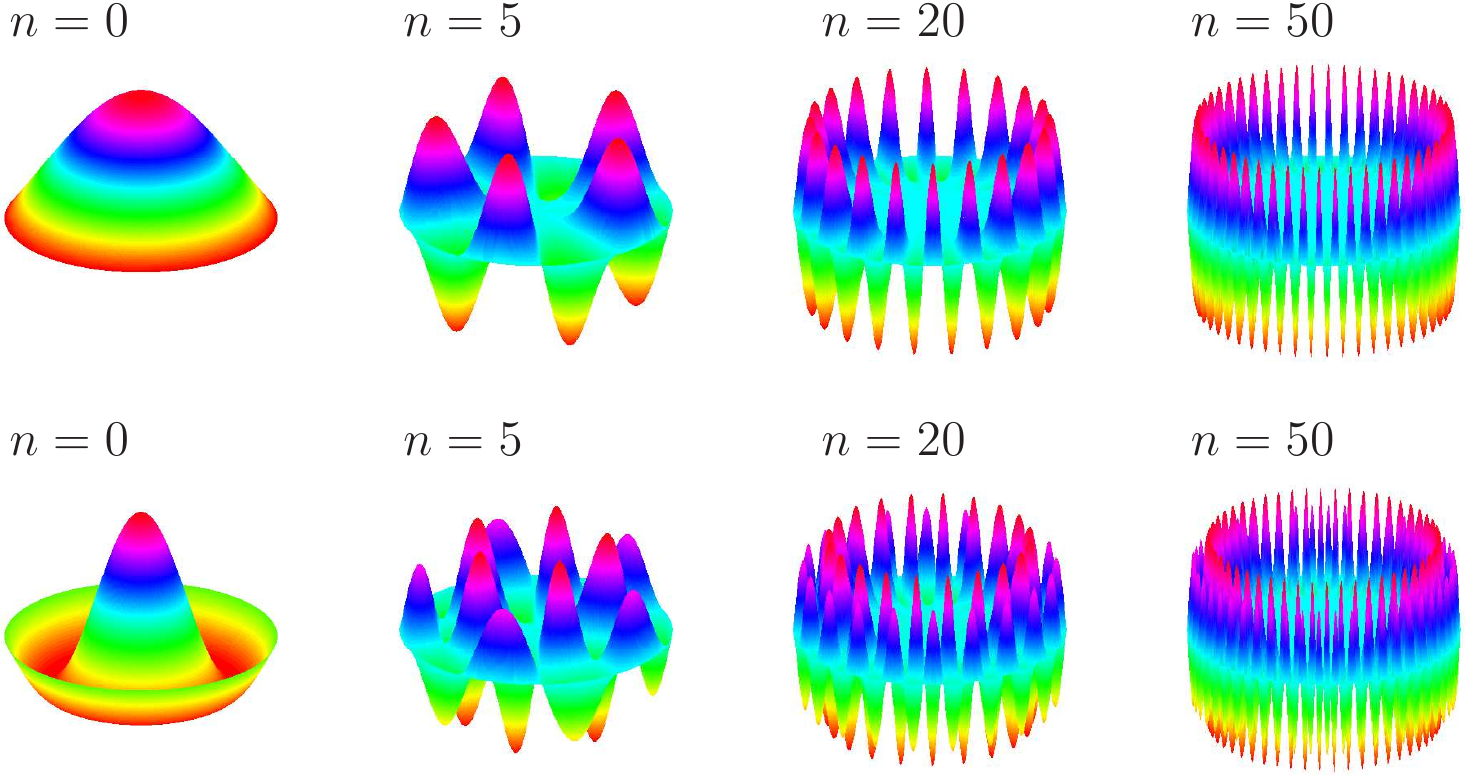}
\end{center}
\caption{
Formation of whispering gallery modes $u_{nk}$ in the unit disk with
Dirichlet boundary condition: for a fixed $k$ ($k = 0$ for top figures
and $k = 1$ for bottom figures), an increase of the index $n$ leads to
stronger localization of the eigenfunction near the boundary. }
\label{fig:whispering}
\end{figure}

In the same way, the localization also happens for any circular
sector.  

In three dimensions, the existence of whispering gallery modes in a
ball follows from the following 
\begin{theorem}
\label{lem:strongsphere1}
\label{theo:ball}
Let $B = \{{x \in \R^3~:~ |x|< R}\}$ be a ball of radius $R$, and
$B_{nk} =\{ x \in B ~:~ 0<|x|< R s_n/\alpha_{nk}\}$, where $s_n = (n +
1/2) - (n + 1/2)^{2/3}$ and $\alpha_{nk}$ are the positive zeros of
$j_n(z)$ (Dirichlet), $j'_n(z)$ (Neumann) or $j'_n(z) + h j_n(z)$ for
some $h > 0$ (Robin), with $n = 0,1,2,...$ denoting the order of the
spherical Bessel function $j_n(z)$ and $k = 1,2,3,...$ counting zeros.
Then, for any $p\geq 1$ (including $p = \infty$), there exists a
universal constant $\tilde{C}_p > 0$ such that for any $k = 1,2,3,...$
and any large enough $n$, the Laplacian eigenfunction $u_{nk}$ with
Dirichler, Neumann or Robin boundary condition satisfies
\begin{equation}
\label{eq:eigen_ball_L2}
\frac{\|u_{nk}\|_{L_p(B_{nk})}}{\|u_{nk}\|_{L_p(B)}} < \tilde{C}_p  (n+1/2)^{\frac{1}{3} + \frac{2}{3p}} 
\exp\left(-\frac{1}{3} \left(n  + \frac{1}{2}\right)^{1/3}\right)    \qquad (n\gg 1).
\end{equation}
As a consequence,
\begin{equation}
\lim\limits_{n \to \infty} \frac{\|u_{nk}\|_{L_p(B_{nk})}}{\|u_{nk}\|_{L_p(B)}} = 0,  
\hskip 10mm  \mathrm{while}  \hskip 5mm   \lim\limits_{n \to \infty} \frac{ \mu_3(B_{nk})}{\mu_3(B)} = 1.
\end{equation}
\end{theorem}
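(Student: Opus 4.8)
\textit{Proof sketch.} The plan is to reduce the three--dimensional estimate to a one--dimensional inequality for ordinary Bessel functions of half--integer order $\nu=n+\tfrac12$, thereby recovering essentially the situation of Theorem~\ref{theo:disk}. First write $u_{nk}(r,\theta,\phi)=j_n(\alpha_{nk}r/R)\,Y(\theta,\phi)$, where $Y$ is the angular part, a fixed linear combination of spherical harmonics of degree $n$. Since the angular factor separates and cancels in the ratio, the substitution $z=\alpha_{nk}r/R$ together with the identity $j_n(z)=\sqrt{\pi/(2z)}\,J_\nu(z)$ gives
\[
\frac{\|u_{nk}\|_{L_p(B_{nk})}}{\|u_{nk}\|_{L_p(B)}}
=\left(\frac{\int_0^{s_n}z^{2-p/2}\,|J_\nu(z)|^p\,dz}{\int_0^{\alpha_{nk}}z^{2-p/2}\,|J_\nu(z)|^p\,dz}\right)^{1/p},
\]
all integrals converging because near $z=0$ the integrand behaves like $z^{2+np}$. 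For $p=\infty$ this is simply the ratio of $\max_{[0,s_n]}|J_\nu|$ to $\max_{[0,\alpha_{nk}]}|J_\nu|$. So it remains to bound the numerator from above and the denominator from below, uniformly in $k$ and in the boundary condition.

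\textit{Numerator.} On $[0,s_n]$ one has $z\le s_n=\nu-\nu^{2/3}<\nu$, so $J_\nu>0$ is increasing and it suffices to control $J_\nu(s_n)$. From the three--term recurrence $J_{\mu-1}(z)+J_{\mu+1}(z)=\tfrac{2\mu}{z}J_\mu(z)$ and the standard inequality $0<J_{\mu+1}(z)<J_\mu(z)$ for $0<z\le\mu$ one obtains $J_\mu(z)<\tfrac{z}{2\mu-z}\,J_{\mu-1}(z)$; telescoping from order $\nu$ down to order $\nu-m$ with $m=\lfloor\nu^{2/3}\rfloor$ (so that $\nu-m\ge s_n$) yields
\[
J_\nu(s_n)\ <\ J_{\nu-m}(s_n)\ \prod_{j=0}^{m-1}\frac{s_n}{2(\nu-j)-s_n}.
\]
At least a third of the $m$ factors are $\le 1-\nu^{-1/3}$, so the product is $\le\exp(-\tfrac13\nu^{1/3})$; and $J_{\nu-m}(s_n)$, being a Bessel function evaluated at or below its own turning point, is bounded by a universal constant (at most $1$). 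Hence $J_\nu(s_n)\le C_0\,e^{-\nu^{1/3}/3}$, and a routine estimate of the increasing integrand (using also $|J_\nu(z)|\le(z/2)^\nu/\Gamma(\nu+1)$ near $z=0$, whose contribution is super--exponentially small) gives $\int_0^{s_n}z^{2-p/2}|J_\nu(z)|^p\,dz\le C_p\,\nu^{3-p/2}\,e^{-p\nu^{1/3}/3}$.

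\textit{Denominator and conclusion.} For large $n$ the smallest of the three first zeros of the relevant radial factor --- the Neumann one --- already exceeds the turning point $z=\nu$, so $\alpha_{nk}\ge\alpha_{n1}>\nu$ for every boundary condition and every $k\ge1$; in particular $[\nu-\nu^{1/3},\nu]\subset[0,\alpha_{nk}]$, and on this interval the classical turning--point lower bound gives $J_\nu(z)\ge c_0\,\nu^{-1/3}$. Since $z\asymp\nu$ there, $\int_0^{\alpha_{nk}}z^{2-p/2}|J_\nu(z)|^p\,dz\ge c_1\,\nu^{1/3}\cdot\nu^{\,2-p/2}\cdot\nu^{-p/3}=c_1\,\nu^{7/3-5p/6}$, uniformly in $k$ and in the boundary condition. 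Dividing, taking the $p$-th root, and simplifying with $s_n\asymp\nu$ and Stirling's formula, the powers of $\nu$ collect into $\nu^{1/3+2/(3p)}$ and the exponential into $\exp(-\tfrac13\nu^{1/3})$, which is \eqref{eq:eigen_ball_L2} with $\nu=n+\tfrac12$ (the case $p=\infty$ being the direct ratio of maxima). The stated limits then follow since $\mu_3(B_{nk})/\mu_3(B)=(s_n/\alpha_{nk})^3$, while for fixed $k$ one has $\alpha_{nk}=\nu+O(\nu^{1/3})$ and $s_n=\nu-\nu^{2/3}$, so this ratio tends to $1$, forcing the norm ratio to $0$.

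\textit{Main obstacle.} The hard part is the denominator estimate: it must hold uniformly in $k$, for all three boundary conditions, and for every $p\in[1,\infty]$, without leaning on fine uniform asymptotics of $J_\nu$ near its turning point. The cleanest way around it is to prove, using only the differential equation $z^2J_\nu''+zJ_\nu'+(z^2-\nu^2)J_\nu=0$ and a Sturm--type comparison, both that the first zero (in each of the three cases) exceeds $\nu$ and that $J_\nu$ stays above a fixed multiple of $\nu^{-1/3}$ on an interval of length proportional to $\nu^{1/3}$ ending at $z=\nu$; everything else is bookkeeping with Stirling's formula and the elementary Bessel recurrences.
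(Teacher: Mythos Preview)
Your route differs from the paper's in two places. For the numerator the paper invokes Kapteyn's inequality to bound $j_n(z)$ directly on $(0,s_n)$, obtaining
\[
|j_n(z)|\le \sqrt{\tfrac{\pi}{2\nu}}\,\exp\!\left(\tfrac{2}{3}-\tfrac13\nu^{1/3}\right),
\]
which in terms of $J_\nu$ carries an extra factor $\sqrt{z/\nu}$. For the denominator the paper does not go below the turning point at all: it takes the annulus corresponding to $z\in[\nu,\gamma'_{n,1}]$, uses that $j_n$ is increasing there, and plugs in the two classical facts $J_\nu(\nu)>C_1\nu^{-1/3}$ and $\gamma'_{n,1}>\nu+\tilde C_2\nu^{1/3}$. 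This avoids your ``main obstacle'' entirely---no Sturm comparison on $[\nu-\nu^{1/3},\nu]$ is needed, only a single cited pointwise asymptotic and monotonicity up to the first extremum.

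There is one genuine gap in your numerator step. Your telescoping gives a bound $J_\nu(z)\le C_0\,e^{-\nu^{1/3}/3}$ that is \emph{uniform in $z$} on $[0,s_n]$; feeding this into $\int_0^{s_n}z^{2-p/2}|J_\nu|^p\,dz$ produces the claimed $C_p\,\nu^{3-p/2}e^{-p\nu^{1/3}/3}$ only when $p<6$, because for $p\ge 6$ the bare weight $z^{2-p/2}$ is not integrable at $0$. Your fix---using $J_\nu(z)\le (z/2)^\nu/\Gamma(\nu+1)$ near $0$---does make the integral finite, but it only yields a bound of order $e^{-p\nu^{1/3}/3}$ (a constant replacing $\nu^{3-p/2}$), which after division by the denominator gives the prefactor $\nu^{5/6-7/(3p)}$ rather than the stated $\nu^{1/3+2/(3p)}$. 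The paper's Kapteyn bound escapes this precisely because it is equivalent to $|J_\nu(z)|\lesssim \sqrt{z/\nu}\,e^{-\nu^{1/3}/3}$, so the integrand always reduces to $z^2$ times a uniform constant and the integral is $\asymp\nu^{3-p/2}$ for every $p$. If you want to keep your recurrence argument, you need to extract a comparable $\sqrt{z}$-factor from it, or simply switch to the weight $z^2$ and bound $j_n$ instead of $J_\nu$. (A minor related slip: for $p=\infty$ the ratio is $\max_{[0,s_n]}|j_n|\big/\max_{[0,\alpha_{nk}]}|j_n|$, not the same ratio for $J_\nu$; the two differ by bounded factors here, so it does not affect the conclusion, but the reduction as written is not literally correct.)
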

As for the disk, the above results show that infinitely many
high-frequency eigenfunctions are $L_p$-localized near the boundary of
the ball (see Appendix \ref{sec:Aball} for a proof).

\subsection{Focusing modes}

The localization of focusing modes at the origin is described by  
\begin{theorem}
\label{theo:focusingmodes}
For each $R \in (0,1)$, let $D(R) = \{ x\in\R^2 ~:~ R < |x| < 1 \}$,
and $D$ be the unit disk.  Then, for any $n = 0,1,2,...$, the
Laplacian eigenfunction $u_{nk}$ with Dirichlet, Neumann or Robin
boundary condition satisfies
\begin{equation}
\label{eq:focusingNorm}
\lim\limits_{k\to \infty} \frac{\|u_{nk}\|_{L_p(D(R))} }{ \|u_{nk}\|_{L_p(D) }} = \begin{cases} (1 - R^{2-p/2})^{1/p} \quad (1\leq p < 4), \cr
\hskip 10mm  0 \hskip 22mm (p > 4). \end{cases} 
\end{equation}
\end{theorem}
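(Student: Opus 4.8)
The plan is to reduce the statement to a pointwise asymptotic analysis of the Bessel function $J_n(\alpha_{nk} r/R)$ as $k\to\infty$ with $n$ fixed, and then integrate. First I would normalize to the unit disk $R=1$ and note that, since the angular factor $\cos(n\varphi)$ or $\sin(n\varphi)$ contributes the same constant factor to both $\|u_{nk}\|_{L_p(D(R))}$ and $\|u_{nk}\|_{L_p(D)}$, the ratio reduces to a ratio of one-dimensional integrals:
\begin{equation}
\label{eq:focus_reduction}
\frac{\|u_{nk}\|_{L_p(D(R))}^p}{\|u_{nk}\|_{L_p(D)}^p} = \frac{\int_R^1 |J_n(\alpha_{nk} r)|^p \, r\, dr}{\int_0^1 |J_n(\alpha_{nk} r)|^p \, r\, dr}.
\end{equation}
The key input is the large-argument asymptotics $J_n(z) \sim \sqrt{2/(\pi z)}\,\cos(z - n\pi/2 - \pi/4)$, valid uniformly for $z$ away from the origin. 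Since $\alpha_{nk}\to\infty$ as $k\to\infty$, for any fixed $\delta>0$ the substitution $z=\alpha_{nk} r$ gives, on the range $r\in[\delta,1]$,
\begin{equation}
\label{eq:focus_asymp}
|J_n(\alpha_{nk} r)|^p \, r \;\approx\; \left(\frac{2}{\pi \alpha_{nk}}\right)^{p/2} r^{1-p/2} \,\bigl|\cos(\alpha_{nk} r - n\pi/2 - \pi/4)\bigr|^p.
\end{equation}

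The main step is to evaluate the two integrals in \eqref{eq:focus_reduction} using \eqref{eq:focus_asymp}. The rapidly oscillating factor $|\cos(\alpha_{nk} r - c)|^p$ averages, by a Riemann–Lebesgue-type argument, to the constant $\gamma_p := \frac{1}{\pi}\int_0^\pi |\cos t|^p\,dt$ against any fixed integrable weight, so the numerator behaves like $\gamma_p (2/(\pi\alpha_{nk}))^{p/2}\int_R^1 r^{1-p/2}\,dr$. For the denominator one must be careful near $r=0$: split $\int_0^1 = \int_0^\delta + \int_\delta^1$. On $[\delta,1]$ the same averaging applies, giving $\gamma_p(2/(\pi\alpha_{nk}))^{p/2}\int_\delta^1 r^{1-p/2}\,dr$. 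On $[0,\delta]$ I would bound $|J_n(\alpha_{nk} r)|\le C (\alpha_{nk} r)^n$ near the origin and, more usefully, use the uniform bound $|J_n(z)|\le C z^{-1/2}$ for all $z\ge 1$ together with $|J_n(z)|\le 1$ for $z\le 1$; this shows the contribution of $[0,\delta]$ to the denominator is $O(\alpha_{nk}^{-p/2})$ with a constant that is uniformly bounded in $\delta$, and in fact, letting $\delta\to 0$ after $k\to\infty$, one recovers $\int_0^1 r^{1-p/2}\,dr$ as the effective weight provided that integral converges, i.e. provided $1-p/2>-1$, i.e. $p<4$. In that regime,
\begin{equation}
\label{eq:focus_limit}
\lim_{k\to\infty} \frac{\|u_{nk}\|_{L_p(D(R))}^p}{\|u_{nk}\|_{L_p(D)}^p} = \frac{\int_R^1 r^{1-p/2}\,dr}{\int_0^1 r^{1-p/2}\,dr} = 1 - R^{2-p/2},
\end{equation}
which gives the first case after taking $p$-th roots. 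For $p>4$ the weight $r^{1-p/2}$ is non-integrable at the origin, so the denominator is dominated by the neighborhood of $r=0$: there $J_n(\alpha_{nk} r)$ attains values of order $1$ (its global maximum) on a set whose $r$-measure is of order $1/\alpha_{nk}$, contributing $\gtrsim \alpha_{nk}^{-1}$ to $\int_0^1 |J_n|^p r\,dr$, whereas the numerator is $O(\alpha_{nk}^{-p/2})=o(\alpha_{nk}^{-1})$ since $p>4$ forces $p/2>2>1$. Hence the ratio tends to $0$. The case $p=\infty$ is immediate: $\|u_{nk}\|_{L_\infty(D)}$ is attained near the origin by the first hump of $J_n$, while on $D(R)$ the sup is $\le C\alpha_{nk}^{-1/2}\to 0$, and the denominator stays bounded below.

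The hard part will be making the averaging of $|\cos(\alpha_{nk} r - c)|^p$ against the weight $r^{1-p/2}$ rigorous uniformly down to (but not including) $r=0$, and controlling the error term in the uniform Bessel asymptotics near the transition region $z\asymp n$. I would handle the first by a standard argument: on $[\delta,1]$ write $|\cos(\alpha_{nk} r-c)|^p = \gamma_p + g(\alpha_{nk} r - c)$ with $g$ continuous, periodic and mean zero, so $\int_\delta^1 g(\alpha_{nk} r-c)\,w(r)\,dr\to 0$ for any $w\in L^1(\delta,1)$ by integrating the Fourier series of $g$ term by term (or a single integration by parts). For the second I would not need the full uniform (Debye) expansion: since $n$ is \emph{fixed}, the classical large-argument expansion with explicit $O(z^{-3/2})$ remainder suffices once $\alpha_{nk}\delta$ is large, and the region $z\le \alpha_{nk}\delta$ is absorbed into the $\delta\to 0$ limit as above. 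Everything is uniform in $n$ only trivially (the statement is for each fixed $n$), so no delicate $n$-dependence tracking is required.
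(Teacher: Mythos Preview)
Your approach coincides with the paper's: reduce to the radial integral ratio via separation of variables, feed in the large-argument asymptotic $J_n(z)\sim\sqrt{2/(\pi z)}\cos(z-n\pi/2-\pi/4)$, and treat the ranges $p=\infty$, $p>4$, $1\le p<4$ separately.

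Two remarks. First, in your $p>4$ heuristic for the denominator lower bound you drop the Jacobian factor $r$: near the first hump of $J_n$ one has $r\sim j'_{n,1}/\alpha_{nk}$, so the contribution to $\int_0^1|J_n(\alpha_{nk}r)|^p\,r\,dr$ is of order $\alpha_{nk}^{-2}$, not $\alpha_{nk}^{-1}$. The conclusion survives because the numerator is $O(\alpha_{nk}^{-p/2})$ and the correct comparison is precisely $p/2>2$, i.e.\ $p>4$ (which you do write, though you then compare against $\alpha_{nk}^{-1}$). The paper sidesteps this bookkeeping by substituting $z=\alpha_{nk}r$ first and bounding $\int_0^{\alpha_{nk}} z|J_n(z)|^p\,dz\ge\int_0^1 z|J_n(z)|^p\,dz$, a fixed positive constant.

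Second, for $1\le p<4$ your argument is actually \emph{more} complete than the paper's. The paper writes the ratio as $1-f_{p,n}(\alpha_{nk}R)/f_{p,n}(\alpha_{nk})$ with $f_{p,n}(z)=\int_0^z r|J_n(r)|^p\,dr$ and then invokes the asymptotic $f_{p,n}(z)\sim c_{p,n}z^{2-p/2}$; but it relegates this asymptotic to a Remark whose ``rigorous proof \ldots\ is beyond the scope of the paper,'' citing the slowly decaying oscillations of $f_{p,n}(z)/z^{2-p/2}$ as the difficulty. Your Riemann--Lebesgue averaging of $|\cos(\alpha_{nk}r-c)|^p$ against the weight $r^{1-p/2}$ on $[\delta,1]$, combined with the $\delta\to0$ limit after $k\to\infty$, is exactly a route to that missing asymptotic, and is standard once one controls the $O(z^{-3/2})$ remainder in the Bessel expansion (easy here since $n$ is fixed). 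So on this range your sketch goes further than the paper itself does.
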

The theorem states that for each non-negative integer $n$, when the
index $k$ increases, the eigenfunctions $u_{nk}$ become more and more
$L_p$-localized near the origin when $p > 4$ (see Appendix
\ref{sec:Adisk} for a proof).  These eigenfunctions are called
``focusing eigenmodes'' and illustrated on Fig. \ref{fig:focusing}.
The theorem shows that the definition of localization is sensitive to
the norm: the focusing modes are $L_p$-localized for $p>4$ (including
$p=\infty$), but they are not $L_p$-localized for $p<4$.  In fact, as
the amplitude of oscillations of the focusing modes exhibits a power
law decay from the origin towards the boundary (see
Fig. \ref{fig:focusing}), the $p$ values control the behavior of the
$L_p$-norm and determine whether the ratio of these norms vanishes or
not in the limit $k\to\infty$.

\begin{figure}
\begin{center}
\includegraphics[width=120mm]{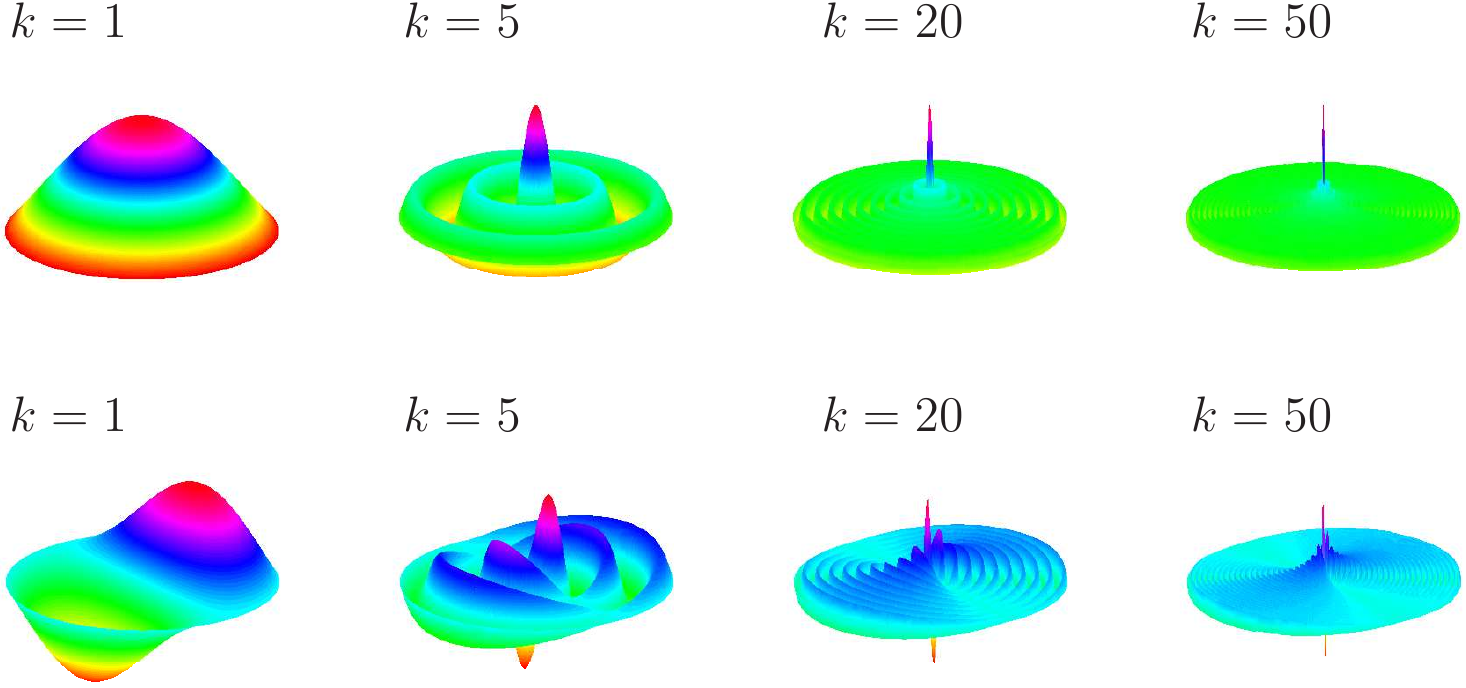}
\end{center}
\caption{
Formation of focusing modes $u_{nk}$ in the unit disk with Dirichlet
boundary condition: for a fixed $n$ ($n = 0$ for top figures and $n =
1$ for bottom figures), an increase of the index $k$ leads to stronger
localization of the eigenfunction at the origin. }
\label{fig:focusing}
\end{figure}

A similar theorem can be reformulated for a ball in three dimensions.
\begin{theorem}
\label{theo:focusingmodes_3d}
For each $R \in (0,1)$, let $B(R) = \{ x\in\R^3 ~:~ R < |x| < 1 \}$,
and $B$ be the unit ball.  Then, for any $n = 0,1,2,...$, the
Laplacian eigenfunction $u_{nk}$ with Dirichlet, Neumann or Robin
boundary condition satisfies
\begin{equation}
\label{eq:focusingNorm_3d}
\lim\limits_{k\to \infty} \frac{\|u_{nk}\|_{L_p(B(R))} }{ \|u_{nk}\|_{L_p(B)}} = \begin{cases} (1 - R^{3-p})^{1/p} \quad (1\leq p < 3), \cr
\hskip 8mm  0 \hskip 22mm (p > 3). \end{cases} 
\end{equation}
\end{theorem}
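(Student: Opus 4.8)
The proof is the three-dimensional counterpart of the argument for Theorem~\ref{theo:focusingmodes}: one replaces the planar area element $r\,dr$ by the volume element $r^{2}\,dr$ and the Bessel function $J_{n}$ by the spherical Bessel function $j_{n}(z)=\sqrt{\pi/(2z)}\,J_{n+1/2}(z)$, whose positive zeros are the numbers $\alpha_{nk}$ of Theorem~\ref{theo:ball}. Since the angular (spherical harmonic) factor of $u_{nk}$ is common to numerator and denominator and cancels in the ratio, it suffices, for $1\le p<\infty$, to study
\begin{equation}
\label{eq:radial_reduction_3d}
\left(\frac{\|u_{nk}\|_{L_{p}(B(R))}}{\|u_{nk}\|_{L_{p}(B)}}\right)^{\!p}
=\frac{\int_{R}^{1}|j_{n}(\alpha_{nk}r)|^{p}\,r^{2}\,dr}{\int_{0}^{1}|j_{n}(\alpha_{nk}r)|^{p}\,r^{2}\,dr},
\end{equation}
and the analogous quotient of suprema when $p=\infty$, in the regime where $n$ is fixed and $\alpha:=\alpha_{nk}\to\infty$ as $k\to\infty$.

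For fixed $n$ I would use two elementary properties of $j_{n}$: it is globally bounded, with $M_{n}:=\sup_{z\ge 0}|j_{n}(z)|\in(0,\infty)$ attained at a fixed finite point, and it obeys the classical large-argument expansion $j_{n}(z)=z^{-1}\sin(z-n\pi/2)+O(z^{-2})$ together with $|j_{n}(z)|\le C_{n}/z$ for $z\ge 1$. Inserting this expansion into the integrand of (\ref{eq:radial_reduction_3d}), estimating the cross terms by $\big||x+y|^{p}-|x|^{p}\big|\le p\,|y|\,(|x|+|y|)^{p-1}$ to cover non-integer $p$, and treating the remaining rapidly oscillating integral $\int_{a}^{b}|\sin(\alpha r-n\pi/2)|^{p}\,r^{2-p}\,dr$ by splitting $|\sin(\cdot)|^{p}$ into its mean and a $\pi$-periodic zero-mean remainder (which, after one integration by parts, contributes $O(1/\alpha)$), one obtains for every fixed $0<a<b\le 1$
\begin{equation}
\label{eq:bessel_averaging_3d}
\int_{a}^{b}|j_{n}(\alpha r)|^{p}\,r^{2}\,dr=\frac{c_{p}}{\alpha^{p}}\int_{a}^{b}r^{2-p}\,dr\,\bigl(1+o(1)\bigr)\quad(\alpha\to\infty),\qquad c_{p}=\frac{1}{\pi}\int_{0}^{\pi}|\sin t|^{p}\,dt .
\end{equation}

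With (\ref{eq:bessel_averaging_3d}) in hand I would distinguish the sign of $p-3$. If $1\le p<3$, the numerator of (\ref{eq:radial_reduction_3d}) equals $\frac{c_{p}}{\alpha^{p}}\int_{R}^{1}r^{2-p}\,dr\,(1+o(1))$; for the denominator, write $\int_{0}^{1}=\int_{0}^{\varepsilon}+\int_{\varepsilon}^{1}$, estimate $\int_{\varepsilon}^{1}$ by (\ref{eq:bessel_averaging_3d}), and use the exact rescaling $\int_{0}^{\varepsilon}|j_{n}(\alpha r)|^{p}r^{2}\,dr=\alpha^{-3}\int_{0}^{\alpha\varepsilon}|j_{n}(s)|^{p}s^{2}\,ds$, noting that $|j_{n}(s)|^{p}s^{2}\sim c_{p}s^{2-p}$ is \emph{not} integrable at infinity when $p<3$, so this piece is $\frac{c_{p}}{\alpha^{p}}\int_{0}^{\varepsilon}r^{2-p}\,dr\,(1+o(1))$; adding the pieces gives $\frac{c_{p}}{\alpha^{p}}\int_{0}^{1}r^{2-p}\,dr\,(1+o(1))$, so (\ref{eq:radial_reduction_3d}) tends to $\bigl(\int_{R}^{1}r^{2-p}\,dr\bigr)\big/\bigl(\int_{0}^{1}r^{2-p}\,dr\bigr)=1-R^{3-p}$, which is the first line of (\ref{eq:focusingNorm_3d}). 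If $p>3$, the numerator is still a fixed positive constant times $\alpha^{-p}$, while now $I_{n,p}:=\int_{0}^{\infty}|j_{n}(s)|^{p}s^{2}\,ds<\infty$ (convergent at $0$ since $j_{n}(s)=O(s^{n})$, and at $\infty$ since $s^{2-p}$ is integrable there), so the denominator equals $\alpha^{-3}\int_{0}^{\alpha}|j_{n}(s)|^{p}s^{2}\,ds=\alpha^{-3}I_{n,p}(1+o(1))$; hence (\ref{eq:radial_reduction_3d}) is of order $\alpha^{3-p}\to 0$, the second line. The case $p=\infty$ is immediate from the bounds above, since $\sup_{R\le r\le 1}|j_{n}(\alpha r)|\le C_{n}/(\alpha R)\to 0$ whereas $\sup_{0\le r\le 1}|j_{n}(\alpha r)|\to M_{n}>0$.

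The only genuinely delicate point is the averaging estimate (\ref{eq:bessel_averaging_3d}) together with the treatment of the origin in the denominator: one must show that, down to the natural scale $r\sim 1/\alpha$, the oscillatory factor may be replaced by its mean against the weight $r^{2-p}$ — which becomes singular at $0$ as $p\uparrow 3$ — and one must keep track of the sub-leading $O(z^{-2})$ term of the Bessel asymptotics. The substitution $s=\alpha r$ cleanly isolates the two possible behaviours of the origin contribution (growth like $\alpha^{3-p}$ for $p<3$ versus convergence to the finite constant $I_{n,p}$ for $p>3$), and this is precisely what produces the dichotomy in (\ref{eq:focusingNorm_3d}). Once these estimates are in place, the remaining bookkeeping is the same as in the proof of Theorem~\ref{theo:focusingmodes}.
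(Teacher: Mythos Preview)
Your argument is correct and, in the range $1\le p<3$, is actually more complete than the paper's.  For $p=\infty$ and $p>3$ the two proofs coincide: both use the envelope bound $|j_n(z)|\le C_n/z$ for large $z$, together with the fact that $\sup_{z\ge 0}|j_n(z)|$ is attained at a fixed point (Lemma~\ref{lem:extrema_3d}).  The difference is in the case $1\le p<3$.  The paper rewrites the ratio as $1-\tilde f_{p,n}(\alpha_{nk}R)/\tilde f_{p,n}(\alpha_{nk})$ with $\tilde f_{p,n}(z)=\int_0^z |j_n(r)|^p r^2\,dr$ and then \emph{assumes} the asymptotic $\tilde f_{p,n}(z)\sim \tilde c_{p,n}\,z^{3-p}$, explicitly deferring its proof (Remark~\ref{theo:fpt}); so the paper's argument in this range is conditional.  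Your device---writing $|\sin|^{p}=c_p+(\text{zero-mean }\pi\text{-periodic})$ and integrating by parts once against the weight $r^{2-p}$---is exactly what is needed to establish that asymptotic, and hence closes the gap the paper leaves open.  One minor point: for the $[0,\varepsilon]$ contribution to the denominator you claim the two-sided asymptotic $\frac{c_p}{\alpha^p}\int_0^\varepsilon r^{2-p}\,dr\,(1+o(1))$, but the phrase ``$|j_n(s)|^p s^2\sim c_p s^{2-p}$'' is not a pointwise statement and would again require averaging.  In fact only the crude upper bound $\int_0^\varepsilon |j_n(\alpha r)|^p r^2\,dr\le C\alpha^{-p}\varepsilon^{3-p}$ is needed there (split at $r=1/\alpha$, use boundedness of $j_n$ on the inner piece and $|j_n(s)|\le C_n/s$ on the outer), after which letting $\varepsilon\to 0$ squeezes the limit of the ratio to $1-R^{3-p}$ without ever invoking the averaging down to the origin.
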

(see Appendix \ref{sec:Aball} for a proof).

\section{Localization in elliptical domains}
\label{sec:elliptical}

\subsection{Eigenfunctions for elliptical domains}

It is convenient to introduce the elliptic coordinates as
\begin{equation}
\left\{ \begin{array}{ccc}
x_1 &=& a \cosh r \cos \theta , \\
x_2 &=& a \sinh r \sin \theta , \\
\end{array} \right.
\end{equation} 
where $a > 0$ is the prescribed distance between the origin and the
foci, $r \geq 0$ and $0\le \theta < 2\pi$ are the radial and angular
coordinates.  A filled ellipse is a domain with $r < R$ so that its
points $(x_1,x_2)$ satisfy $x_1^2/A^2 + x_2^2/B^2 < 1$, where $R$ is
the elliptic radius and $A = a \cosh R$ and $B = a \sinh R$ are the
major and minor semi-axes.  In the elliptic coordinates, the
separation of the angular and radial variables leads to Mathieu and
modified Mathieu equations, respectively \cite{Chen94,Goldberg90}.
Periodic solutions of the Mathieu equation are possible for specific
characteristic values $c$.  They are denoted as $\ce_n(\theta,q)$ and
$\se_{n+1}(\theta,q)$ (with $n = 0,1,2,...$) and called the angular
Mathieu functions of the first and second kind.  Each function
$\ce_n(\theta,q)$ and $\se_{n+1}(\theta,q)$ corresponds to its own
characteristic value $c$ (the relation being implicit, see
\cite{Mclachlan47}).

For the radial part, there are two linearly independent solutions for
each characteristic value $c$: two modified Mathieu functions
$\Mc_n^{(1)}(r,q)$ and $\Mc_n^{(2)}(r,q)$ correspond to the same $c$
as $\ce_n(\theta,q)$, and two modified Mathieu functions
$\Ms_{n+1}^{(1)}(r,q)$ and $\Ms_{n+1}^{(2)}(r,q)$ correspond to the
same $c$ as $\se_{n+1}(\theta,q)$.  As a consequence, there are four
families of eigenfunctions (distinguished by the index $i=1,2,3,4$) in
a filled ellipse:
\begin{equation}
\begin{split}
u_{nk1} &= \ce_n(\theta,q_{nk1}) \Mc_n^{(1)}(r,q_{nk1}), \\
u_{nk2} &= \ce_n(\theta,q_{nk2}) \Mc_n^{(2)}(r,q_{nk2}), \\
u_{nk3} &= \se_{n+1}(\theta,q_{nk3}) \Ms_{n+1}^{(1)}(r,q_{nk3}), \\ 
u_{nk4} &= \se_{n+1}(\theta,q_{nk4}) \Ms_{n+1}^{(2)}(r,q_{nk4}), \\
\end{split}
\end{equation}
where the parameters $q_{nki}$ are determined by the boundary
condition.  For instance, for a filled ellipse of radius $R$ with
Dirichlet boundary condition, there are four individual equations for
the parameter $q_{nki}$, for each $n = 0,1,2,...$:
\begin{equation}
\label{eq:solveboundaryellipse}
\begin{split}
\Mc_n^{(1)}(R,q_{nk1}) = 0 &,  \quad   \Mc_n^{(2)}(R,q_{nk2}) = 0 , \\
\Ms_{n+1}^{(1)}(R,q_{nk3}) = 0 &,  \quad   \Ms_{n+1}^{(2)}(R,q_{nk4}) = 0 , \\
\end{split}
\end{equation} 
each of them having infinitely many positive solutions $q_{nki}$
enumerated by $k=1,2,\dots$ \cite{Abramowitz,Mclachlan47}.  The
associated eigenvalues $\lambda_{nki}$ are determined as
\begin{equation}
\label{eq:lambda}
\lambda_{nki} = \frac{4q_{nki}}{a^2} .
\end{equation}

The above analysis can be applied almost directly to an elliptical
annulus $\Omega$, i.e. a domain between an inner ellipse $\Gamma_1$
and an outer ellipse $\Gamma_2$, with the same foci.  In elliptic
coordinates, $\Omega$ can be defined by two inequalities: $R_1 < r <
R_2$ and $0\leq \theta < 2\pi$, where the prescribed radii $R_1$ and
$R_2$ determine $\Gamma_1$ and $\Gamma_2$, respectively.

We consider two families of eigenfunctions in $\Omega$:
\begin{equation}
\begin{split}
u_{nk1} &= \ce_n(\theta,q_{nk1})\left[{a_{nk1}\Mc_n^{(1)}(r,q_{nk1}) + b_{nk1} \Mc_n^{(2)}(r,q_{nk1})}\right] ,\\
u_{nk2} &= \se_{n+1}(\theta,q_{nk2})\left[{a_{nk2}\Ms_{n+1}^{(1)}(r,q_{nk2}) + b_{nk2} \Ms_{n+1}^{(2)}(r,q_{nk2})}\right] .\\
\end{split}
\end{equation}
The parameters $a_{nki}$, $b_{nki}$ and $q_{nki}$ ($i=1,2$) are set by
boundary conditions and the normalization of eigenfunctions.  For
Dirichlet boudary condition, one solves the following equations:
\begin{equation} 
\label{eq:annuliBound}
\begin{split}
\Mc_n^{(1)}(R_1,q_{nk1}) \Mc_n^{(2)}(R_2,q_{nk1}) - \Mc_n^{(1)}(R_2,q_{nk1}) \Mc_n^{(2)}(R_1,q_{nk1}) &= 0, \\
\Ms_{n+1}^{(1)}(R_1,q_{nk2}) \Ms_{n+1}^{(2)}(R_2,q_{nk2}) - \Ms_{n+1}^{(1)}(R_2,q_{nk2}) \Ms_{n+1}^{(2)}(R_1,q_{nk2}) &= 0. \\
\end{split}
\end{equation}
For $n = 0,1,2,...$, each of these equations has infinitely many
solutions $q_{nki}$ enumerated by $k = 1,2,3,...$ \cite{Mclachlan47}.
The eigenvalues are determined by Eq. (\ref{eq:lambda}).

\newpage
\subsection{Bouncing ball modes}
\label{sec:bouncing}

For each $\alpha \in \left(0,\frac{\pi}{2}\right)$, we consider the
elliptical sector $\Omega_\alpha$ inside an elliptical domain $\Omega$:
\begin{equation*}
\Omega_\alpha =  \left\{ R_1 < r < R_2,~ \theta \in (\alpha,\pi-\alpha) \cup (\pi+\alpha,2\pi-\alpha)\right\}.
\end{equation*}
\begin{theorem}
\label{theo:localizationEllipticalAnnuliMathieuEigenf}
Let $\Omega$ be a filled ellipse or an elliptical annulus (with a
focal distance $a > 0$).  For any $\alpha \in
\left(0,\frac{\pi}{2}\right)$, $p \geq 1$ and $i=1,2,3,4$ (for
filled ellipse) or $i=1,2$ (for elliptical annulus), there exists
$\Lambda_\alpha >0 $ such that for any $\lambda_{nki} >
\Lambda_\alpha$
\begin{equation}
\label{eq:bound_elliptical}
\frac{\left\|u_{nki}\right\|_{L_p(\Omega\setminus \Omega_\alpha)}}{\left\|u_{nki}\right\|_{L_p(\Omega)}} < 
D_n \left(\frac{16\alpha}{\pi-\alpha/2}\right)^{1/p}
\exp\left(-a\sqrt{\lambda_{nki}} \left[\sin\left(\frac{\pi}{4}+\frac{\alpha}{2}\right) - \sin\alpha\right]\right),
\end{equation}
where
\begin{equation}
D_n = 3\sqrt{\frac{1+\sin\left({\frac{3\pi}{8}+\frac{\alpha}{4}}\right)}
{\bigl[\tan\left({\frac{\pi}{16}-\frac{\alpha}{8}}\right)\bigr]^n}} .
\end{equation}
\end{theorem}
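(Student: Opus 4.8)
The plan is to exploit the separation of variables. Each eigenfunction is a genuine product $u_{nki}(r,\theta)=\mathcal{R}_{nki}(r)\,\Theta_{nki}(\theta)$, where $\Theta_{nki}$ is the angular Mathieu function $\ce_n(\cdot,q_{nki})$ (for $i=1,2$) or $\se_{n+1}(\cdot,q_{nki})$ (for $i=3,4$), $\mathcal{R}_{nki}$ is the associated radial (combination of) modified Mathieu function(s), and $q:=q_{nki}=a^2\lambda_{nki}/4$. In elliptic coordinates the area element is $a^2(\sinh^2 r+\sin^2\theta)\,dr\,d\theta$; I sandwich it between the $\theta$-independent quantities $\sinh^2 r\le\sinh^2 r+\sin^2\theta\le\cosh^2 r$, and I observe that on any angular sector where $\sin^2\theta\ge\tfrac12$ one also has $\sinh^2 r+\sin^2\theta\ge\tfrac12\cosh^2 r$. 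Inside $\Omega_\alpha$ I keep only the narrower sector $S$ with $\theta\in(\tfrac{\pi}{4}+\tfrac{\alpha}{2},\tfrac{3\pi}{4}-\tfrac{\alpha}{2})\cup(\tfrac{5\pi}{4}+\tfrac{\alpha}{2},\tfrac{7\pi}{4}-\tfrac{\alpha}{2})$, on which $\sin\theta\ge\sin(\tfrac{\pi}{4}+\tfrac{\alpha}{2})>\tfrac{1}{\sqrt2}$, and bound $\|u_{nki}\|_{L_p(\Omega)}^p\ge\|u_{nki}\|_{L_p(S)}^p$. The radial integrals then cancel and the problem collapses to the one-dimensional estimate
\[
\frac{\|u_{nki}\|_{L_p(\Omega\setminus\Omega_\alpha)}^p}{\|u_{nki}\|_{L_p(\Omega)}^p}\;\le\;2\,\frac{\displaystyle\int_{(-\alpha,\alpha)\cup(\pi-\alpha,\pi+\alpha)}|\Theta_{nki}(\theta)|^p\,d\theta}{\displaystyle\int_{S_\theta}|\Theta_{nki}(\theta)|^p\,d\theta},
\]
with $S_\theta$ the angular projection of $S$.

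Next I quantify the exponential concentration of $\ce_n(\cdot,q)$ and $\se_{n+1}(\cdot,q)$ near $\theta=\pm\tfrac{\pi}{2}$ for large $q$. Writing $2q\cos 2\theta=2q-4q\sin^2\theta$, the Mathieu equation reads $\Theta''=V(\theta)\,\Theta$ with $V(\theta)=2q-c-4q\sin^2\theta=4q\cos^2\theta-(2q+c)$, $c$ the relevant characteristic value. Using the standard large-$q$ asymptotics $c\sim-2q+2(2n+1)\sqrt q$ one has $2q+c>0$, so $V>0$ (classically forbidden) precisely outside a neighbourhood of $\tfrac{\pi}{2}$ whose radius $\to 0$ as $q\to\infty$, with turning point $\theta_t\to\tfrac{\pi}{2}$. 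On the forbidden region $\Theta$ (taken positive) is increasing and convex, so $\rho=\Theta'/\Theta>0$ satisfies the Riccati equation $\rho'=V-\rho^2$; comparing $\rho$ with the decreasing function $\sqrt V$ shows $\rho\ge(1-\ve)\sqrt V$ on any fixed sub-interval bounded away from $\tfrac{\pi}{2}$ once $q$ is large. Integrating and using $\sqrt{4q\cos^2\theta-(2q+c)}\ge 2\sqrt q\cos\theta-\tfrac{2q+c}{4\sqrt q\cos\theta}$ gives
\[
\log\frac{\Theta(\theta_2)}{\Theta(\theta_1)}=\int_{\theta_1}^{\theta_2}\rho\;\ge\;(1-\ve)\Bigl[\,2\sqrt q\,(\sin\theta_2-\sin\theta_1)-\frac{2q+c}{4\sqrt q}\int_{\theta_1}^{\theta_2}\frac{d\theta}{\cos\theta}\Bigr].
\]
Since $2q+c\sim 2(2n+1)\sqrt q$, the correction is $O(n)$ and $q$-independent, and it evaluates through $\int\sec\theta\,d\theta=\log\tan(\tfrac{\pi}{4}+\tfrac{\theta}{2})$, which (after a further half-angle reduction of the endpoints) is the source of the $\tan(\tfrac{\pi}{16}-\tfrac{\alpha}{8})$ and $\sin(\tfrac{3\pi}{8}+\tfrac{\alpha}{4})$ in $D_n$; recalling $2\sqrt q=a\sqrt{\lambda_{nki}}$, the surviving decay rate is $a\sqrt{\lambda_{nki}}(\sin\theta_2-\sin\theta_1)$.

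To finish, I apply this with $\theta_1=\alpha$ and $\theta_2=\tfrac{\pi}{4}+\tfrac{\alpha}{2}$. On the complement $(-\alpha,\alpha)\cup(\pi-\alpha,\pi+\alpha)$, monotonicity on the forbidden region together with the symmetries $|\Theta(\pi-\theta)|=|\Theta(\theta)|$, $|\Theta(-\theta)|=|\Theta(\theta)|$ gives $\int_{\rm compl}|\Theta|^p\le 4\alpha\,|\Theta(\alpha)|^p$; inside $S_\theta$, for $q$ large the interval $[\tfrac{\pi}{4}+\tfrac{\alpha}{2},\theta_t]$ (of length $\tfrac{\pi}{4}-\tfrac{\alpha}{2}-o(1)$) lies in the forbidden region, so $\int_{S_\theta}|\Theta|^p\ge(\tfrac{\pi}{4}-\tfrac{\alpha}{2}-o(1))\,|\Theta(\tfrac{\pi}{4}+\tfrac{\alpha}{2})|^p$ (and the second, reflected component doubles this). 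Dividing, the reference values $|\Theta|^p$ cancel against $[\Theta(\tfrac{\pi}{4}+\tfrac{\alpha}{2})/\Theta(\alpha)]^p$ estimated above; the angular measures $4\alpha$ and $\sim\pi-2\alpha$ together with the pointwise constants combine into a factor of the form $\bigl(16\alpha/(\pi-\alpha/2)\bigr)^{1/p}$, the $O(n)$ correction exponentiates into $D_n$, and $2\sqrt q=a\sqrt{\lambda_{nki}}$ yields the exponent $-a\sqrt{\lambda_{nki}}[\sin(\tfrac{\pi}{4}+\tfrac{\alpha}{2})-\sin\alpha]$. The argument is identical for $\se_{n+1}$ (same equation, $c$ the corresponding characteristic value) and for the two annulus families, the common radial part still cancelling.

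The hard part will be the second step: converting the WKB heuristics into rigorous inequalities with explicit constants that hold for all large $q$ simultaneously — in particular controlling the neighbourhood of the moving turning point $\theta_t$, the catching-up of $\rho$ to $\sqrt V$ near $\theta=0$, and the $n$-dependence sharply enough to recover exactly $D_n$. One should also note that the estimate, although stated for every $\lambda_{nki}>\Lambda_\alpha$, is only nontrivial once $a\sqrt{\lambda_{nki}}[\sin(\tfrac{\pi}{4}+\tfrac{\alpha}{2})-\sin\alpha]$ exceeds $\log D_n=O(n)$, i.e. in the regime $\lambda_{nki}\gg n^2$ where the mode is a genuine bouncing-ball mode; for $\lambda_{nki}\lesssim n^2$ the turning point has not yet reached the vicinity of $\tfrac{\pi}{2}$, the right-hand side is $\ge 1$, and the bound holds vacuously — consistent with such ``whispering-gallery-in-the-ellipse'' modes (concentrated near the high-curvature ends of the major axis) not being localized in $\Omega_\alpha$.
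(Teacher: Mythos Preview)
Your overall architecture---factor out the radial part, then control the ratio of angular integrals of $|\ce_n|^p$ (or $|\se_{n+1}|^p$) over $(0,\alpha)$ versus a subinterval near $\pi/2$---matches the paper's. The substantive difference is in how the angular decay is obtained.

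The paper does not re-derive the WKB estimate from the Riccati equation. It quotes the classical large-$q$ asymptotic expansion of the angular Mathieu functions (McLachlan),
\[
\ce_n(z,q)=C_n(q)\,e^{2\sqrt q\sin z}\,G_n^+(z,q),\qquad \se_{n+1}(z,q)=S_{n+1}(q)\,e^{2\sqrt q\sin z}\,G_n^-(z,q),
\]
and bounds the prefactors $G_n^\pm$ by elementary functions $h_n^\pm$ via two short lemmas (a tail bound on the asymptotic series, then $|G_n^\pm(z_1,q)|<\frac32(1+h_n^-(\alpha)e^{-4\sqrt q\sin z_1})$ on $(0,\alpha)$ and $|G_n^\pm(z_2,q)|>\frac12 h_n^+(\gamma)$ on $(\beta,\gamma)$). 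The specific choices $\beta=\frac{\pi}{4}+\frac{\alpha}{2}$ and $\gamma=\frac{3\pi}{8}+\frac{\alpha}{4}$ then feed directly into $h_n^+(\gamma)=\sqrt{(1-\sin\gamma)^n/(1+\sin\gamma)^{n+1}}$, and the half-angle identity $(1-\sin\gamma)/(1+\sin\gamma)=\tan^2(\pi/4-\gamma/2)=\tan^2(\pi/16-\alpha/8)$ is the entire source of the constant $D_n$. Your route would instead produce an $n$-dependent factor through $\tfrac{2n+1}{2}\int_\alpha^{\pi/4+\alpha/2}\sec\theta\,d\theta$, which exponentiates to a different (though qualitatively similar) constant; matching the paper's $D_n$ exactly would require working with the same $\gamma$ and the same $h_n^+$, which your Riccati argument does not naturally single out. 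So your proposal gives the correct exponential rate but will not, as written, reproduce the stated constants.

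One point where you are in fact more careful than the paper: the area element $a^2(\sinh^2 r+\sin^2\theta)\,dr\,d\theta$ does not factor, so the paper's displayed equality between the $L_p$-ratio and the pure angular ratio is not literally correct. Your sandwich $\sinh^2 r+\sin^2\theta\le\cosh^2 r$ (numerator) and $\sinh^2 r+\sin^2\theta\ge\tfrac12\cosh^2 r$ on your narrow sector $S$ (denominator) is the right fix, at the price of the harmless extra factor $2^{1/p}$.
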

(see Appendix \ref{sec:Aelliptical} for a proof; a similar
exponential bound for the $L_2$-norm was recently derived in
\cite{Betcke11}).  Given that $\lambda_{nki}\to\infty$ as $k$
increases (for any fixed $n$ and $i$), while the area of
$\Omega_\alpha$ can be made arbitrarily small by sending $\alpha\to
\pi/2$, the theorem implies that there are infinitely many
eigenfunctions $u_{nki}$ which are $L_p$-localized in the elliptical
sector $\Omega_\alpha$:
\begin{equation}
\lim\limits_{k \to \infty} \frac{\|u_{nki}\|_{L_p(\Omega\setminus \Omega_\alpha)}}
{\|u_{nki}\|_{L_p(\Omega)}} = 0. 
\end{equation}
These eigenfunctions are called ``bouncing ball modes'' and
illustrated on Fig. \ref{fig:eigen_bouncing}.  We note that similar
results were already known for a filled ellipse and, more generally,
for convex planar domains with smooth boundary \cite{Keller60,Chen94}.
Although our estimates are specific to elliptical shapes, they are
explicit, simpler and also applicable to $L_p$ norms and to elliptical
annuli, i.e. non-convex domains.

The quality of the above estimates was checked numerically.  Figure
\ref{fig:TestTheorem1Alpha1} shows the ratio
$\frac{\|u_{nk1}\|_{L_2(\Omega\setminus\Omega_\alpha)}}{\|u_{nk1}\|_{L_2(\Omega_\alpha)}}$
and its upper bound for two families of eigenfunctions in a filled
ellipse and an elliptical annulus.  One can clearly see the rapid
exponential decay of this ratio when $k$ increases that implies the
localization in a thin sector around the vertical (minor) axis.  Note
that the upper bound is not sharp and can be further improved.

\begin{figure}
\begin{center}
\includegraphics[width=120mm]{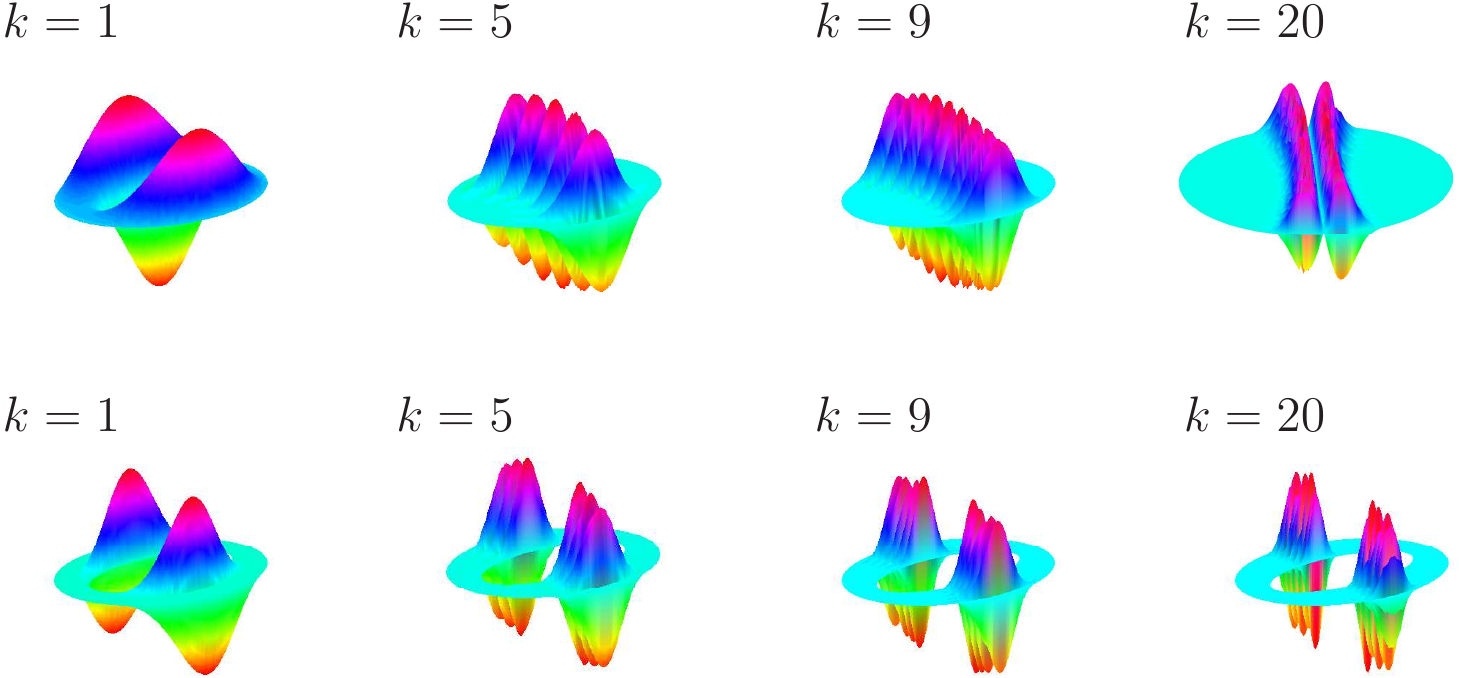}
\end{center}
\caption{
Formation of bouncing ball modes $u_{nki}$ in a filled ellipse of
radius $R=1$ (top) and an elliptical annulus of radii $0.5$ and $1$
(bottom), with the focal distance $a = 1$ and Dirichlet boundary
condition.  For fixed $n = 1$ and $i = 1$, an increase of the index
$k$ leads to stronger localization of the eigenfunction near the
vertical semi-axis ($K_{max} = 200$, see Appendix
\ref{sec:CalMathieu}). }
\label{fig:eigen_bouncing}
\end{figure}

\begin{figure}
 \begin{center}
\includegraphics[width=60mm]{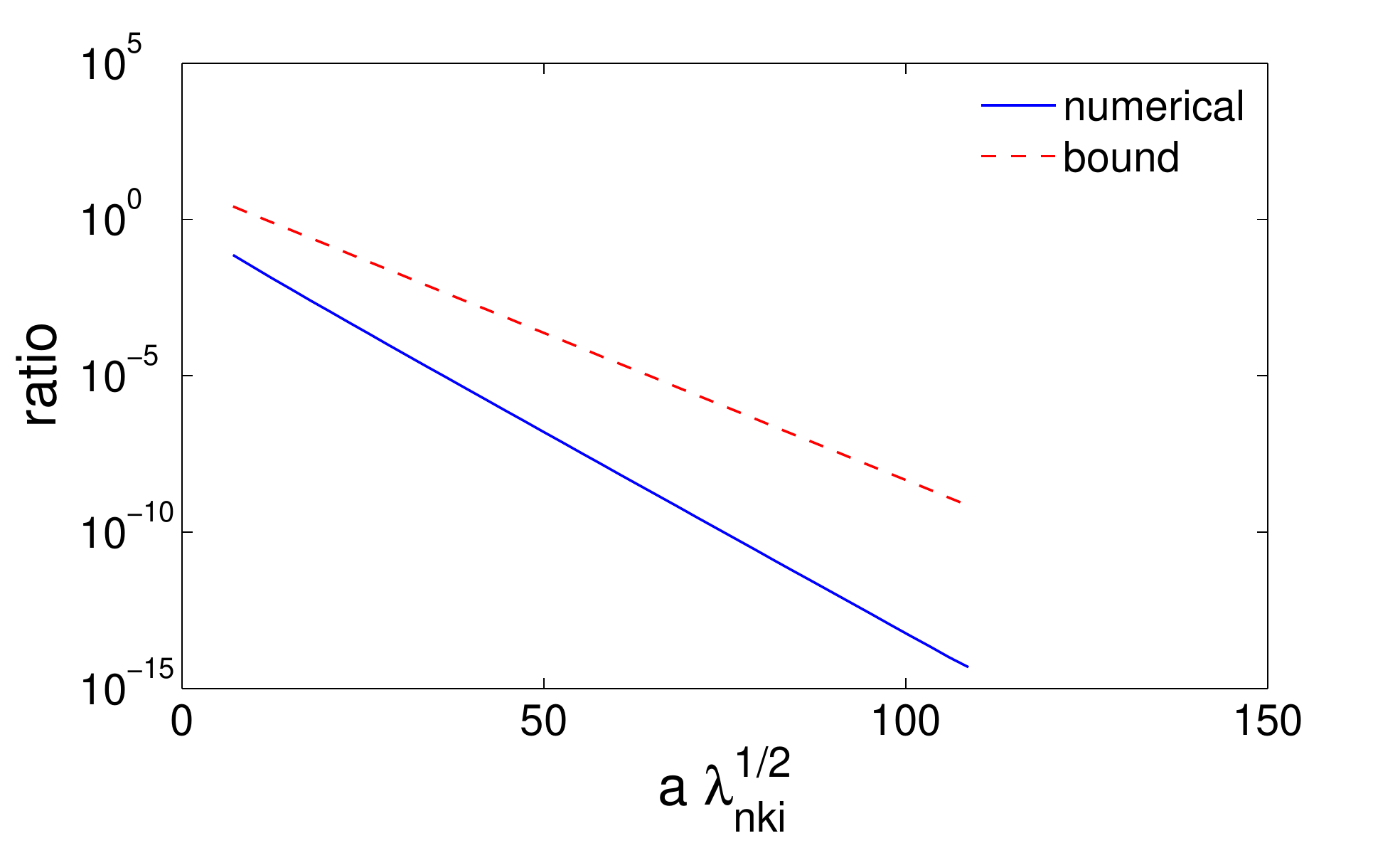}
\includegraphics[width=60mm]{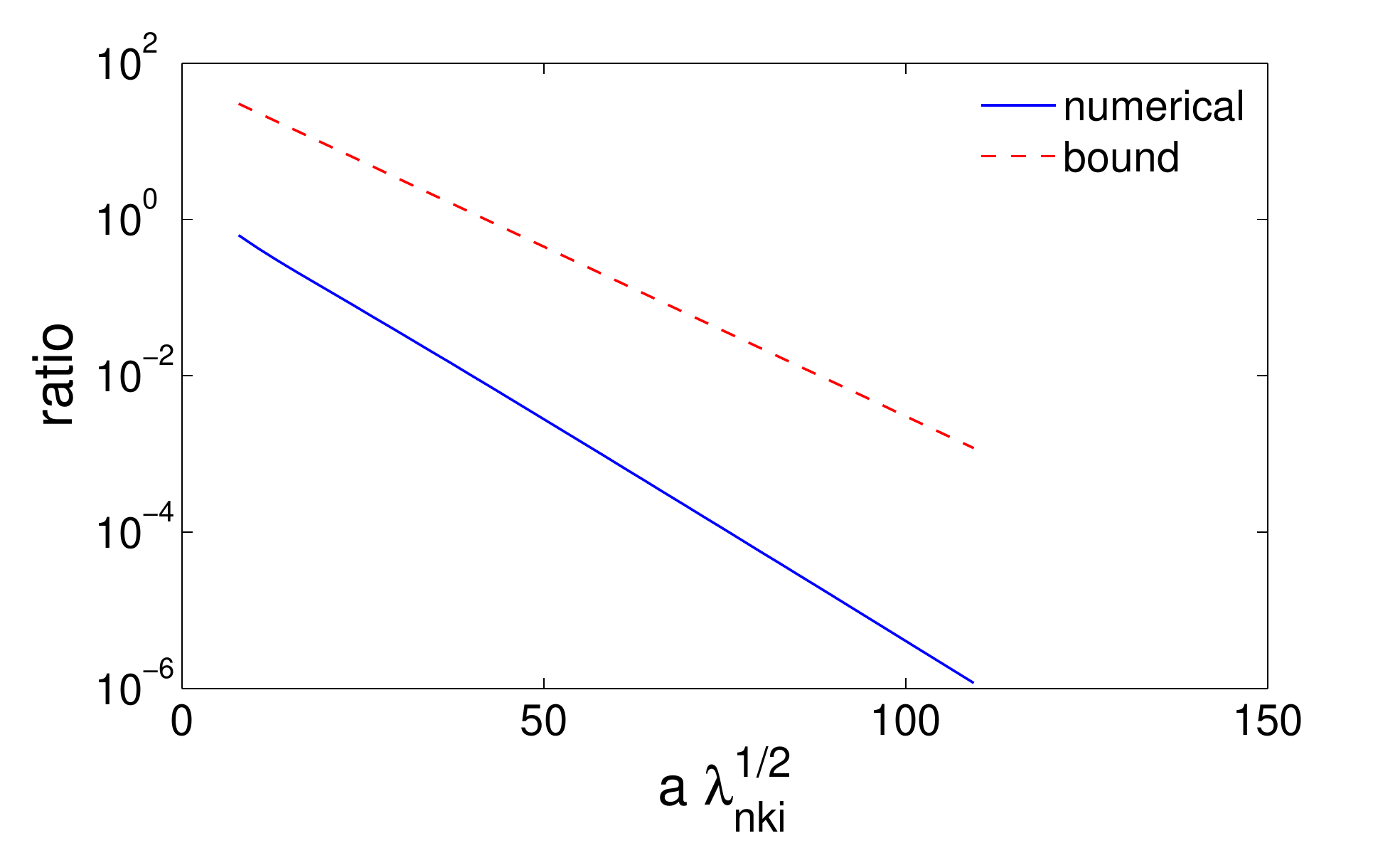}
\includegraphics[width=60mm]{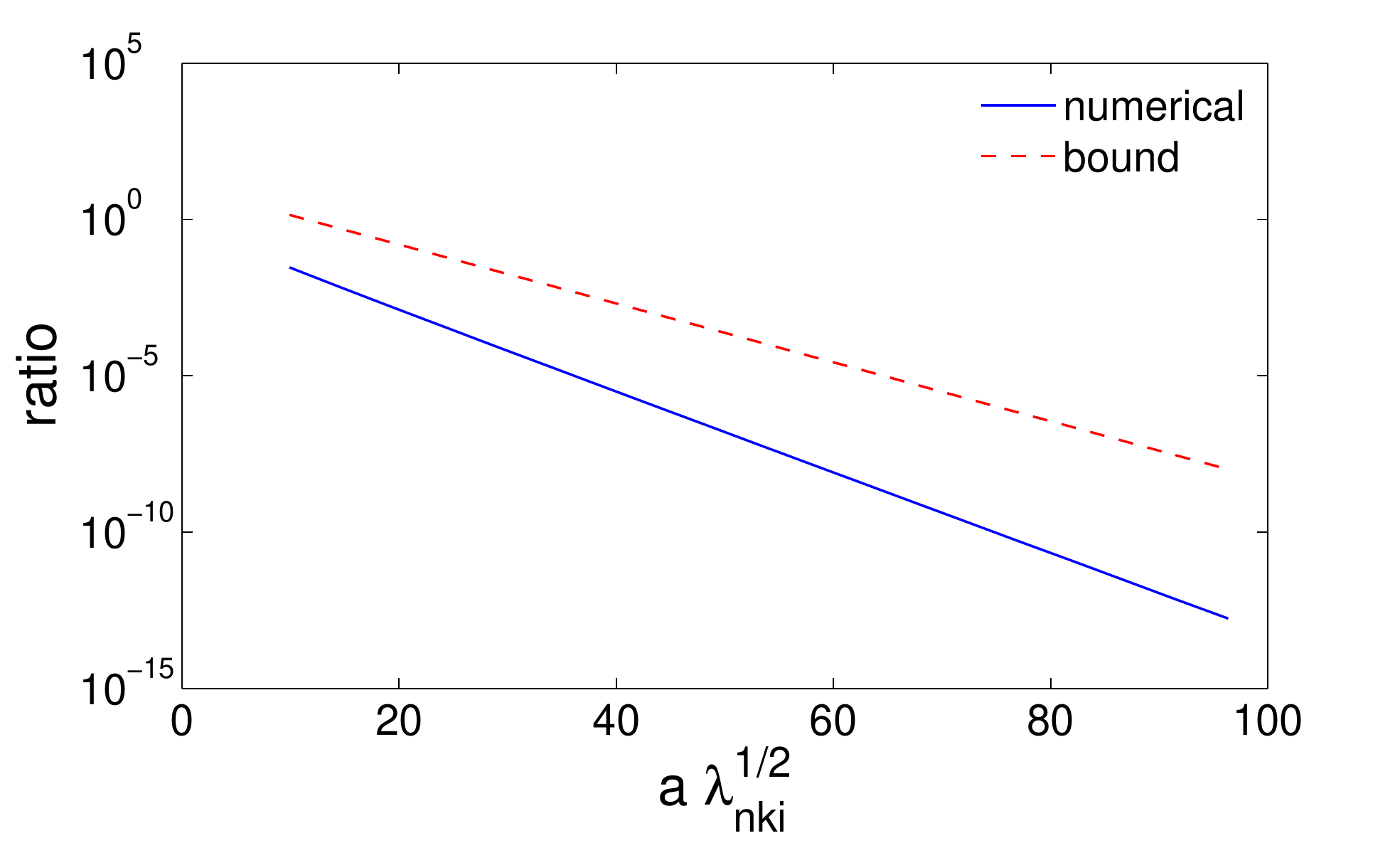}
\includegraphics[width=60mm]{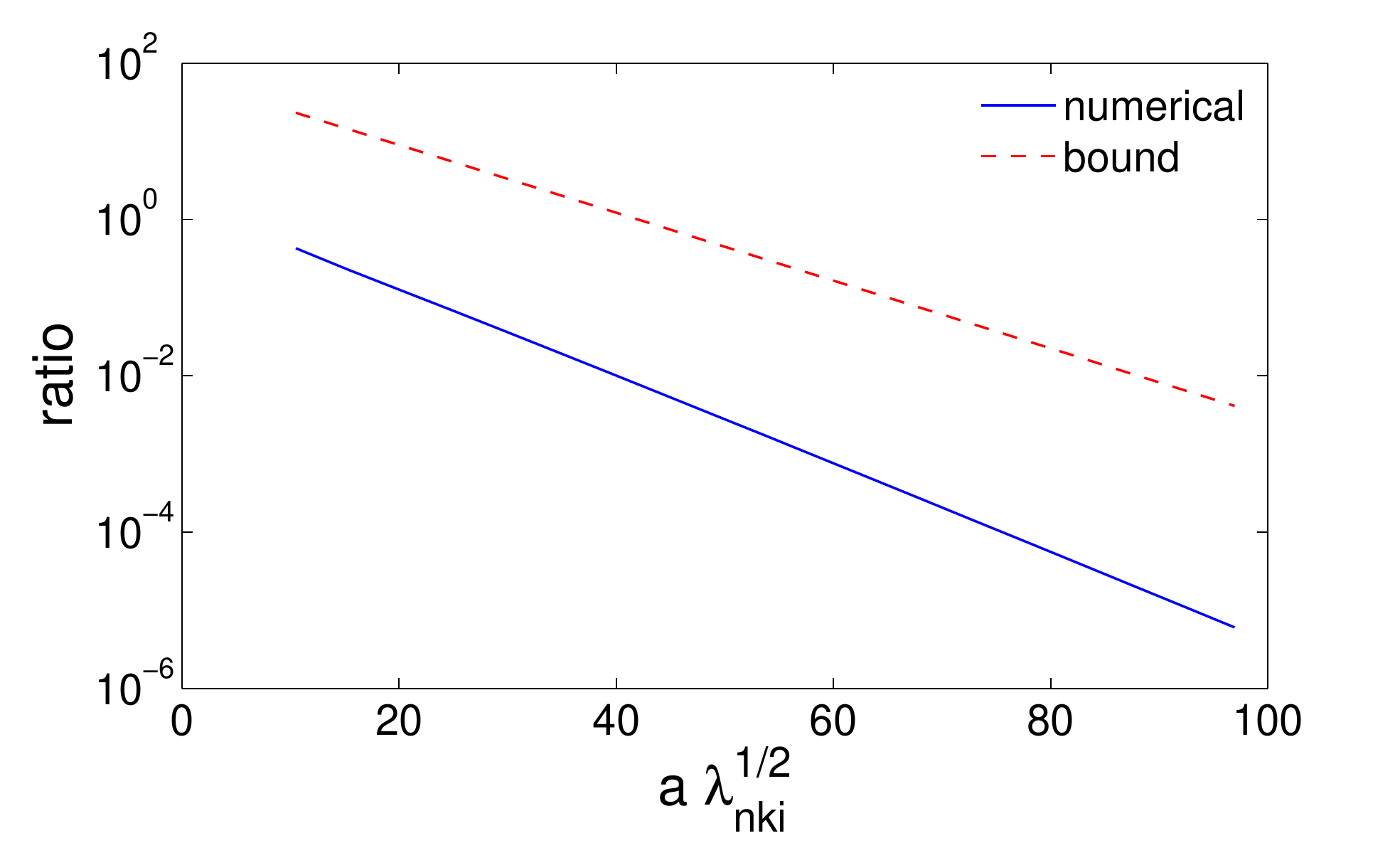}
\end{center}
\caption{
The ratio
$\frac{\|u_{nk1}\|_{L_2(\Omega\setminus\Omega_\alpha)}}{\|u_{nk1}\|_{L_2(\Omega_\alpha)}}$
(solid blue line) and its upper bound (\ref{eq:bound_elliptical})
(dashed red line) in a filled ellipse of radius $R = 1$ and focal
distance $a = 1$ (top) and in an elliptical annulus of radii $R =
0.5$ and $R = 1$ and focal distance $a = 1$ (bottom), with $n = 0$ and
$\alpha = \pi/4$ (left) and $n = 1$ and $\alpha = \pi/3$ (right).  For
numerical computation of Mathieu functions, we used $K_{max}= 200$
(see Appendix \ref{sec:CalMathieu}).}
\label{fig:TestTheorem1Alpha1}
\end{figure}

\section{Discussion}
\label{sec:discussion}

The explicit estimates from previous sections provide us with simple
examples of domains for which there are infinitely many
$L_p$-localized eigenfunctions, according to the definition
(\ref{eq:def_loc}).  Most importantly, the high-frequency localization
may occur in both convex and non-convex domains.  This observation
relaxes, at least for elliptical domains, the condition of convexity
that was significant for the construction of whispering gallery and
bouncing ball modes by Keller and Rubinow \cite{Keller60} and for
semiclassical approximations by Lazutkin
\cite{Lazutkin73,Lazutkin93}.  At the same time, these
approximations suggest the existence of $L_p$-localized eigenfunctions
for a large class of domains.  How large is this class?  What are the
relevant conditions on the domain?  To our knowledge, these questions
are open.  In order to highlight the relevance of these questions, it
is instructive to give an example of domains for which there is no
localization.

\subsection{Rectangle-like domains}
\label{sec:rectangle}

Rectangle-like domains, $\Omega = (0,\ell_1)\times ...\times
(0,\ell_d)\subset \R^d$ (with the sizes $\ell_i > 0$), may seem to
present the simplest shape for studying the Laplacian eigenfunctions
as they are factored and expressed through sines (Dirichlet), cosines
(Neumann) or their combination (Robin):
\begin{equation}
\begin{split}
u_{n_1,...,n_d}(x_1,...,x_d) & = u_{n_1}^{(1)}(x_1) \ldots u_{n_d}^{(d)}(x_d) , \qquad
\lambda_{n_1,...,n_d} = \lambda_{n_1}^{(1)} + ... + \lambda_{n_d}^{(d)} , \\
\end{split}
\end{equation}
with the multiple index $n_1...n_d$, and $u_{n_i}^{(i)}(x_i)$ and
$\lambda_{n_i}^{(i)}$ ($i=1,...,d$) corresponding to the
one-dimensional problem on the interval $(0,\ell_i)$:
\begin{equation*}
\begin{split}
u_n^{(i)}(x) & = \sin(\pi n x/\ell_i) , \quad \lambda_n^{(i)} = \pi^2 n^2/\ell_i^2, 
\quad n = 1,2,3,... \hskip 10mm  \rm{(Dirichlet)} , \\
u_n^{(i)}(x) & = \cos(\pi n x/\ell_i) , \quad \lambda_n^{(i)} = \pi^2 n^2/\ell_i^2, 
\quad n = 0,1,2,...  \hskip 10mm \rm{(Neumann)}  \\
\end{split}
\end{equation*}
(Robin boundary condition will not be considered here).  The situation
is indeed elementary for rectangle-like domains for which all
eigenvalues are simple.
\begin{theorem}
\label{theo:rectangle}
Let $\Omega = (0,\ell_1)\times ... \times (0,\ell_d) \subset \R^d$ be
a rectangle-like domain with sizes $\ell_1 > 0$, ..., $\ell_d > 0$
such that
\begin{equation}
\label{eq:rect_cond}
\ell_i^2/\ell_j^2 \notin \Q    \qquad \forall~ i\ne j .
\end{equation}
($\Q$ denoting the set of rational numbers).  Then for any $p\geq 1$
and any open subset $V\subset \Omega$,
\begin{equation}
\label{eq:C2V}
C_p(V) = \inf\limits_{n_1,...,n_d} \left\{ \frac{\|u_{n_1,...,n_d}\|_{L_p(V)}}{\|u_{n_1,...,n_d}\|_{L_p(\Omega)}} \right\} > 0 .
\end{equation}
\end{theorem}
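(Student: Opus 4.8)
The plan is to exploit the product structure of the eigenfunctions together with equidistribution. Fix $p\ge1$ and an open set $V\subset\Omega$. Since $V$ contains a small box, it suffices to prove the estimate when $V=(a_1,b_1)\times\cdots\times(a_d,b_d)$ is itself a coordinate box with $0\le a_i<b_i\le\ell_i$; shrinking $V$ only decreases the ratio, so a positive lower bound for a sub-box gives one for $V$. For such a box the ratio factorizes:
\begin{equation*}
\frac{\|u_{n_1,\dots,n_d}\|_{L_p(V)}}{\|u_{n_1,\dots,n_d}\|_{L_p(\Omega)}}
=\prod_{i=1}^{d}\frac{\|u_{n_i}^{(i)}\|_{L_p(a_i,b_i)}}{\|u_{n_i}^{(i)}\|_{L_p(0,\ell_i)}},
\end{equation*}
so it is enough to bound each one-dimensional factor from below by a positive constant depending only on $(a_i,b_i,\ell_i,p)$, \emph{uniformly in the mode number} $n_i$.

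First I would treat a single interval. For $u_n(x)=\sin(\pi nx/\ell)$ or $\cos(\pi nx/\ell)$, the denominator $\|u_n\|_{L_p(0,\ell)}^p=\int_0^\ell|\sin(\pi nx/\ell)|^p\,dx$ converges as $n\to\infty$ to $\ell\,c_p$, where $c_p=\frac1\pi\int_0^\pi|\sin t|^p\,dt>0$, by periodicity; in fact it equals $\ell c_p$ exactly once $n\mid$ the interval an integer number of half-periods and is easily seen to stay in a fixed band $[\tfrac12\ell c_p,\,2\ell c_p]$ for all $n\ge1$ (the $n=0$ Neumann case is trivial, the function being constant). For the numerator, $\int_{a}^{b}|\sin(\pi nx/\ell)|^p\,dx$ likewise tends to $(b-a)c_p$ as $n\to\infty$, and for every $n$ it is bounded below by a positive constant: as $n$ grows the interval $(a,b)$ contains at least one full period of $|\sin|^p$, contributing $\ge\ell c_p$; for the finitely many small $n$ where this fails, the integral is still strictly positive because $\sin(\pi nx/\ell)$ is not identically zero on $(a,b)$. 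Taking the minimum over those finitely many exceptional $n$ and combining with the asymptotic lower bound yields a uniform constant $\kappa_i=\kappa(a_i,b_i,\ell_i,p)>0$ with $\|u_{n_i}^{(i)}\|_{L_p(a_i,b_i)}/\|u_{n_i}^{(i)}\|_{L_p(0,\ell_i)}\ge\kappa_i$ for all $n_i$. Multiplying over $i$ gives $C_p(V)\ge\prod_i\kappa_i>0$.

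I expect the main subtlety to be the claimed \emph{uniformity in $n$} of the one-dimensional lower bound — one must rule out that the mass of $|\sin(\pi nx/\ell)|^p$ could concentrate away from $(a,b)$; but since the oscillation is perfectly periodic with period $\ell/n\to0$, no such concentration is possible, and a clean argument is to write $\int_a^b|\sin(\pi nx/\ell)|^p\,dx\ge \big(\lfloor n(b-a)/\ell\rfloor-1\big)\,\ell c_p/n \ge c_p(b-a)/2$ once $n\ge n_0(a,b,\ell)$, then handle $n<n_0$ by compactness. It is worth remarking that hypothesis~\eqref{eq:rect_cond} is in fact \emph{not needed} for this theorem: the per-coordinate argument is completely insensitive to the arithmetic of the $\ell_i^2/\ell_j^2$, which only matters for the simplicity of the eigenvalues (and hence for the statement that every eigenfunction is a genuine product, rather than a linear combination over a degenerate eigenspace). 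If one wishes to keep the hypothesis, it guarantees all $u_{n_1,\dots,n_d}$ are exactly the products above, and the proof goes through verbatim.
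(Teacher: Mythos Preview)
Your proof is correct and follows the same high-level strategy as the paper: reduce $V$ to a coordinate sub-box, exploit the product structure of the eigenfunctions, and bound each one-dimensional factor from below uniformly in the mode number. The execution differs in two technical respects. First, the paper works with the $L_1$-norm on the sub-box, invoking an explicit lemma that gives $\int_a^b|\sin(mx)|\,dx\ge\epsilon(a,b)>0$ with a closed-form $\epsilon(a,b)$ independent of $m$, and then passes to $L_p$ via H\"older's inequality in the form $\|u\|_{L_p(V)}\ge\|u\|_{L_1(\Omega_V)}\,\mu_d(V)^{1/p-1}$ together with the trivial bound $\|u\|_{L_p(\Omega)}\le\|u\|_{L_\infty(\Omega)}\,\mu_d(\Omega)^{1/p}$. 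You instead factor the $L_p$ ratio directly over coordinates and handle each factor by equidistribution for large $n$ plus a finite minimum for small $n$. Your route is a little more conceptual and avoids the $L_1\to L_p$ detour; the paper's route yields an explicit lower bound for $C_p(V)$. Second, your observation that hypothesis~(\ref{eq:rect_cond}) plays no role in the estimate itself is correct and sharper than the paper's presentation: the bound holds for the product functions $u_{n_1,\dots,n_d}$ regardless of the arithmetic of the side lengths, and the irrationality condition is only needed to ensure that these products exhaust \emph{all} eigenfunctions (i.e., that no nontrivial linear combinations appear in degenerate eigenspaces).
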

The proof is elementary (see Appendix \ref{sec:Arectangle}) and relies
the fact that all the eigenvalues are simple due to the condition
(\ref{eq:rect_cond}).  The fact that $C_p(V) > 0$ for any open subset
$V$ means that there is no eigenfunction that could fully ``avoid''
any location inside the domain, i.e., there is no $L_p$-localized
eigenfunction.  Since the set of rational numbers has zero Lebesgue
measure, the condition (\ref{eq:rect_cond}) is fulfilled almost
surely, if one would choose a rectangle-like domain randomly.  In
other words, for most rectangle-like domains, there is no
$L_p$-localized eigenfunction.

When at least one ratio $\ell_i^2/\ell_j^2$ is rational, certain
eigenvalues are degenerate, and the associate eigenfunctions become
linear combinations of products of sines or cosines.  For instance,
for the square with $\ell_1 = \ell_2 = \pi$ and Dirichlet boundary
condition, the eigenvalue $\lambda_{1,2} = 1^2 + 2^2$ is twice
degenerate, and $u_{1,2}(x_1,x_2) = c_1 \sin(x_1)\sin(2x_2) + c_2
\sin(2x_1)\sin(x_2)$, with arbitrary constants $c_1$ and $c_2$
($c_1^2+c_2^2 \ne 0$).  Although the computation is still elementary
for each eigenfunction, it is unknown whether the infimum $C_p(V)$
from Eq. (\ref{eq:C2V}) is strictly positive or not, for arbitrary
rectangle-like domain $\Omega$ and any open subset $V$.  The most
general known result for a rectangle $\Omega = (0,\ell_1)\times
(0,\ell_2)$ states that $C_2(V) > 0$ for any $V\subset \Omega$ of the
form $V = (0,\ell_1)\times \omega$, where $\omega$ is any open subset
of $(0,\ell_2)$ \cite{Burq05}.  Even for the unit square, the
statement $C_p(V) > 0$ for any open subset $V$ seems to be an open
problem.  More generally, one may wonder whether $C_p(V)$ is strictly
positive or not for any open subset $V$ in polygonal convex domains or
in piecewise smooth convex domains.  To our knowledge, these questions
are open.

\section{Conclusion}

We revived the classical problem of high-frequency localization of
Laplacian eigenfunctions.  For circular, spherical and elliptical
domains, we derived the inequalities for $L_p$-norms of the Laplacian
eigenfunctions that clearly illustrate the emergence of whispering
gallery, bouncing ball and focusing eigenmodes.  We gave an
alternative proof for the existence of bouncing ball modes in
elliptical domains.  This proof relies on the properties of Mathieu
functions and is as well applicable to elliptical annuli.  As a
consequence, bouncing ball modes also exist in non-convex domains.  At
the same time, we showed that there is no localization in most
rectangle-like domains that led us to formulating the problem of how
to characterize the class of domains admitting high-frequency
localization.  In particular, the roles of convexity and smoothness
have to be further investigated.  The problem of localization in
polygonal convex domains or, more generally, in piecewise smooth
convex domains are open.

\section*{Acknowledgments}

The authors acknowledge helpful discussions with Ya. G. Sinai and
S. Nonnenmacher.

\appendix
\section{Proofs for a disk}
\label{sec:Adisk}

The proof of Theorem \ref{theo:disk} is based on several estimates for
Bessel functions and their roots that we recall in the following
lemmas.  In this Appendix, $j_{\nu,k}$ and $j'_{\nu,k}$ denote all
positive zeros (enumerated by $k=1,2,3,...$ in an increasing order) of
the Bessel function $J_\nu(x)$ and its derivative $J'_\nu(x)$,
respectively.
\begin{lemma}
\label{lem:Kroger1}
For any $n = 1,2,3,...$ and any $\varepsilon \in (0,2/3)$, the Bessel
function $J_n(x)$ satisfies \cite{Kroger96}
\begin{equation}
\label{eq:disk_lemma1}  
0 < J_n(nz) < 2^{-n^{\varepsilon}/3}  \qquad \forall~ z \in (0, 1 - n^{\varepsilon - \frac23}) .
\end{equation}
\end{lemma}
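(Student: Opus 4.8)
The plan is to establish the two-sided bound in Lemma~\ref{lem:Kroger1} by working with the standard ascending-series representation of the Bessel function,
\begin{equation*}
J_n(x) = \sum_{m=0}^{\infty} \frac{(-1)^m}{m!\,(n+m)!}\left(\frac{x}{2}\right)^{n+2m},
\end{equation*}
and controlling it on the relevant range of the argument. First I would substitute $x = nz$ with $z \in (0,1-n^{\varepsilon-2/3})$ and note that the leading term is $\frac{1}{n!}(nz/2)^n$. The key observation is that on this range $nz < n - n^{\varepsilon}$, so it suffices to show (a) that the series is an alternating series whose terms decrease in modulus, so that $J_n(nz)$ lies between the first term and the first term minus the second, giving both $J_n(nz)>0$ and $J_n(nz) < \frac{1}{n!}(nz/2)^n$; and (b) that $\frac{1}{n!}(n/2)^n(1-n^{\varepsilon-2/3})^n < 2^{-n^{\varepsilon}/3}$ for all $n$.

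For step~(a), the ratio of consecutive terms in modulus is $\frac{(x/2)^2}{(m+1)(n+m+1)}$, which for $x=nz<n$ is at most $\frac{n^2/4}{(m+1)(n+m+1)} < 1$ whenever $n \geq 1$ and $m \geq 0$ (indeed $(m+1)(n+m+1) \geq n+1 > n^2/4$ fails only for small $n$, so here one should be slightly more careful: $(m+1)(n+m+1)\ge 1\cdot(n+1)$, and $n+1 > n^2/4$ holds for $n\le 4$; for $n\ge 5$ one instead uses $x=nz\le n-n^\varepsilon<n$ together with $m\ge 0$, giving $(m+1)(n+m+1)\ge n+1$, which is $>n^2/4$ only for small $n$ — so the decrease of terms must be argued via $x<2\sqrt{(m+1)(n+m+1)}$, equivalently $x<2\sqrt{n+1}$ for the first ratio, which is false). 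The cleaner route for step~(a) is therefore to invoke the classical fact that $J_n(x)>0$ for $0<x<j_{n,1}$ together with a direct estimate of the tail rather than strict alternation from the start; alternatively, one bounds $|J_n(x)|$ crudely by $\frac{(x/2)^n}{n!}\sum_{m\ge 0}\frac{(x/2)^{2m}}{m!\,\binom{n+m}{m}^{-1}}$ — but this reintroduces the same difficulty. So in practice I would cite the monotonicity/positivity input from \cite{Kroger96} (as the lemma already does) and concentrate the self-contained work on step~(b).

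For step~(b) I would take logarithms and use Stirling's inequality $n! > \sqrt{2\pi n}\,(n/e)^n$, so that
\begin{equation*}
\frac{1}{n!}\left(\frac{n}{2}\right)^n\left(1-n^{\varepsilon-2/3}\right)^n
< \frac{1}{\sqrt{2\pi n}}\left(\frac{e}{2}\right)^n\left(1-n^{\varepsilon-2/3}\right)^n.
\end{equation*}
Since $e/2 > 1$ this is not immediately small, so the decisive gain must come from the factor $(1-n^{\varepsilon-2/3})^n$; but $n\cdot n^{\varepsilon-2/3}=n^{\varepsilon+1/3}$, which does tend to infinity, so $\log\bigl[(1-n^{\varepsilon-2/3})^n\bigr]\le -n^{\varepsilon+1/3}$ dominates the $n\log(e/2)$ term. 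Hence the right-hand side is $\exp\bigl(n\log(e/2)-n^{\varepsilon+1/3}+O(\log n)\bigr)$, which is eventually smaller than $2^{-n^{\varepsilon}/3}=\exp(-n^{\varepsilon}(\log 2)/3)$ provided $n^{\varepsilon+1/3}$ beats $n\log(e/2)$ — and indeed $\varepsilon+1/3<1$, so this comparison goes the wrong way. This tells me the true argument cannot rely solely on the leading series term: one must instead use the exponentially small bound on $J_n(nz)$ for $z$ bounded away from $1$ that follows from the uniform (Debye) asymptotics of Bessel functions in the transition region, which is precisely what Kröger proves. The main obstacle, then, is not any one calculation but correctly importing the right sharp input: I expect the honest proof to quote \cite{Kroger96} for the exponential-decay estimate of $J_n$ below its first positive zero and then merely verify that the exponent there can be taken in the stated form $2^{-n^\varepsilon/3}$ on the interval $z\in(0,1-n^{\varepsilon-2/3})$, by a short monotonicity argument in $z$ since $J_n(nz)$ is increasing on $(0, j_{n,1}/n)$ and $j_{n,1}/n > 1 > 1-n^{\varepsilon-2/3}$.
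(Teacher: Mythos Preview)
The paper does not prove this lemma at all: it is stated with a bare citation to \cite{Kroger96} and used as a black-box input in the proof of Theorem~\ref{theo:disk}. So there is no ``paper's own proof'' to compare against beyond the reference itself, and your eventual conclusion --- that the honest argument is simply to import the exponential-decay bound from Kr\"oger --- is exactly what the authors do.

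Your exploratory attempts before reaching that conclusion are instructive but, as you yourself diagnose, do not work. The leading-term bound $|J_n(x)|\le (x/2)^n/n!$ is valid (e.g.\ from Poisson's integral representation), but for $x$ close to $n$ it is of order $(e/2)^n/\sqrt{2\pi n}$, which \emph{grows} with $n$; the factor $(1-n^{\varepsilon-2/3})^n\sim\exp(-n^{\varepsilon+1/3})$ cannot compensate a linear-in-$n$ exponent since $\varepsilon+1/3<1$. So the power-series route is genuinely hopeless here, not merely inconvenient. The correct input is indeed a Debye/Kapteyn-type estimate in the region $x<\nu$. It is worth noting that the paper actually carries out precisely such an argument in Appendix~\ref{sec:Aball}: Lemma~\ref{lem:strongsphere} proves the analogous bound for spherical Bessel functions directly from Kapteyn's inequality (Eq.~(\ref{eq:Kapteyn})), and that computation specializes without change to $J_n$ itself, yielding $J_n(nz)<\exp(2/3-n^{\varepsilon}/3)$ on $z\in(0,1-n^{\varepsilon-2/3})$. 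So if you wanted a self-contained proof rather than a citation, mimicking Lemma~\ref{lem:strongsphere} with $\nu=n$ would do it.
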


\begin{lemma}
\label{lem:Chambers}
The first zeros $j_{n,1}$ and $j'_{n,1}$ with $n = 1,2,...$ satisfy
\cite{Watson,Chambers82}
\begin{equation}
\label{eq:disk_lemma2}  
n < j'_{n,1} < j_{n,1} < \sqrt{n+1} \left({\sqrt{n+2}+1}\right).
\end{equation} 
\end{lemma}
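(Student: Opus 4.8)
The plan is to treat the three inequalities in the chain $n<j'_{n,1}<j_{n,1}<\sqrt{n+1}\bigl(\sqrt{n+2}+1\bigr)$ separately; the middle one is immediate, the lower bound on $j'_{n,1}$ is a short ODE argument, and the sharp upper bound on $j_{n,1}$ is the crux. For the middle inequality, observe that for $n\geq 1$ one has $J_n(0)=0=J_n(j_{n,1})$, so Rolle's theorem produces a zero of $J'_n$ in the open interval $(0,j_{n,1})$; hence its first positive zero satisfies $j'_{n,1}<j_{n,1}$ (and $J_n$, $J'_n$ cannot vanish simultaneously, by uniqueness for Bessel's equation, so strictness is free).

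For the lower bound $n<j'_{n,1}$ I would rewrite Bessel's equation in self-adjoint form,
\begin{equation*}
\bigl(x\,J'_n(x)\bigr)' \;=\; \frac{n^2-x^2}{x}\,J_n(x),
\end{equation*}
and bootstrap on $(0,n)$. Since $J_n(x)\sim (x/2)^n/n!$ near the origin, both $J_n$ and $g(x):=x\,J'_n(x)$ are positive for small $x>0$, with $g(0^+)=0$. On any initial subinterval of $(0,n)$ on which $J_n>0$ the right-hand side above is positive, so $g$ is strictly increasing there, hence $J'_n>0$, hence $J_n$ is increasing and stays positive; taking the supremum of the points up to which this reasoning applies shows $J_n>0$ and $J'_n>0$ on all of $(0,n]$. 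In particular $J'_n$ has no zero in $(0,n]$, i.e.\ $j'_{n,1}>n$ (and, as a byproduct, $j_{n,1}>n$).

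The upper bound $j_{n,1}<\sqrt{n+1}\bigl(\sqrt{n+2}+1\bigr)$ is the main obstacle. The natural tool is the Rayleigh--Ritz principle: $j_{n,1}^2$ is the lowest Dirichlet eigenvalue on the unit disk among functions of the form $f(r)\cos(n\theta)$, so
\begin{equation*}
j_{n,1}^2 \;=\; \min_{f(1)=0}\ \frac{\displaystyle\int_0^1\Bigl(f'(r)^2+\tfrac{n^2}{r^2}f(r)^2\Bigr)r\,dr}{\displaystyle\int_0^1 f(r)^2\,r\,dr}\,,
\end{equation*}
the minimum over radial profiles in the form domain being attained only by constant multiples of $J_n(j_{n,1}r)$. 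Plugging in the one-parameter trial function $f(r)=r^n(1-r^{\beta})$ gives, after a routine computation, the Rayleigh quotient
\begin{equation*}
Q(\beta)\;=\;\frac{(n+1)(2n+2+\beta)(n+1+\beta)}{n+\beta}\;=\;(n+1)\Bigl(u+(n+3)+\tfrac{n+2}{u}\Bigr),\qquad u:=n+\beta,
\end{equation*}
which by the arithmetic--geometric mean inequality is at least $(n+1)\bigl(\sqrt{n+2}+1\bigr)^2$, with equality at $u=\sqrt{n+2}$. Whenever this minimizer is admissible, i.e.\ $\beta=\sqrt{n+2}-n>0$ (which among integers $n\geq 1$ means $n=1$), and since $r^n(1-r^{\beta})$ is never a multiple of $J_n(j_{n,1}r)$, we get the strict bound in that case. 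For larger $n$ the constraint $\beta>0$ becomes active and this one-parameter family is too coarse; one then has to enlarge the trial family (e.g.\ optimize over $f=r^n(1-r^2)(1+cr^2)$) or, as the citation in the statement signals, import the sharper monotonicity and comparison estimates for Bessel zeros from \cite{Watson,Chambers82}. Making this step work uniformly in $n$ is the delicate point, and is exactly what those references supply.
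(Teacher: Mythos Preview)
The paper does not prove this lemma at all: it is stated with the citations \cite{Watson,Chambers82} and used as a black box. So there is no ``paper's proof'' to compare against; in that sense you have already gone further than the authors by actually sketching arguments for the three inequalities.

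Your treatments of $j'_{n,1}<j_{n,1}$ (Rolle) and $n<j'_{n,1}$ (monotonicity of $xJ'_n(x)$ on $(0,n)$ from the self-adjoint form) are correct and standard. For the upper bound your Rayleigh--Ritz computation with $f=r^n(1-r^\beta)$ is also correct, including the identification $Q(\beta)=(n+1)\bigl(u+(n+3)+(n+2)/u\bigr)$ with $u=n+\beta$ and the AM--GM minimum $(n+1)(\sqrt{n+2}+1)^2$ at $u=\sqrt{n+2}$. You are right that this minimizer is only admissible for $n=1$; for $n\geq 2$ the constraint $\beta>0$ forces $u>n\geq\sqrt{n+2}$, and the resulting infimum $Q(0^+)=(n+1)\bigl(2n+3+(n+2)/n\bigr)$ is strictly larger than $(n+1)(\sqrt{n+2}+1)^2$ for $n\geq 3$, so this family genuinely cannot reach Chambers' bound there. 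Your candid deferral to \cite{Chambers82} at this point is therefore appropriate, and is in fact no less than what the paper itself does.
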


\begin{lemma}
\label{lem:root_estim}
For large enough $n$, the asymptotic relations hold \cite{Watson}:
\begin{eqnarray}
J_n(n) &=& C'_1 n^{-1/3} + O(n^{-5/3})   \qquad \left(C'_1 = \frac{\Gamma(1/3)}{2^{2/3} 3^{1/6} \pi} \approx 0.4473\right) \\
j'_{n,1} &=& n + n^{1/3} ~ C'_2 + O(n^{-1/3})    \qquad (C'_2 = 0.808618...) .
\end{eqnarray}
As a consequence, taking smaller constants (e.g., $C_1 = 0.447$ and
$C_2 = 0.8086$), one gets lower bounds for large enough $n$:
\begin{eqnarray}
\label{eq:Jn(n)}
J_n(n) &>& C_1 n^{-1/3}  \qquad (n\gg 1) \\
\label{eq:jprime_n}
j'_{n,1} &>& n + C_2 n^{1/3}  \qquad (n\gg 1) ,
\end{eqnarray}
\end{lemma}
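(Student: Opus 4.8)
The plan is to recognise both formulas as classical manifestations of the behaviour of $J_\nu(\nu z)$ near the turning point $z = 1$, to extract the two constants from the leading-order asymptotics, and then to deduce the corollary by absorbing the remainders.

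First, for $J_n(n)$ I would start from the Bessel integral
\begin{equation*}
J_n(n) = \frac{1}{\pi}\int_0^\pi \cos\bigl(n(\theta - \sin\theta)\bigr)\,d\theta ,
\end{equation*}
valid for integer $n$ because the complementary $\sinh$-integral carries a factor $\sin(n\pi) = 0$. The phase $\phi(\theta) = \theta - \sin\theta = \theta^3/6 - \theta^5/120 + \dots$ has a degenerate stationary point at $\theta = 0$, so the dominant contribution comes from a shrinking neighbourhood of the origin; the substitution $\theta = (6/n)^{1/3}s$ turns the leading term into $\frac{1}{\pi}(6/n)^{1/3}\int_0^\infty \cos(s^3)\,ds$, and inserting $\int_0^\infty \cos(s^3)\,ds = \tfrac{1}{3}\Gamma(1/3)\cos(\pi/6)$ together with the identity $6^{1/3}\sqrt{3}/6 = 2^{-2/3}3^{-1/6}$ reproduces exactly $C'_1 = \Gamma(1/3)/(2^{2/3}3^{1/6}\pi)$. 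The $\theta^5$ term of $\phi$ would contribute at relative order $n^{-2/3}$, but the parity of the moments involved kills the $O(n^{-1})$ coefficient, leaving an error $O(n^{-5/3})$; alternatively one may simply quote the full asymptotic series for $J_\nu(\nu)$ from \cite{Watson}.

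Second, for $j'_{n,1}$ the relevant instrument is Olver's uniform Airy-type expansion
\begin{equation*}
J_\nu(\nu z) = \left(\frac{4\zeta}{1-z^2}\right)^{1/4}\!\left[\frac{\mathrm{Ai}(\nu^{2/3}\zeta)}{\nu^{1/3}} + O(\nu^{-5/3})\right] ,
\end{equation*}
uniform for $z$ near $1$, where $\zeta = \zeta(z)$ is analytic with $\zeta(1) = 0$ and $\zeta'(1) = -2^{1/3}$ (the last value following from $(\zeta'(1))^3 = -2$, obtained by passing to the limit $z \to 1$ in the relation $(\zeta')^2\zeta = (1-z^2)/z^2$). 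Differentiating in $z$, the dominant part of $J'_\nu(\nu z)$ is proportional to $\mathrm{Ai}'(\nu^{2/3}\zeta)$, the derivative of the slowly varying prefactor being of lower order, so the first positive zero satisfies $\nu^{2/3}\zeta = a'_1 + O(\nu^{-1})$, with $a'_1 \approx -1.0188$ the first negative zero of $\mathrm{Ai}'$. Inverting $\zeta$ near $z = 1$ gives $z = 1 + 2^{-1/3}|a'_1|\,\nu^{-2/3} + O(\nu^{-4/3})$, hence $j'_{n,1} = nz = n + 2^{-1/3}|a'_1|\,n^{1/3} + O(n^{-1/3})$ with $C'_2 = 2^{-1/3}|a'_1| = 0.808618\dots$, again a tabulated expansion in \cite{Watson,Abramowitz}.

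Finally, the lower bounds (\ref{eq:Jn(n)}) and (\ref{eq:jprime_n}) follow at once: since $C_1 = 0.447 < C'_1$ and $C_2 = 0.8086 < C'_2$, the strictly positive gaps $(C'_1 - C_1)n^{-1/3}$ and $(C'_2 - C_2)n^{1/3}$ eventually dominate the remainders $O(n^{-5/3})$ and $O(n^{-1/3})$. The only delicate point --- and the one I expect to be the real obstacle --- is the error bookkeeping: confirming that the $O(n^{-1})$ term in $J_n(n)$ genuinely cancels so the remainder is $O(n^{-5/3})$, and that the higher-order terms in Olver's expansion together with the prefactor perturb $\nu^{2/3}\zeta$ only by $O(\nu^{-1})$ (hence $j'_{n,1}$ only by $O(n^{-1/3})$). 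Both are standard but laborious, so in the write-up I would lean on the explicit expansions in \cite{Watson} rather than rederive them from scratch.
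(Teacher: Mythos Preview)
Your derivation is correct, but it goes well beyond what the paper does. In the paper this lemma carries no proof at all: the two asymptotic relations are simply quoted from Watson \cite{Watson}, and the lower bounds are obtained in one line by choosing constants $C_1 < C'_1$ and $C_2 < C'_2$ so that the positive gaps absorb the error terms --- exactly as you do in your final paragraph.

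Your sketches for the two constants are sound. For $J_n(n)$ the Bessel integral with the degenerate stationary point at $\theta=0$ is the standard route, and your reduction to $\int_0^\infty\cos(s^3)\,ds = \tfrac13\Gamma(1/3)\cos(\pi/6)$ followed by the simplification $6^{1/3}\sqrt{3}/6 = 2^{-2/3}3^{-1/6}$ recovers $C'_1$ correctly. For $j'_{n,1}$ your use of Olver's uniform Airy expansion, reading off the first zero of $\mathrm{Ai}'$ and inverting $\zeta(z)$ near $z=1$ via $\zeta'(1)=-2^{1/3}$, gives $C'_2 = 2^{-1/3}|a'_1|$ as expected. The caveats you flag about the error orders (the vanishing of the $O(n^{-1})$ correction in $J_n(n)$, and the $O(\nu^{-1})$ perturbation of $\nu^{2/3}\zeta$) are genuine but, as you note, are tabulated in \cite{Watson} and \cite{Abramowitz}; the paper relies on exactly those references and does not rederive them either.

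In short: the paper treats this lemma as a citation, while you supply the underlying stationary-phase and Airy-expansion arguments. Both are fine; yours is simply more self-contained.
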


\begin{lemma}
\label{lem:Olver}
For fixed $k$ and large $\nu$, the Olver's expansion holds
\cite{Olver51,Olver52,Elbert01}
\begin{equation}
\begin{split}
j_{\nu,k} & = \nu + \delta_k \nu^{1/3} + \frac{3}{10} \delta_k^2 \nu^{-1/3} + \frac{5-\delta_k^3}{350} \nu^{-1}
- \frac{479 \delta_k^4 + 20\delta_k}{63000} \nu^{-5/3} \\
& + \frac{20231 \delta_k^5 - 27550 \delta_k^2}{8 085 000} \nu^{-7/3} + O(\nu^{-3}) , \\
\end{split}
\end{equation}
where $\delta_k = -a_k 2^{-1/3} > 0$ and $a_k$ are the negative zeros
of the Airy function (e.g., $\delta_1 = 1.855757...$).  Taking $c_k =
\delta_k + \epsilon$ (e.g., $\epsilon = 1$), one gets the upper bounds
for $j_{\nu,k}$ for $\nu$ large enough
\begin{equation}
\label{eq:j_nuk}
j_{\nu,k} < \nu + c_k \nu^{1/3}  \qquad (\nu \gg 1).
\end{equation}
\end{lemma}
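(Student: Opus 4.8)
The plan is to establish Theorem~\ref{lem:Olver}'s companion estimate~\eqref{eq:j_nuk} by inserting the explicit constants from the Olver expansion into the asymptotic series. Fix $k$. By the stated expansion, $j_{\nu,k} = \nu + \delta_k \nu^{1/3} + R_k(\nu)$, where $R_k(\nu) = \frac{3}{10}\delta_k^2 \nu^{-1/3} + O(\nu^{-1})$ collects all the lower-order terms; the key point is that $R_k(\nu) = o(\nu^{1/3})$ as $\nu \to \infty$. First I would rewrite the desired inequality $j_{\nu,k} < \nu + c_k \nu^{1/3}$ with $c_k = \delta_k + \epsilon$ (say $\epsilon = 1$) as the equivalent statement $R_k(\nu) < \epsilon\, \nu^{1/3}$, i.e. $\nu^{-1/3} R_k(\nu) < \epsilon$. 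Since $\nu^{-1/3}R_k(\nu) = \frac{3}{10}\delta_k^2 \nu^{-2/3} + O(\nu^{-4/3}) \to 0$, there exists $\nu_0 = \nu_0(k)$ beyond which this quantity is below $\epsilon = 1$, which is exactly the claim.

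Carrying this out carefully, the main steps are: (i) record the Olver expansion with its error term $O(\nu^{-3})$, which is a classical result of Olver \cite{Olver51,Olver52} and whose uniformity in $\nu$ for fixed $k$ is standard; (ii) isolate the ``remainder after the $\nu^{1/3}$ term,'' bound each displayed coefficient by an absolute constant times the appropriate power of $\delta_k$, and bound the $O(\nu^{-3})$ tail by $M_k \nu^{-3}$ for some $M_k$ depending only on $k$; (iii) conclude that $R_k(\nu)/\nu^{1/3}$ is dominated by a finite sum of negative powers of $\nu$ with $k$-dependent coefficients, hence tends to $0$; (iv) choose $\nu_0(k)$ large enough that this expression is $< 1$ for all $\nu \geq \nu_0(k)$, giving $j_{\nu,k} < \nu + (\delta_k+1)\nu^{1/3}$ for $\nu$ large enough. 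The numerical value $\delta_1 = 1.855757\ldots$ (equivalently, the first negative Airy zero $a_1 \approx -2.33811$ scaled by $2^{-1/3}$) can be quoted from tables to make the bound fully explicit for $k=1$ if desired.

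The only genuine obstacle is a bookkeeping one: making precise what ``for $\nu \gg 1$'' means, i.e. exhibiting the threshold $\nu_0(k)$ rather than merely invoking the limit. Since Theorem~\ref{theo:disk} is itself stated only for ``large enough $n$'' and does not track the dependence of that threshold on $k$, this lemma only needs the qualitative version, so I would be content to argue by the limit $\nu^{-1/3}R_k(\nu) \to 0$ without extracting an effective $\nu_0(k)$. If an effective bound were wanted, one would truncate the Olver series after, say, the $\nu^{-7/3}$ term and use an explicit remainder estimate (available in Olver's work and in \cite{Elbert01}), then solve the resulting polynomial inequality in $\nu^{-2/3}$ numerically; but that refinement is not needed for the applications in this paper, so I would omit it.

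Finally, I note that the lower bound half of Lemma~\ref{lem:root_estim}, namely $j'_{n,1} > n + C_2 n^{1/3}$, is proved in the identical fashion: take McMahon-type asymptotics $j'_{n,1} = n + 0.808618\ldots\, n^{1/3} + O(n^{-1/3})$ from \cite{Watson}, shrink the leading constant from $0.808618\ldots$ to $C_2 = 0.8086$ to absorb the $O(n^{-1/3})$ correction for $n$ large, and similarly shrink $\Gamma(1/3)/(2^{2/3}3^{1/6}\pi) \approx 0.4473$ to $C_1 = 0.447$ in the estimate $J_n(n) = C_1' n^{-1/3} + O(n^{-5/3}) > C_1 n^{-1/3}$. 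In each case the mechanism is the same: the true asymptotic constant strictly exceeds (or is strictly exceeded by) the rounded constant, so the $o(\cdot)$ correction is dominated for all sufficiently large argument.
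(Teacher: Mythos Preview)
Your proposal is correct and matches the paper's approach exactly: the paper does not give a separate proof of this lemma but simply cites Olver's expansion from \cite{Olver51,Olver52,Elbert01} and remarks that taking $c_k = \delta_k + \epsilon$ absorbs the lower-order terms to yield \eqref{eq:j_nuk} for $\nu$ large enough. Your write-up merely spells out the one-line observation that the post-$\nu^{1/3}$ remainder is $o(\nu^{1/3})$, which is precisely the implicit argument.
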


\begin{lemma}
For fixed $\nu$ and large $k$, the McMahon's expansion holds
\cite{Watson} (p. 506)
\begin{equation}
j_{\nu,k} = k\pi + \frac{\pi}{2}(\nu - 1/2) - \frac{4\nu^2 - 1}{8(k\pi + \pi(\nu-1/2)/2)} + O(1/k^3) .
\end{equation}
\end{lemma}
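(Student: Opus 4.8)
The plan is to derive the expansion directly from the classical large-argument asymptotics of $J_\nu$ together with a bootstrap (iteration) argument, holding $\nu$ fixed and letting $k\to\infty$. The only external input is the standard representation
\begin{equation*}
J_\nu(x) = \sqrt{\frac{2}{\pi x}}\,\Bigl[\cos\omega\, P(\nu,x) - \sin\omega\, Q(\nu,x)\Bigr],
\qquad \omega = x - \frac{\nu\pi}{2} - \frac{\pi}{4},
\end{equation*}
where $P(\nu,x) = 1 + O(x^{-2})$ and $Q(\nu,x) = \frac{4\nu^2-1}{8x} + O(x^{-3})$ are the usual asymptotic series in inverse powers of $x$, each equipped with an explicit remainder bound for $x$ large; this is exactly what \cite{Watson} establishes on the pages preceding p.~506. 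Since I am after the zeros, I would divide by the non-vanishing prefactor $\sqrt{2/(\pi x)}$ and by $P(\nu,x)$ (which stays uniformly close to $1$), thereby reducing the equation $J_\nu(x)=0$ to the transcendental form $\cot\omega = Q(\nu,x)/P(\nu,x)$.

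Next I would pin down the zeros to leading order. Because $Q/P\to 0$ as $x\to\infty$, the roots $\omega$ must cluster near the odd multiples of $\pi/2$, i.e.\ near the zeros of $\cos\omega$. To capture the $k$-th positive zero I set the leading approximation
\begin{equation*}
\beta = \beta_{\nu,k} := \Bigl(k + \tfrac{\nu}{2} - \tfrac14\Bigr)\pi = k\pi + \tfrac{\pi}{2}\bigl(\nu - \tfrac12\bigr),
\end{equation*}
for which $\omega|_{x=\beta} = (k-\tfrac12)\pi$ and hence $\cos\omega = 0$. Writing $x = \beta + \delta$ and exploiting the $\pi$-periodicity of $\cot$, the equation collapses to the fixed-point relation $\tan\delta = -\,Q(\nu,\beta+\delta)/P(\nu,\beta+\delta)$ for the small correction $\delta$.

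The third step is the bootstrap itself. Since the right-hand side is $O(1/\beta)=O(1/k)$ uniformly for bounded $\delta$, a first pass gives $\delta = O(1/k)$; I may then use $\tan\delta = \delta + O(\delta^3)$ and expand the right-hand side about $x=\beta$, noting that $P = 1 + O(k^{-2})$ and that $Q$ contains only odd inverse powers of $x$, so that $Q(\nu,\beta+\delta)/P(\nu,\beta+\delta) = \frac{4\nu^2-1}{8\beta} + O(k^{-3})$ with no intervening $O(k^{-2})$ term. This yields
\begin{equation*}
\delta = -\frac{4\nu^2-1}{8\beta} + O\!\left(\frac{1}{k^3}\right),
\qquad\text{so that}\qquad
j_{\nu,k} = \beta - \frac{4\nu^2-1}{8\beta} + O\!\left(\frac{1}{k^3}\right),
\end{equation*}
which is precisely the stated expansion once $\beta$ is rewritten as $k\pi + \frac{\pi}{2}(\nu-\tfrac12)$; continuing the iteration with the next terms of $P$ and $Q$ reproduces the higher McMahon coefficients.

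The main obstacle is \emph{making the $O(1/k^3)$ error rigorous rather than merely formal}. Concretely, I must confirm that the remainders in the $P$ and $Q$ series are genuinely $O(x^{-2})$ and $O(x^{-3})$ with constants uniform over the range of $x$ in play, that the map $\delta \mapsto \arctan\bigl(-Q(\nu,\beta+\delta)/P(\nu,\beta+\delta)\bigr)$ is a contraction on a shrinking neighborhood of $0$ (its Lipschitz constant is $O(k^{-2})$) so that its unique small fixed point really is the $k$-th zero, guaranteeing that no root is missed or spuriously created, and that the replacement $\tan\delta = \delta + O(\delta^3)$ injects only an $O(1/k^3)$ error. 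A Pr\"ufer amplitude--phase reformulation of Bessel's equation offers an alternative route in which the phase is manifestly monotone, making the one-to-one correspondence between phase crossings and zeros transparent and thereby settling the ``no missed roots'' point cleanly.
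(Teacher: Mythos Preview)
The paper does not prove this lemma at all: it simply states McMahon's expansion and cites Watson (p.~506) as the source. Your proposal is therefore not comparable to a proof in the paper but rather supplies what the paper omits.

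Your sketch is the standard derivation and is correct in outline: reduce $J_\nu(x)=0$ via the large-$x$ representation to $\cot\omega = Q(\nu,x)/P(\nu,x)$, locate the leading approximation $\beta=k\pi+\frac{\pi}{2}(\nu-\frac12)$, and iterate. This is essentially Watson's own argument in the pages you reference. The observation that $P$ carries only even inverse powers and $Q$ only odd ones is exactly what kills a potential $O(k^{-2})$ term and produces the clean $O(k^{-3})$ remainder. Your stated obstacles (uniform remainder control, contraction, and the zero-counting) are the right ones to flag; the Pr\"ufer/phase approach you mention is indeed the cleanest way to certify that the $k$-th fixed point corresponds to the $k$-th positive zero and that none are skipped. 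For the purposes of this paper, though, the citation suffices.
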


\begin{lemma}
\label{lem:maxima}
The absolute extrema of any Bessel function $J_\nu(z)$ progressively
decrease \cite{Watson} (p. 488), i.e.
\begin{equation}
|J_\nu(j'_{\nu,1})| > |J_\nu(j'_{\nu,2})| > |J_\nu(j'_{\nu,3})| > ...
\end{equation}
\end{lemma}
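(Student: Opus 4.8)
The plan is to prove the statement by a classical monotone-energy argument for Bessel's equation. Set $g(x) = 1 - \nu^2/x^2$, so that Bessel's equation takes the form $J_\nu''(x) = -x^{-1} J_\nu'(x) - g(x) J_\nu(x)$, and consider the auxiliary function
\begin{equation}
\Phi(x) = J_\nu(x)^2 + \frac{J_\nu'(x)^2}{g(x)} ,
\end{equation}
which is smooth on the half-line $x > \nu$ (where $g > 0$) and satisfies $\Phi(j'_{\nu,k}) = J_\nu(j'_{\nu,k})^2$ at every stationary point $j'_{\nu,k}$ of $J_\nu$, since $J_\nu'$ vanishes there. If I can show that $\Phi$ is strictly decreasing on $(\nu,\infty)$, the lemma follows at once by evaluating $\Phi$ along the increasing sequence $j'_{\nu,1} < j'_{\nu,2} < \cdots$ and taking square roots.

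First I would differentiate $\Phi$ and use Bessel's equation to eliminate $J_\nu''$. A short computation gives
\begin{equation}
\Phi'(x) = -\,J_\nu'(x)^2 \left( \frac{2}{x\,g(x)} + \frac{g'(x)}{g(x)^2} \right) .
\end{equation}
Since $g'(x) = 2\nu^2/x^3 \geq 0$ and $g(x) > 0$ for $x > \nu$, the parenthesised factor is strictly positive, so $\Phi'(x) \leq 0$ on $(\nu,\infty)$, with equality only at the isolated zeros of $J_\nu'$; hence $\Phi$ is strictly decreasing there. The case $\nu = 0$ is identical, with $g \equiv 1$ and $\Phi'(x) = -2 J_0'(x)^2/x < 0$ on $(0,\infty)$.

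It then remains to check that all extrema of $J_\nu$ lie in the region where $\Phi$ is well defined and strictly decreasing, i.e. that $j'_{\nu,k} \geq j'_{\nu,1} > \nu$ for every $k$. For the integer orders relevant to this paper this is exactly the left-hand inequality in Lemma \ref{lem:Chambers}; in general it is a standard fact about the first positive zero of $J_\nu'$. Combining this with the monotonicity of $\Phi$ gives
\begin{equation}
J_\nu(j'_{\nu,1})^2 = \Phi(j'_{\nu,1}) > \Phi(j'_{\nu,2}) = J_\nu(j'_{\nu,2})^2 > \cdots ,
\end{equation}
which is the claimed chain of inequalities.

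I expect the only delicate points to be the sign bookkeeping in the formula for $\Phi'(x)$ — a routine but error-prone rearrangement of Bessel's equation — and the verification that no stationary point of $J_\nu$ falls at or below $x = \nu$, where $\Phi$ would be singular; the latter is precisely where an input on the location of zeros, such as Lemma \ref{lem:Chambers}, enters. Everything else is elementary calculus, so I do not anticipate a serious obstacle.
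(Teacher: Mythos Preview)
Your proposal is correct. The energy function $\Phi(x)=J_\nu(x)^2+J_\nu'(x)^2/g(x)$ with $g(x)=1-\nu^2/x^2$ is exactly the classical device, and your computation of $\Phi'$ is right; combined with $j'_{\nu,1}>\nu$ (Lemma~\ref{lem:Chambers} for integer orders) this gives the chain of inequalities.

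The paper itself does not supply a proof of this lemma: it simply quotes the result from Watson's treatise (p.~488), where precisely this monotone-energy argument appears. It is worth noting, however, that the paper \emph{does} write out the analogous argument in full for spherical Bessel functions in Lemma~\ref{lem:extrema_3d}, using
\[
\Lambda_n(x)=j_n^2(x)+\frac{x^2}{x^2-n(n+1)}\bigl(j_n'(x)\bigr)^2,
\]
which is the exact counterpart of your $\Phi$. So your approach is not only correct but is the same technique the authors themselves deploy when they do choose to give a proof; you have simply filled in what they left as a citation.
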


\begin{lemma}
\label{lem:Robin}
The $k$-th positive zero $\alpha_{nk}$ of the function $J'_n(z) + h
J_n(z)$ for any $h > 0$ lies between the $k$-th positive zeros
$j_{n,k}$ and $j'_{n,k}$ of the Bessel function $J_n(z)$ and its
derivative $J'_n(z)$:
\begin{equation}
j'_{n,k} < \alpha_{nk} < j_{n,k}.
\end{equation}
\end{lemma}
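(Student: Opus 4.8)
The plan is to reduce the statement to the monotonicity of the logarithmic derivative $\psi_n(z) := J_n'(z)/J_n(z)$ on the intervals delimited by consecutive positive zeros of $J_n$. First I would record the classical interlacing of the positive zeros of $J_n$ and $J_n'$: Lemma~\ref{lem:Chambers} provides the start $0 < j'_{n,1} < j_{n,1}$, and in general $0 < j'_{n,1} < j_{n,1} < j'_{n,2} < j_{n,2} < \cdots$ \cite{Watson}. Thus, for $n \ge 1$, the half-line $(0,\infty)$ is split by the zeros $j_{n,k}$ into intervals $I_k = (j_{n,k-1}, j_{n,k})$ (with $j_{n,0} := 0$), and each $I_k$ contains exactly one zero $j'_{n,k}$ of $J_n'$. (For $n=0$ one has $j'_{0,k}=j_{1,k}$ and the interlacing is reversed, so the stated double inequality is meant for $n\ge 1$, which is the only case used in the applications.)

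The key step is to show that $\psi_n$ is strictly decreasing on each $I_k$. For this I would start from the Hadamard product $J_n(z) = \frac{(z/2)^n}{\Gamma(n+1)}\prod_{m\ge 1}\bigl(1 - z^2/j_{n,m}^2\bigr)$ \cite{Watson}; logarithmic differentiation gives the Mittag--Leffler expansion $\psi_n(z) = \frac{n}{z} - \sum_{m\ge 1} \frac{2z}{j_{n,m}^2 - z^2}$, and a further termwise differentiation (legitimate by absolute, locally uniform convergence off the poles, since $j_{n,m}\sim m\pi$) yields $\psi_n'(z) = -\frac{n}{z^2} - \sum_{m\ge 1} \frac{2\,(j_{n,m}^2 + z^2)}{(j_{n,m}^2 - z^2)^2} < 0$ for every $z>0$ with $z\notin\{j_{n,m}\}$. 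Combined with the boundary behaviour $\psi_n \to +\infty$ at the left end of $I_k$ (at $0^+$ because $\psi_n\sim n/z$ for $n\ge 1$, and at $j_{n,k-1}^+$ because $J_n$ changes sign there while $J_n'$ does not) and $\psi_n\to-\infty$ at $j_{n,k}^-$, this makes $\psi_n$ a strictly decreasing bijection of $I_k$ onto $\R$.

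To finish, note that $F_h := J_n' + h J_n$ cannot vanish at any $j_{n,k}$ (there $F_h = J_n'\neq 0$, the zeros of $J_n$ being simple), so every positive zero of $F_h$ solves $\psi_n(z) = -h$ and lies in some $I_k$; by the strict monotonicity it is the unique solution in that $I_k$. Ordering these solutions therefore identifies $\alpha_{nk}$ with the one lying in $I_k$, and since $\psi_n(j'_{n,k}) = 0 > -h$ while $\psi_n$ is decreasing, this forces $j'_{n,k} < \alpha_{nk} < j_{n,k}$. Running the same argument with $h=0$ re-proves the interlacing used at the outset, so one may fold it into the proof rather than quote it.

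The only nontrivial ingredient is the sign of $\psi_n'$; everything else is sign bookkeeping. I expect the partial-fraction route above to be the cleanest way to see $\psi_n'<0$ (it works for all real orders $\ge 0$), whereas the direct Riccati identity from Bessel's equation, $\psi_n' = -\tfrac1z\psi_n - (1 - n^2/z^2) - \psi_n^2$, is awkward because of the sign-indefinite term $-\psi_n/z$ once $z>n$. An alternative high-level argument is to read $j'_{n,k}$, $\alpha_{nk}$, $j_{n,k}$ as the (Neumann, Robin, Dirichlet) eigenvalues of the radial Sturm--Liouville problem on $(0,1)$ and invoke their monotone dependence on the Robin parameter, but the elementary computation above is self-contained.
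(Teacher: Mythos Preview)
Your argument is correct and genuinely different from the paper's. The paper's proof is a single sentence: it reads $j'_{n,k}$, $\alpha_{nk}$, and $j_{n,k}$ as the square roots of the $k$-th Neumann, Robin, and Dirichlet eigenvalues in the order-$n$ angular sector and invokes the minimax principle to get monotone dependence on $h$. That is precisely the ``alternative high-level argument'' you mention and then set aside. Your route via the Mittag--Leffler expansion of $\psi_n = J_n'/J_n$ and the explicit computation $\psi_n'(z)<0$ is longer but fully elementary and self-contained; it needs no spectral theory and in fact works for all real orders $\nu\ge 0$, whereas the minimax route implicitly requires identifying the relevant Sturm--Liouville problem and its eigenvalue ordering. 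Conversely, the minimax argument is robust to the precise form of the radial operator and immediately transfers to the spherical case (Lemma~\ref{lem:jn_3d}), while your approach would need to be redone for $j_n$.

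You are also right to flag the $n=0$ case: since $j'_{0,k}=j_{1,k}$ interlaces with $j_{0,k}$ in the reversed order, the inequality $j'_{0,k}<\alpha_{0k}<j_{0,k}$ fails as stated (the indexing of Neumann eigenvalues is shifted by the constant mode). The paper does not address this, but, as you note, the lemma is only invoked for large $n$ in the proof of Theorem~\ref{theo:disk}, so the oversight is harmless there.
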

\begin{proof} 
This is a direct consequence of the minimax principle that ensures the
monotonous increase of eigenvalues with the parameter $h$
\cite{Courant}.
\end{proof}

Using these lemmas, we prove Theorem \ref{theo:disk}.
\begin{proof}
The proof formalizes the idea that the eigenfunction $u_{nk}$ is small
in the large subdomain $D_{nk} = \{ x\in D ~:~ |x| < R
d_n/\alpha_{nk}\}$ (with $d_n = n - n^{2/3}$) and large in the small
subdomain $A_{nk} = \{ x\in D ~:~R n/\alpha_{nk} < |x| < R
j'_{n,1}/\alpha_{nk}\}$.  Since $A_{nk}\subset D$, we have for $1\leq
p < \infty$
\begin{equation*}
\frac{\|u_{nk}\|_{L_p(D_{nk})}^p}{\|u_{nk}\|_{L_p(D)}^p} < \frac{\|u_{nk}\|_{L_p(D_{nk})}^p}{\|u_{nk}\|_{L_p(A_{nk})}^p} = 
\frac{\int\limits_0^{Rd_n/\alpha_{nk}} dr ~r~ |J_n(r \alpha_{nk}/R)|^p}
{\int\limits_{R n/\alpha_{nk}}^{R j'_{n,1}/\alpha_{nk}} dr~ r~ |J_n(r \alpha_{nk}/R)|^p} = 
\frac{\int\limits_0^{d_n} dz z~ |J_n(z)|^p}{\int\limits_{n}^{j'_{n,1}} dz z~ |J_n(z)|^p} .
\end{equation*}
The numerator can be bounded by the inequality (\ref{eq:disk_lemma1})
with $\epsilon = 1/3$:
\begin{equation*}
\int\limits_{0}^{d_n} dz z~ |J_n(z)|^p < \left(2^{-n^{1/3}/3}\right)^p ~ \frac{d_n^2}{2} <  2^{-p n^{1/3}/3} ~ \frac{n^2}{2} 
\qquad (n=1,2,3,...).
\end{equation*}
In order to bound the denominator, we use the inequalities
(\ref{eq:Jn(n)}, \ref{eq:jprime_n}) and the fact that $J_n(z)$
increases on the interval $[n,j'_{n,1}]$ (up to the first maximum at
$j'_{n,1}$):
\begin{equation*}
\int\limits_{n}^{j'_{n,1}} dz z~ |J_n(z)|^p > |J_n(n)|^p \frac{(j'_{n,1})^2 - n^2}{2} 
>  [C_1 n^{-1/3}]^p ~ \frac{(n+C_2n^{1/3})^2 - n^2}{2} >  C_1^p C_2 n^{(4-p)/3}  
\end{equation*}
for $n$ large enough, from which
\begin{equation*}
\frac{\|u_{nk}\|_{L_p(D_{nk})}}{\|u_{nk}\|_{L_p(A_{nk})}} < \frac{n^{\frac13 + \frac{2}{3p}} 2^{-n^{1/3}/3}}{C_1 (2C_2)^{1/p}} \qquad (n\gg 1)
\end{equation*}
that implies Eq. (\ref{eq:eigen_disk_L2}).

For $p = \infty$, one has
\begin{equation*}
\frac{\|u_{nk}\|_{L_\infty(D_{nk})}}{\|u_{nk}\|_{L_\infty(D)}} < \frac{\|u_{nk}\|_{L_\infty(D_{nk})}}{\|u_{nk}\|_{L_\infty(A_{nk})}} = 
\frac{\max\limits_{0< z < d_n} |J_n(z)|}{\max\limits_{n < z < j'_{n,1}} |J_n(z)|} .
\end{equation*}
Using the same bounds as above, one gets
\begin{equation*}
\max\limits_{0< z < d_n} |J_n(z)| < 2^{-n^{1/3}/3},  \qquad
\max\limits_{n < z < j'_{n,1}} |J_n(z)| > J_n(n) > C_1 n^{-1/3} 
\end{equation*}
that implies Eq. (\ref{eq:eigen_disk_L2}).

Finally, from Lemmas \ref{lem:Olver} and \ref{lem:Robin}, we have
\begin{equation*}
1 > \frac{\mu_2(D_{nk})}{\mu_2(D)} = \left(\frac{d_n}{\alpha_{nk}}\right)^2 > \frac{d_n^2}{j_{n,k}^2} > 
\frac{(n - n^{2/3})^2}{(n + c_k n^{1/3})^2}  \qquad (n\gg 1)
\end{equation*}
so that the ratio of the areas tends to $1$ as $n$ goes to infinity.
\end{proof}

Let us prove Theorem \ref{theo:focusingmodes}.
\begin{proof}
For $p = \infty$, the explicit representation (\ref{eq:u_disk}) of
eigenfunctions leads to
\begin{equation}
\label{eq:A_auxil1}
 \frac{\|u_{nk}\|_{L_{\infty}(D(R))} }{ \|u_{nk}\|_{L_{\infty}(D)} }  
= \frac{\max\limits_{r\in[R,1]}{\left|J_n(\alpha_{nk} r)\right|}}{\max\limits_{r\in[0,1]}{\left|J_n(\alpha_{nk} r)\right|}}  
= \frac{\max\limits_{r\in[R,1]}{\left|J_n(\alpha_{nk} r)\right|}}{\left|J_n(j'_{n,1})\right|} ,
\end{equation}
where we used the fact that the first maximum (at $j'_{n,1}$) is the
largest (Lemma \ref{lem:maxima}).  Since $\lim\limits_{k\to \infty}
{\alpha_{nk}} = \infty$, the Bessel function $J_n(\alpha_{nk} r)$ with
$k\gg 1$ can be approximated in the interval $[R,1]$ as
\cite{Abramowitz}
\begin{equation}
\label{eq:BesselapproxOne} 
J_n(\alpha_{nk} r) \approx \sqrt{\frac{2}{\pi \alpha_{nk} r}} \cos\left(\alpha_{nk} r - \frac{n\pi}{2} - \frac{\pi}{4}\right) .
\end{equation}
It means that there exists a positive integer $K_0$ and a constant
$A_0>0$ (e.g., $A_0 = 3/\pi$) such that
\begin{equation}
\label{eq:Bessel_ineq}
\left|{J_n(\alpha_{nk} r)}\right| < \sqrt{\frac{A_0}{\alpha_{nk} r}} \leq \sqrt{\frac{A_0}{\alpha_{nk} R}}, \quad \forall r\in[R,1], ~ k > K_0.
\end{equation}
Given that the denominator in Eq. (\ref{eq:A_auxil1}) is fixed, while
the numerator decays as $\alpha_{nk}^{-1/2}$, one gets
Eq. (\ref{eq:focusingNorm}) for $p = \infty$.  

For $p > 4$, the ratio of $L_p$ norms is 
\begin{equation}
\frac{\| u_{nk} \|_{L_p(D(R))}^p}{\| u_{nk} \|_{L_p(D)}^p} = 
\frac{\int\limits_{\alpha_{nk} R}^{\alpha_{nk}} dr ~ r ~ |J_n(r)|^p}{\int\limits_{0}^{\alpha_{nk}} dr ~ r ~ |J_n(r)|^p} .
\end{equation}
The inequality (\ref{eq:Bessel_ineq}) allows one to bound the
numerator as
\begin{equation*}
\int\limits_{\alpha_{nk} R}^{\alpha_{nk}} dr ~ r ~ |J_n(r)|^p \leq A_0^{p/2} \int\limits_{\alpha_{nk} R}^{\alpha_{nk}} dr~ r^{1-p/2}
= \frac{A_0^{p/2} [R^{2-p/2} - 1]}{p/2-2} ~ \alpha_{nk}^{2-p/2} ,
\end{equation*}
while the denominator can be simply bounded from below by a constant 
\begin{equation*}
\int\limits_{0}^{\alpha_{nk}} dr ~ r ~ |J_n(r)|^p \geq \int\limits_{0}^{1} dr ~ r ~ |J_n(r)|^p .
\end{equation*}
As a consequence, the ratio of $L_p$ norms in
Eq. (\ref{eq:focusingNorm}) goes to $0$ as $k$ increases.

For $1\leq p < 4$, one can write
\begin{equation*}
\frac{\| u_{nk} \|_{L_p(D(R))}^p}{\| u_{nk} \|_{L_p(D)}^p} = 1 - \frac{f_{p,n}(\alpha_{nk} R)}{f_{p,n}(\alpha_{nk})} ,
\end{equation*}
where
\begin{equation}
\label{eq:fp_def}
f_{p,n}(z) \equiv \int\limits_0^z dr ~ r ~ |J_n(r)|^p .
\end{equation}
As discussed in Remark \ref{theo:fp}, the function $f_{p,n}(z)$
behaves asymptotically as $z^{2-p/2}$ for large $z$, i.e., there
exists $0 < c_{p,n} < \infty$ such that for any $\ve > 0$, there
exists $z_0>0$ such that for any $z > z_0$ [cf. Eq. (\ref{eq:fp})]
\begin{equation*}
(c_{p,n} - \ve) z^{2-p/2} \leq f_{p,n}(z) \leq (c_{p,n} + \ve) z^{2-p/2} ,
\end{equation*}
from which one immediately deduces
\begin{equation*}
\frac{c_{p,n} - \ve}{c_{p,n} + \ve} R^{2-p/2} \leq \frac{f_{p,n}(\alpha_{nk} R)}{f_{p,n}(\alpha_{nk})} \leq \frac{c_{p,n} + \ve}{c_{p,n} - \ve} R^{2-p/2} .
\end{equation*}
As a consequence, for any $R < 1$, one can always choose $\ve$ such
that the right-hand side is strictly smaller than $1$ so that the
ratio of $L_p$ norms is then strictly positive.  Moreover, the
limiting value is $1 - R^{2-p/2}$ that completes the proof of
Eq. (\ref{eq:focusingNorm}) for $1\leq p < 4$. 
\end{proof}

\begin{remark}
\label{theo:fp}
For $1\leq p < 4$, the function $f_{p,n}(z)$ defined by
Eq. (\ref{eq:fp_def}) asymptotically behaves as $z^{2-p/2}$ for large
$z$, i.e., the limit
\begin{equation}
\label{eq:fp}
c_{p,n} = \lim\limits_{z\to\infty} \frac{f_{p,n}(z)}{z^{2-p/2}}
\end{equation}
exists, is finite and strictly positive: $0 < c_{p,n} < \infty$.
\end{remark}
Although this result is naturally expected from the asymptotic
behavior (\ref{eq:BesselapproxOne}) of Bessel functions, its rigorous
proof is beyond the scope of the paper.  An upper bound for the limit
(i.e., $c_{p,n} <\infty$) can be easily deduced from the inequality
(\ref{eq:Bessel_ineq}).  A lower strictly positive bound (i.e.,
$c_{p,n} > 0$) would require more careful estimations.  The most
difficult part consists in proving the existence of the limit, as the
numerical computation of the function $f_{p,n}(z)/z^{2-p/2}$ at large
$z$ shows its oscillatory behavior with a slowly decaying amplitude.
Since the statement of Theorem \ref{theo:focusingmodes} for $1 \leq p
< 4$ relies on this conjectural result, Eq. (\ref{eq:focusingNorm})
was also checked numerically and presented on
Fig. \ref{fig:ratio_focusing}.

\begin{figure}
\begin{center}
\includegraphics[width=60mm]{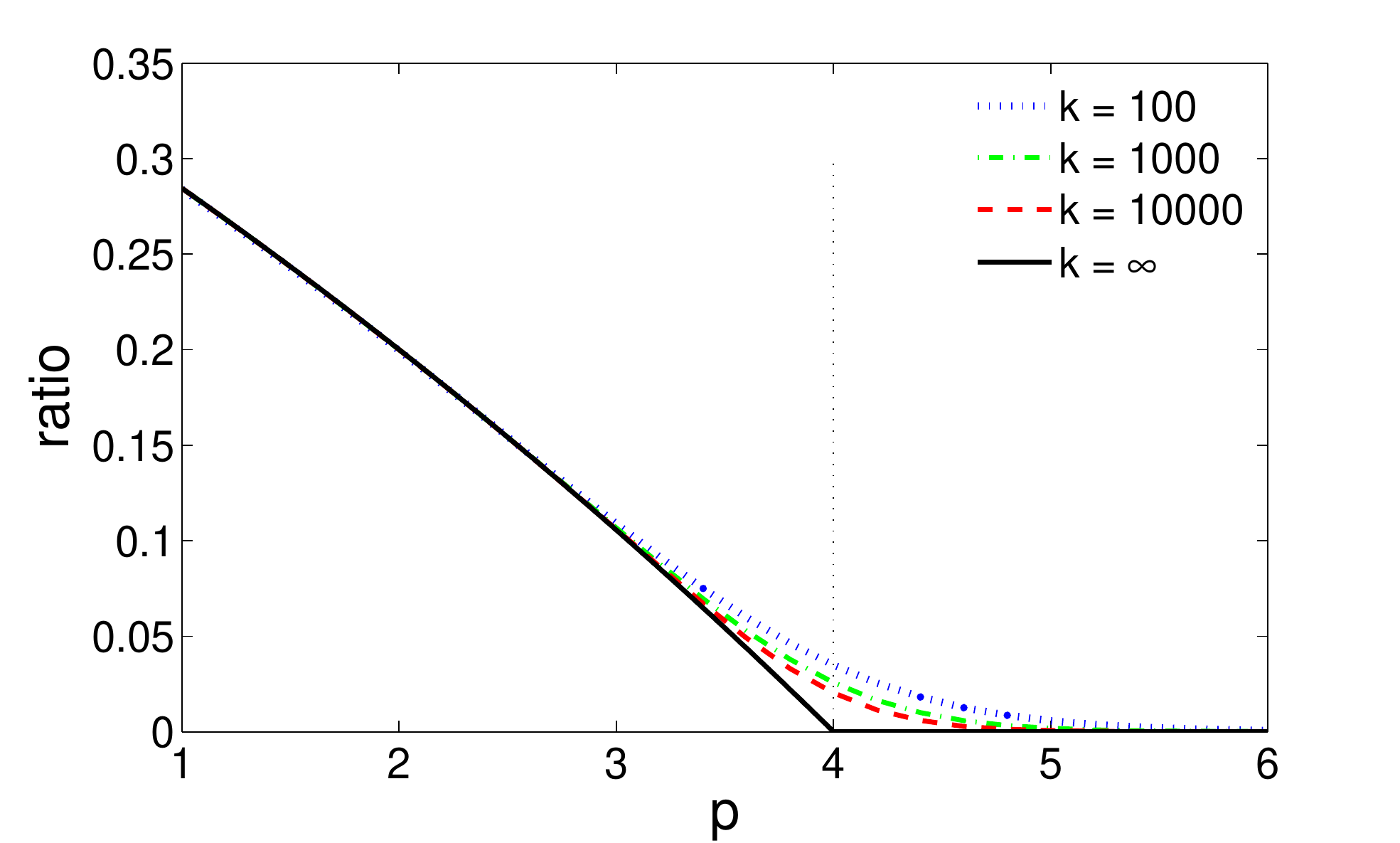}
\includegraphics[width=60mm]{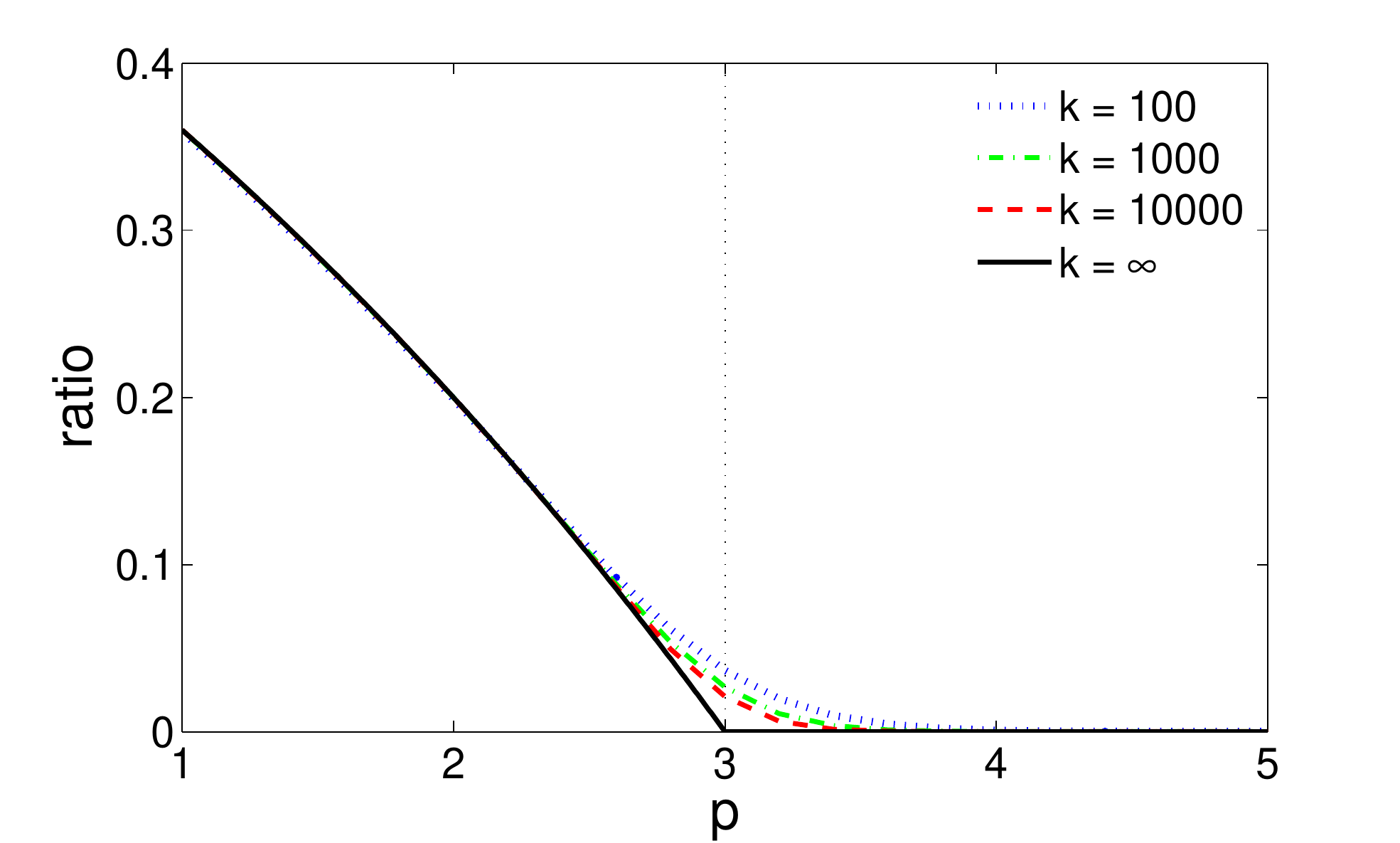}
\end{center}
\caption{
The ratio $\frac{\|u_{nk}\|_{L_p(D(R))}^p}{\|u_{nk}\|_{L_p(D)}^p}$ as a
function of $p$ for $n = 1$ and $R = 0.8$, in two dimensions (left)
and three dimensions (right).  Three color curves correspond to three
eigenfunctions with $k = 100$, $k = 1000$ and $k = 10000$, while the
solid black curve shows the theoretical limit as $k\to\infty$ given by
Eq. (\ref{eq:focusingNorm}) in 2D and Eq. (\ref{eq:focusingNorm_3d})
in 3D.  Note a slower convergence (deviations) for $p\sim 4$ in 2D and
$p\sim 3$ in 3D.} 
\label{fig:ratio_focusing}
\end{figure}

\section{Proofs for a ball}
\label{sec:Aball}

In this Appendix, we generalize the previous estimates to a ball $B =
\{ x\in\R^3~:~ |x| < R\}$ of radius $R > 0$.  We recall that the
Laplacian eigenfunctions in spherical coordinates are
\begin{equation}
\label{eq:u_ball}
u_{nkl}(r,\theta,\varphi) = j_n(\alpha_{nk} r/R) P_n^l(\cos\theta) e^{il\varphi},
\end{equation}
where $j_n(z)$ are the spherical Bessel functions of the first kind
(not to be confused with zeros $j_{n,k}$),
\begin{equation}
\label{eq:Bessels}
j_n(z) \equiv \sqrt{\frac{\pi}{2z}} J_{n+1/2}(z), 
\end{equation}
$P_n^l(z)$ are associated Legendre polynomials, and $\alpha_{nk}$ are
the positive zeros of $j_n(z)$ (Dirichlet), $j'_n(z)$ (Neumann) or
$j'_n(z) + h j_n(z)$ for some $h > 0$ (Robin), which are enumerated by
$k = 1,2,3,...$ for each $n = 0,1,2,...$.  We will derive the
estimates that do not depend on the angular coordinates $\theta$ and
$\varphi$, so that the last index $l$ will be omitted.

We start by recalling and extending several classical estimates.
\begin{lemma}
For any $\nu \in \R_+$ and any $x \in (0,1)$, the Kapteyn's inequality
holds \cite{Siegel53}
\begin{equation}
\label{eq:Kapteyn}
0 < J_{\nu}(\nu x) <  \frac{x^{\nu} \exp(\nu \sqrt{1-x^2})}{\left({1+ \sqrt{1-x^2}}\right)^{\nu}}.
\end{equation}
\end{lemma}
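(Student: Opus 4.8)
The plan is to prove the Kapteyn inequality \eqref{eq:Kapteyn} by exhibiting the stated bound as the leading term of the standard Bessel power series and then controlling the remainder. First I would recall that for $\nu > 0$ and $x \in (0,1)$ the series
\begin{equation*}
J_\nu(\nu x) = \sum_{m=0}^{\infty} \frac{(-1)^m (\nu x/2)^{\nu+2m}}{m!\,\Gamma(\nu+m+1)}
\end{equation*}
is absolutely convergent; the positivity $J_\nu(\nu x) > 0$ for $x \in (0,1)$ is classical (it follows, e.g., from the fact that $\nu x < \nu < j_{\nu,1}$, so the argument lies to the left of the first positive zero, where $J_\nu$ is positive). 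So the real content is the upper bound.

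The cleanest route is via the integral/generating-function machinery that produces Kapteyn-type estimates: write $J_\nu(\nu x)$ using Schläfli's or Sommerfeld's contour integral, or equivalently use the saddle-point inequality coming from the representation
\begin{equation*}
J_\nu(\nu x) = \frac{1}{2\pi i}\oint \exp\!\left(\nu\bigl(\tfrac{x}{2}(t - t^{-1}) - \log t\bigr)\right)\frac{dt}{t},
\end{equation*}
and bound the integrand by its value at the real saddle point $t_0$ solving $\tfrac{x}{2}(1 + t_0^{-2}) = t_0^{-1}$, i.e. $t_0 = \tfrac{1+\sqrt{1-x^2}}{x}$. Evaluating the exponent at $t_0$ gives exactly $\sqrt{1-x^2} - \log\!\frac{1+\sqrt{1-x^2}}{x}$, so $\exp(\nu(\cdots)) = \frac{x^\nu e^{\nu\sqrt{1-x^2}}}{(1+\sqrt{1-x^2})^\nu}$, which is the claimed right-hand side. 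One then has to check that the modulus of the integrand is genuinely maximized on the chosen contour at $t_0$ and that the remaining contour factor contributes at most $1$ — this is where the inequality (as opposed to an asymptotic $\sim$) comes from, and it is the step I expect to be the main obstacle, since it requires choosing the contour carefully (the circle $|t| = t_0$ works, using convexity of $\cos$ in the angular integration) and arguing the strict decay of $\exp(\Re(\cdots))$ away from the real axis.

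An alternative, more elementary plan that avoids contour integrals is induction/monotonicity in $\nu$ combined with a differential-inequality argument: fix $x$, set $g(\nu) = \log J_\nu(\nu x) - \nu\log x - \nu\sqrt{1-x^2} + \nu\log(1+\sqrt{1-x^2})$, show $g$ is non-increasing in $\nu$ and that $\limsup_{\nu\to\infty} g(\nu) \le 0$ by the uniform (Debye) asymptotics of $J_\nu(\nu x)$, which for $x<1$ give precisely $J_\nu(\nu x) \sim (2\pi\nu\sqrt{1-x^2})^{-1/2}\,\frac{x^\nu e^{\nu\sqrt{1-x^2}}}{(1+\sqrt{1-x^2})^\nu}$, whose prefactor is $<1$ for large $\nu$. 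Either way the key identity to keep in hand is the closed form of the Debye exponent; the delicate part is turning an asymptotic equivalence into a clean one-sided inequality valid for \emph{all} $\nu > 0$, which is exactly why one ultimately prefers the contour argument with an honest pointwise bound on the integrand. Since this estimate is quoted from \cite{Siegel53}, it is legitimate here simply to cite that reference and sketch the saddle-point mechanism above as the reason the particular exponent $\sqrt{1-x^2} - \log\frac{1+\sqrt{1-x^2}}{x}$ appears.
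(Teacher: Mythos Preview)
The paper does not prove this lemma at all: it is stated with a citation to \cite{Siegel53} (and implicitly to Watson's treatise, where Kapteyn's inequality appears in Chapter~8), and the text moves immediately on to the next lemma. So there is nothing to compare against; your proposal already goes further than the paper does.

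Your contour-integral sketch is the right mechanism and correctly identifies the saddle point $t_0=(1+\sqrt{1-x^2})/x$ and the value of the exponent there. One caution: for non-integer $\nu$ the Schl\"afli representation is not an integral over a closed circle but over a contour that comes in from $-\infty$, encircles the origin, and returns to $-\infty$ (to handle the branch of $t^{-\nu-1}$), so ``take the circle $|t|=t_0$'' needs to be replaced by the appropriate keyhole/Hankel contour before the pointwise-maximum argument goes through. Your concluding remark is exactly right: in the context of this paper it suffices to cite the result, and that is precisely what the authors do.
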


Now we can prove the following 
\begin{lemma}
For any $\nu > 1$ and $0<\epsilon<2/3$, one has
\begin{equation}
 0 < j_{\nu-\frac12}(x) < \sqrt{\frac{\pi}{2\nu}} ~ \exp\left(\frac{2}{3}- \frac{1}{3} \nu^{\epsilon}\right)
\qquad  \forall~ x \in (0, \nu - \nu^{\epsilon + 1/3}) .
\end{equation}
\label{lem:strongsphere}
\end{lemma}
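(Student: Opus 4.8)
The plan is to reduce Lemma~\ref{lem:strongsphere} to the Kapteyn inequality (\ref{eq:Kapteyn}) together with the definition (\ref{eq:Bessels}) of the spherical Bessel function. Writing $n = \nu - 1/2$ in $j_n(x) = \sqrt{\pi/(2x)}\, J_{n+1/2}(x) = \sqrt{\pi/(2x)}\, J_{\nu}(x)$, I first substitute $x = \nu z$ with $z\in(0,1)$, so that $j_{\nu-1/2}(\nu z) = \sqrt{\pi/(2\nu z)}\, J_\nu(\nu z)$. The range $x\in(0,\nu - \nu^{\epsilon+1/3})$ corresponds to $z\in(0, 1 - \nu^{\epsilon-2/3})$, which is exactly the regime where Lemma~\ref{lem:Kroger1}-type estimates for $J_n(nz)$ were used for the disk; here I use instead the sharper, index-continuous Kapteyn bound, valid for all real $\nu>0$.

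Next I estimate the Kapteyn right-hand side $g(z) := \dfrac{z^\nu \exp(\nu\sqrt{1-z^2})}{(1+\sqrt{1-z^2})^\nu}$ on the interval $z\in(0, 1 - \nu^{\epsilon-2/3})$. Since $g$ is increasing in $z$ (the function $z\mapsto z/(1+\sqrt{1-z^2})\cdot e^{\sqrt{1-z^2}}$ has positive logarithmic derivative on $(0,1)$, as a short calculation shows), it suffices to bound $g$ at the endpoint $z_0 = 1 - \nu^{\epsilon-2/3}$. Taking logarithms, $\tfrac{1}{\nu}\log g(z_0) = \log z_0 + \sqrt{1-z_0^2} - \log(1+\sqrt{1-z_0^2})$. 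With $t := \sqrt{1-z_0^2}$, and noting $1-z_0 = \nu^{\epsilon-2/3}$ so that $t = \sqrt{(1-z_0)(1+z_0)} \le \sqrt{2}\,\nu^{(\epsilon-2/3)/2}$ is small, I Taylor-expand: $\log z_0 = \log(1-\nu^{\epsilon-2/3}) \approx -\nu^{\epsilon-2/3}$, and $t - \log(1+t) \approx t^2/2 \le (1-z_0)(1+z_0)/2 \le 1-z_0 = \nu^{\epsilon-2/3}$. The leading terms cancel to the relevant order, but I actually want a clean bound, so I combine $\sqrt{\pi/(2\nu z_0)} \le \sqrt{\pi/(2\nu)}\cdot(1-\nu^{\epsilon-2/3})^{-1/2}$ and the worst case $(1-\nu^{\epsilon-2/3})^{-1/2}$ contributes a bounded factor. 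The cleanest route is: show $\tfrac{1}{\nu}\log g(z_0) \le -\tfrac13\nu^{\epsilon-1} \cdot \nu^{?}$ — more precisely, establish $\log g(z_0) \le \tfrac{2}{3} - \tfrac13\nu^{\epsilon}$ after absorbing the $\sqrt{1/z_0}$ prefactor, matching the claimed form $\sqrt{\pi/(2\nu)}\exp(\tfrac23 - \tfrac13\nu^\epsilon)$.

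Concretely, I would prove the scalar inequality: for $u\in(0,1)$, $u = \nu^{\epsilon-2/3}$, one has $\nu\big[\log(1-u) + \sqrt{2u - u^2} - \log(1+\sqrt{2u-u^2})\big] - \tfrac12\log(1-u) \le \tfrac23 - \tfrac13\nu^\epsilon$. Using $\log(1-u)\le -u$, $\sqrt{2u-u^2} \le \sqrt{2u}$, and $t - \log(1+t) \le t^2/2 \le u$ for $t = \sqrt{2u-u^2}$, the bracket is $\le -u + u = 0$ is too lossy — I need to keep the negative $-u/2$ or so from a second-order term in $\log(1-u)$; carrying $\log(1-u) \le -u - u^2/2$ gives bracket $\le -u^2/2 = -\tfrac12\nu^{2\epsilon-4/3}$, hence $\nu\cdot(\text{bracket}) \le -\tfrac12 \nu^{2\epsilon - 1/3}$, which is much stronger than $-\tfrac13\nu^\epsilon$ once $2\epsilon - 1/3 > \epsilon$, i.e. $\epsilon > 1/3$; for the complementary range $0<\epsilon\le 1/3$ a cruder bound on the bracket (keeping only $\log(1-u)\le -u$ against a positive $O(u^{3/2})$ remainder) still beats $-\tfrac13\nu^\epsilon$ for $\nu$ large. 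The residual $-\tfrac12\log(1-u) = -\tfrac12\log(1-\nu^{\epsilon-2/3}) \le \tfrac12 \cdot 2\nu^{\epsilon-2/3} \le \tfrac23$ for $\nu$ large (here I need $\nu>1$ so $u<1$), which supplies the $\tfrac23$ in the exponent.

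The main obstacle I anticipate is the bookkeeping to make the exponent come out as exactly $\tfrac23 - \tfrac13\nu^\epsilon$ rather than merely $O(1) - c\,\nu^{\epsilon}$ for some unspecified constant: one must be careful that the monotonicity argument for $g$, the endpoint evaluation, the sign and size of the Taylor remainders in $\log(1-u)$ and $t-\log(1+t)$, and the $\sqrt{1/z_0}$ prefactor all conspire to give precisely that form uniformly in $\epsilon\in(0,2/3)$ and for all sufficiently large $\nu>1$. A secondary nuisance is verifying monotonicity of $g$ on the whole interval (including near $z=0$, where $z^\nu\to 0$, so $g$ is tiny there and the bound is trivially satisfied — one could even split the interval at, say, $z=1/2$ and argue crudely below it). I would present it by first recording the monotonicity and endpoint reduction, then the one scalar inequality as a sub-lemma, and finally assembling $j_{\nu-1/2}(x) = \sqrt{\pi/(2x)}J_\nu(x) < \sqrt{\pi/(2x)}\,g(x/\nu) \le \sqrt{\pi/(2\nu)}\exp(\tfrac23 - \tfrac13\nu^\epsilon)$ on the stated range.
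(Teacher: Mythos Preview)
Your overall strategy---reduce to Kapteyn's inequality and then bound the Kapteyn right-hand side---is exactly what the paper does. The divergence, and the genuine gap, is in how you estimate that right-hand side.

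Your quadratic Taylor bounds give only $\nu\cdot(\text{bracket}) \le -\nu^{2\epsilon-1/3}$, and as you note this suffices only for $\epsilon>1/3$. For $\epsilon<1/3$ (indeed already for $\epsilon<1/6$ the exponent $2\epsilon-1/3$ is negative, so your bound does not even tend to $-\infty$), the argument fails, and your proposed fallback is incorrect: the ``remainder'' $t-\log(1+t)$ is of order $t^2/2\approx u$, not $O(u^{3/2})$, so ``$\log(1-u)\le -u$ against a positive $O(u^{3/2})$ remainder'' does not yield a negative bracket of the required size. What is actually true is that the bracket $\log(1-u)+t-\log(1+t)$ has leading behaviour $-\tfrac{2\sqrt{2}}{3}u^{3/2}$ (the $u$- and $u^2$-terms all cancel), which would give $\nu\cdot(\text{bracket})\lesssim -\nu^{3\epsilon/2}$ and hence beat $-\tfrac13\nu^\epsilon$ for every $\epsilon>0$; but to see this you must retain the $-t^3/3$ term in $t-\log(1+t)$ and the $+t^4/4$ term as well, which your proposal does not do.

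The paper sidesteps this entirely by substituting $u=\sqrt{1-z^2}$ (not $1-z$), factoring $f(z)=\dfrac{e^{u/2}}{(1+u)^{1/2}}\Bigl[\dfrac{1-u}{1+u}\,e^{2u}\Bigr]^{\nu/2-1/4}$, and invoking the elementary scalar inequality $\dfrac{1-u}{1+u}\,e^{2u}<1-\tfrac{2}{3}u^3$ on $(0,1)$. This isolates the cubic decay in one clean step; since $z<1-\nu^{\epsilon-2/3}$ forces $u>\nu^{\epsilon/3-1/3}$ and hence $u^3>\nu^{\epsilon-1}$, the bound $(1-x)^{1/x}<e^{-1}$ converts $[1-\tfrac{2}{3}\nu^{\epsilon-1}]^{\nu/2-1/4}$ directly into $\exp(-\tfrac13\nu^\epsilon+\tfrac16\nu^{\epsilon-1})$. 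No monotonicity or endpoint reduction is needed, and the constants come out exactly as stated for \emph{all} $\nu>1$, not merely $\nu$ large.
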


\begin{proof}
Using the Kapteyn's inequality (\ref{eq:Kapteyn}) and taking $x = \nu
z$ with $z \in(0, 1 - \nu^{\epsilon-2/3})$, one has
\begin{equation*}
j_{\nu-1/2}(\nu z) = \sqrt{\frac{\pi}{2\nu}} \frac{J_{\nu}(\nu z)}{\sqrt{z}} < \sqrt{\frac{\pi}{2\nu}} f(z) ,
\hskip 10mm  \mathrm{with}  \hskip 5mm
 f(z) \equiv \frac{z^{\nu-1/2} e^{\nu\left({1-z^2}\right)^{1/2}}}{\left({1+ \sqrt{1-z^2}}\right)^{\nu}} .
\end{equation*}
Substituting $u=\sqrt{1-z^2} \in (0,1)$, one gets
\begin{equation*}
 f(z) = \left({\frac{\left({1-u^2}\right)^{1-\frac{1}{2\nu}} e^{2u}}{\left({1+u}\right)^2}}\right)^{\nu/2} = 
\frac{e^{\frac{u}{2}}}{\left({1+u}\right)^{\frac{1}{2}}} \left[{\left({\frac{1-u}{1+u}}\right)e^{2u}}\right]^{\frac{\nu}{2}-1/4} .
\end{equation*}
Using the inequality 
\begin{equation*}
 \frac{1-u}{1+u}~ e^{2u} < 1 - \frac{2}{3} u^3  \qquad \forall~ u \in (0,1) , 
\end{equation*}
one gets
\begin{equation*}
 f(z) < \frac{e^{\frac{1}{2}}}{\left({1+0}\right)^{\frac{1}{2}}} \left[{1 - \frac{2}{3} u^3}\right]^{\frac{\nu}{2}-1/4}
< e^{\frac{1}{2}}\left[{1 - \frac{2}{3} u^3}\right]^{\frac{\nu}{2}-1/4}.
\end{equation*}
Since $z < 1 - \nu^{\epsilon - 2/3}$, one has
\begin{equation*}
 u = \sqrt{1 - z^2} > \sqrt{1 - z} >  \nu ^{\frac{\epsilon}{2} - \frac{1}{3}} > \nu ^{\frac{\epsilon}{3} - \frac{1}{3}} ,
\end{equation*}
from which
\begin{equation*}
 f(z) < e^{\frac{1}{2}}\left[{1 - \frac{2}{3} \nu^{\epsilon-1}}\right]^{\frac{\nu}{2}-1/4} < 
e^{\frac{1}{2}}\left[{ \left({1 - \frac{2}{3} \nu^{\epsilon-1}}\right)^{\frac{3}{2}\nu^{1-\epsilon}}}
\right]^{\frac{\frac{\nu}{2}-1/4}{\frac{3}{2}\nu^{1-\epsilon}}} .
\end{equation*}
Since 
\begin{equation*}
  (1-x)^{\frac{1}{x}} < e^{-1}, \quad \forall  x \in (0,1), \quad \text{and} 
\quad 0 < \frac{2}{3} \nu^{\epsilon-1} < \frac{2}{3} <1,
\end{equation*}
one finally gets
\begin{equation*}
 f(z) < \exp\left(\frac{1}{2}-\frac{\frac{\nu}{2}-1/4}{\frac{3}{2}\nu^{1-\epsilon}}\right) < 
\exp\left(\frac{1}{2}+\frac{1}{6}\nu^{\epsilon-1} - \frac{1}{3} \nu^{\epsilon}\right) 
< \exp\left(\frac{2}{3}- \frac{1}{3} \nu^{\epsilon}\right) ,
\end{equation*}
that completes the proof.
\end{proof}

As a consequence, taking $\nu = n+1/2$ and $\epsilon = 1/3$, one has
\begin{lemma}
For $n = 1,2,...$ and any $z \in \bigl(0, n + 1/2 - (n + 1/2)^{2/3}\bigr)$,
\begin{equation}
\label{eq:jnz}
j_{n}(z) < \sqrt{\frac{\pi}{2n  + 1}} ~\exp\left(\frac{2}{3}- \frac{1}{3} \left({n  + \frac{1}{2}}\right)^{1/3}\right).
\end{equation}
\end{lemma}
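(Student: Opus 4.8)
The plan is to read off this estimate as the special case $\nu = n + \tfrac12$, $\epsilon = \tfrac13$ of Lemma~\ref{lem:strongsphere}; no new analytic work is needed beyond checking that the hypotheses apply and bookkeeping the substitutions. (This is why the text introduces it with ``as a consequence''.)

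First I would check admissibility of the parameters. Since $n \geq 1$, we have $\nu = n + \tfrac12 \geq \tfrac32 > 1$, which is the hypothesis $\nu > 1$ of Lemma~\ref{lem:strongsphere}; and $\epsilon = \tfrac13$ lies in the required interval $(0, \tfrac23)$. Hence Lemma~\ref{lem:strongsphere} applies with this choice.

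Next I would perform the substitutions. With $\nu = n + \tfrac12$ one has $\nu - \tfrac12 = n$, so the left-hand side $j_{\nu - 1/2}(x)$ becomes $j_n(x)$; the prefactor $\sqrt{\pi/(2\nu)}$ becomes $\sqrt{\pi/(2n+1)}$; the exponent $\tfrac23 - \tfrac13 \nu^{\epsilon}$ becomes $\tfrac23 - \tfrac13 (n + \tfrac12)^{1/3}$; and the admissible range $x \in (0,\, \nu - \nu^{\epsilon + 1/3})$ becomes $x \in (0,\, (n+\tfrac12) - (n+\tfrac12)^{2/3})$, because $\epsilon + \tfrac13 = \tfrac23$. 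Renaming the variable $x$ as $z$, this is precisely the claimed inequality~(\ref{eq:jnz}) on precisely the claimed interval.

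There is essentially no obstacle here, since the analytic content is entirely contained in Lemma~\ref{lem:strongsphere} (itself built on the Kapteyn inequality~(\ref{eq:Kapteyn})). The only point worth recording is that for $n \geq 1$ the upper endpoint $(n+\tfrac12) - (n+\tfrac12)^{2/3}$ is strictly positive --- indeed $(n+\tfrac12)^{2/3} < n + \tfrac12$ exactly because $n + \tfrac12 > 1$ --- so that the interval is non-degenerate and the statement is non-vacuous. Alternatively, one could bypass Lemma~\ref{lem:strongsphere} altogether and repeat its proof verbatim with $\nu = n+\tfrac12$ and $\epsilon = \tfrac13$ inserted from the outset, but this merely duplicates the same computation.
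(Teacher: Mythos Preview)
Your proposal is correct and follows exactly the paper's approach: the lemma is stated immediately after Lemma~\ref{lem:strongsphere} as a direct consequence obtained by setting $\nu = n + \tfrac12$ and $\epsilon = \tfrac13$, and your substitution and hypothesis checks match this precisely.
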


The lemmas for Bessel functions and their zeros from Appendix
\ref{sec:Adisk} allow one to get similar estimates for spherical
Bessel functions $j_n(z)$, their positive zeros $\gamma_{n,k}$ and the
positive zeros $\gamma'_{n,k}$ of $j'_n(z)$.  They are summarized in
the following
\begin{lemma}
\label{lem:jn_3d}
For $n$ large enough, 
\begin{eqnarray}
\label{eq:jn_bound}
j_n(n+1/2) &>& \tilde{C}_1 (n+1/2)^{-5/6}    \qquad (\tilde{C}_1 = \sqrt{\pi/2}~ C_1),  \\
\label{eq:gamma_bound}
\gamma_{n,k} &<& (n+1/2) + \tilde{c}_k(n+1/2)^{1/3}  , \\
\label{eq:gamma_n}
\gamma'_{n,1} &>& n+1/2 + \tilde{C_2} (n+1/2)^{1/3}   \qquad (\tilde{C}_2 = 0.80) , \\
\label{eq:alpha_gamma}
\gamma'_{n,k} &<& \alpha_{nk} < \gamma_{n,k} .
\end{eqnarray}
\end{lemma}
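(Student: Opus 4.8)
The plan is to derive all four estimates from the cylindrical Bessel estimates of Appendix~\ref{sec:Adisk} through the identity $j_n(z)=\sqrt{\pi/(2z)}\,J_{n+1/2}(z)$ of~(\ref{eq:Bessels}). Since the prefactor $\sqrt{\pi/(2z)}$ is positive for all $z>0$, the positive zeros of $j_n$ coincide with those of $J_{n+1/2}$, so $\gamma_{n,k}=j_{n+1/2,k}$; then~(\ref{eq:gamma_bound}) is nothing but Olver's bound~(\ref{eq:j_nuk}) of Lemma~\ref{lem:Olver} applied with the real, half-integer order $\nu=n+1/2$, with $\tilde c_k:=c_k$. For~(\ref{eq:jn_bound}) I would write $j_n(n+1/2)=\sqrt{\pi/(2(n+1/2))}\,J_{n+1/2}(n+1/2)$ and invoke the asymptotic $J_\nu(\nu)=C_1'\nu^{-1/3}+O(\nu^{-5/3})$ recalled in Lemma~\ref{lem:root_estim}, which holds for real $\nu\to\infty$ and not merely for integer $n$; this gives $J_{n+1/2}(n+1/2)>C_1(n+1/2)^{-1/3}$ for $n$ large, and multiplying by $\sqrt{\pi/2}\,(n+1/2)^{-1/2}$ produces exactly $\tilde C_1(n+1/2)^{-5/6}$ with $\tilde C_1=\sqrt{\pi/2}\,C_1$.

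The bound~(\ref{eq:gamma_n}) on $\gamma'_{n,1}$ is the delicate point, since the zeros of $j_n'(z)$ are \emph{not} the zeros of $J_{n+1/2}'(z)$. Differentiating $j_n(z)=\sqrt{\pi/2}\,z^{-1/2}J_{n+1/2}(z)$ gives $j_n'(z)=\sqrt{\pi/2}\,z^{-3/2}\bigl[zJ_{n+1/2}'(z)-\tfrac12 J_{n+1/2}(z)\bigr]$, so $\gamma'_{n,k}$ are the positive roots of $zJ_\nu'(z)-\tfrac12 J_\nu(z)=0$ with $\nu=n+1/2$ — a \emph{Dini-type} condition with boundary parameter $H=-\tfrac12$. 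Since $H>-\nu$, these roots are real, interlace the zeros $j_{\nu,k}$, and admit a uniform Airy/Olver expansion near the turning point $z\approx\nu$ of exactly the form behind~(\ref{eq:jprime_n}); one checks that a bounded shift of the boundary parameter perturbs only the terms of order $\nu^{-1/3}$ and lower, so that $\gamma'_{n,1}=(n+1/2)+\delta_1'(n+1/2)^{1/3}+O((n+1/2)^{-1/3})$ with the same $\delta_1'=-a_1'2^{-1/3}=0.8086\ldots$ (where $a_1'$ is the first negative zero of $\mathrm{Ai}'$) as for $j'_{n+1/2,1}$ in~(\ref{eq:jprime_n}). Choosing the strictly smaller constant $\tilde C_2=0.80<\delta_1'$ then yields~(\ref{eq:gamma_n}) for $n$ large; equivalently, one verifies directly that $zJ_\nu'(z)>\tfrac12 J_\nu(z)$, i.e. $j_n'(z)>0$, on the interval $\bigl(0,\,(n+1/2)+0.80(n+1/2)^{1/3}\bigr)$ by comparing the Airy approximations of $J_\nu$ and $J_\nu'$, in the style of Lemma~\ref{lem:strongsphere}.

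Finally~(\ref{eq:alpha_gamma}) follows verbatim as in Lemma~\ref{lem:Robin}: $\alpha_{nk}$ is the $k$-th positive zero of $j_n'(z)+hj_n(z)$, which degenerates to $\gamma'_{n,k}$ when $h=0$ and to $\gamma_{n,k}$ when $h\to+\infty$, and the minimax principle makes the $k$-th zero (equivalently the $k$-th Robin eigenvalue of the ball) increase monotonically with $h$, whence $\gamma'_{n,k}<\alpha_{nk}<\gamma_{n,k}$.

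The main obstacle is~(\ref{eq:gamma_n}): one must make rigorous that replacing the Neumann condition $J_\nu'=0$ by the Dini condition $zJ_\nu'-\tfrac12 J_\nu=0$ leaves the coefficient of $\nu^{1/3}$ in the first zero unchanged. The small gap $\delta_1'-0.80\approx0.0086$ is precisely the margin absorbing this perturbation together with the $O(\nu^{-1/3})$ error in the turning-point expansion; closing it cleanly needs either a citation to the literature on zeros of $zJ_\nu'(z)+HJ_\nu(z)$ or an explicit Airy-type estimate, which I would carry out along the lines of the proof of Lemma~\ref{lem:strongsphere}.
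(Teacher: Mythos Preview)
Your arguments for (\ref{eq:jn_bound}), (\ref{eq:gamma_bound}) and (\ref{eq:alpha_gamma}) are exactly those of the paper: reduce to $J_{n+1/2}$ via (\ref{eq:Bessels}), use Lemma~\ref{lem:root_estim} and Lemma~\ref{lem:Olver} at $\nu=n+1/2$, and invoke the minimax monotonicity in $h$ as in Lemma~\ref{lem:Robin}.

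For (\ref{eq:gamma_n}) you take a genuinely different, harder road. You correctly observe that $\gamma'_{n,k}$ is \emph{not} a zero of $J'_{n+1/2}$ but of the Dini-type combination $zJ'_\nu(z)-\tfrac12 J_\nu(z)$, and you then propose to show via turning-point asymptotics that this bounded shift in the boundary parameter leaves the $\nu^{1/3}$ coefficient equal to $-a'_1 2^{-1/3}=0.8086\ldots$, the margin $0.8086-0.80$ absorbing the $O(\nu^{-1/3})$ remainder. This is correct and is in fact how the asymptotic is derived, but it is work you do not need to do: the paper simply quotes the ready-made expansion of $\gamma'_{n,1}$ from Abramowitz--Stegun (p.~441),
\[
\gamma'_{n,1}=(n+\tfrac12)+0.8086165\,(n+\tfrac12)^{1/3}-0.236680\,(n+\tfrac12)^{-1/3}+\cdots,
\]
and reads off (\ref{eq:gamma_n}) with $\tilde C_2=0.80$. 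So your ``main obstacle'' is dissolved by a citation rather than by a fresh Airy estimate; the Dini analysis you sketch would reprove that tabulated formula but is unnecessary here.
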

\begin{proof}
From Lemma \ref{lem:root_estim}, we have
\begin{equation*}
j_{\nu-1/2}(\nu) = \sqrt{\frac{\pi}{2\nu}} J_\nu(\nu) > \sqrt{\frac{\pi}{2\nu}} C_1 \nu^{-1/3} = \tilde{C}_1 \nu^{-5/6} ,
\end{equation*}
from which (\ref{eq:jn_bound}) follows by taking $\nu = n+1/2$. \\
The zeros $\gamma_{n,k}$ of the spherical Bessel function $j_n(z)$ are
also the zeros of the Bessel function $J_{n+1/2}(z)$ so that
(\ref{eq:gamma_bound}) follows directly from Eq. (\ref{eq:j_nuk}) for
$\nu = n+1/2$ large enough.  \\
The inequality (\ref{eq:gamma_n}) follows from the asymptotic
expansion of $\gamma'_{n,1}$ for large $n$ \cite{Abramowitz} (p. 441)
\begin{equation*}
\begin{split}
\gamma'_{n,1} & = n+1/2 + 0.8086165 (n+1/2)^{1/3} - 0.236680 (n+1/2)^{-1/3} \\
& - 0.20736 (n+1/2)^{-1} + 0.0233 (n+1/2)^{-5/6} + ...  \qquad (n\gg 1). \\
\end{split}
\end{equation*}
Taking $\tilde{C_2} = 0.80$, one gets the inequality
(\ref{eq:gamma_n}). \\
Finally, the inequalities (\ref{eq:alpha_gamma}) follow from the
general minimax principle as for the disk.
\end{proof}

We also prove that the first maximum of the spherical Bessel function
at $\gamma'_{n,1}$ is the largest (although this is a classical fact,
we did not find an explicit reference).
\begin{lemma}
\label{lem:extrema_3d}
For an integer $n \ge 0$, one has
\begin{equation}
\max\limits_{x\in (0,\infty)} {j_n(x)} = j_n(\gamma'_{n,1}).
\end{equation}
\end{lemma}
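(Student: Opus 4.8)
The plan is to produce a monotone ``amplitude'' quantity built from $j_n$ and $j_n'$ — the same device by which one proves that the successive extrema of $J_\nu$ decrease (Lemma~\ref{lem:maxima}) — but corrected for the weight $x^{-1/2}$ implicit in $j_n(x)=\sqrt{\pi/2x}\,J_{n+1/2}(x)$, which is why Lemma~\ref{lem:maxima} cannot be quoted directly. Write $\mu=n(n+1)$ and treat $n\ge 1$ first (for $n=0$ one inspects $j_0(x)=\sin x/x$ directly). From the spherical Bessel equation in the form $j_n''+\tfrac{2}{x}j_n'+h(x)\,j_n=0$ with $h(x)=1-\mu/x^2$, I would define, on the range $x>\sqrt{\mu}$ where $h>0$,
\[
 G(x)=j_n(x)^2+\frac{j_n'(x)^2}{h(x)} .
\]
Differentiating and substituting $j_n''=-\tfrac{2}{x}j_n'-h\,j_n$ makes the cross term $j_nj_n'$ cancel, leaving
\[
 G'(x)=-\,\frac{j_n'(x)^2}{h(x)^2}\cdot\frac{2\bigl(2x^2-\mu\bigr)}{x^3},
\]
so $G$ is non-increasing on $[\sqrt{\mu/2},\infty)$, and strictly decreasing across any interval on which $j_n'$ is not identically zero.

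Next I would pin down the first critical point. Using the self-adjoint form $(x^2 j_n')'=(\mu-x^2)\,j_n$, the positivity of $j_n$ on $(0,\gamma_{n,1})$, and the classical bound $\gamma_{n,1}=j_{n+1/2,1}>n+\tfrac12>\sqrt{\mu}$, one sees that $x^2 j_n'$ is strictly increasing on $(0,\sqrt{\mu})$ and vanishes at $0^+$; hence $j_n'>0$ there and $\gamma'_{n,1}\ge\sqrt{\mu}>\sqrt{\mu/2}$. Therefore every critical point $x_1=\gamma'_{n,1}<x_2<x_3<\cdots$ of $j_n$ lies in the region where $G$ decreases, and since $G(x_k)=j_n(x_k)^2$ at each of them, we get $j_n(\gamma'_{n,1})^2>j_n(x_k)^2$ for all $k\ge 2$; i.e. $|j_n(\gamma'_{n,1})|$ strictly dominates the modulus of every extremum of $j_n$ on $(0,\infty)$.

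Finally I would convert this into the stated global maximum. Since $j_n$ increases on $(0,\gamma'_{n,1})$, is monotone between consecutive critical points, satisfies $j_n(x)\to 0$ as $x\to\infty$ (and, for $n\ge 1$, as $x\to 0^+$), and since $j_n(\gamma'_{n,1})>0$ because it is the maximum of the positive ``first arch'' on $(0,\gamma_{n,1})$, the supremum of $j_n$ over $(0,\infty)$ equals $\max_k j_n(x_k)$ and is attained; the previous step forces this value to be $j_n(\gamma'_{n,1})$.

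The only genuinely delicate point is the placement of the first extremum: one must know $\gamma'_{n,1}>\sqrt{\mu/2}$ so that $x_1$ already sits inside the monotonicity range of $G$ (otherwise the comparison $G(x_1)\ge G(x_k)$ could fail for small $x_1$). This is supplied by the self-adjoint-form argument above; everything else — the vanishing of the cross term in $G'$, the sign of $2x^2-\mu$, and the endpoint behaviour of $j_n$ — is routine.
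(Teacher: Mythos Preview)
Your proof is correct and is essentially the same as the paper's: your amplitude $G(x)=j_n^2+(j_n')^2/h(x)$ with $h(x)=1-\mu/x^2$ is exactly the paper's $\Lambda_n(x)=j_n^2+\tfrac{x^2}{x^2-\kappa}(j_n')^2$, and both arguments reduce to showing this quantity decreases for $x>\sqrt{\mu/2}$ and then evaluating it at the critical points. Your write-up is in fact slightly more complete, since you justify $\gamma'_{n,1}\ge\sqrt{\mu}$ via the self-adjoint form (the paper simply asserts $\gamma'_{n,1}>\sqrt{n(n+1)/2}$) and you treat $n=0$ separately.
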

\begin{proof}
The spherical Bessel function $j_n(x)$ satifies
\begin{equation*}
x^2 j''_n + 2x j'_n + [x^2- (n+1)n] j_n = 0.
\end{equation*}
Denoting $\kappa = n(n+1)$, one can rewrite this equation as
\begin{equation*}
j''_n(x) = -\frac{2x j'_n(x) + \left({x^2 - \kappa}\right) j_n(x)}{x^2},
\end{equation*}
from which
\begin{eqnarray*}
&& \frac{d}{dx} \left[{ \frac{x^2}{x^2-\kappa} \left({j'_n(x)}\right)^2 }\right] = 
\frac{d}{dx} \left[{ \left({j'_n(x)}\right)^2 +  \frac{\kappa}{x^2-\kappa}  \left({j'_n(x)}\right)^2 }\right] 
= 2j'_n(x) j''_n(x)  \\
&& + \kappa \left[{   \frac{2(x^2-\kappa)j'_n(x) j''_n(x) - 2x \left({j'_n(x)}\right)^2 }{\left({x^2 - \kappa}\right)^2} }\right]  
= \frac{2x^2}{x^2 - \kappa} j'_n(x) j''_n(x) - \frac{2x \kappa}{\left({x^2 - \kappa}\right)^2} \left({j'_n(x)}\right)^2  \\
&& = -\frac{2j'_n(x)}{x^2 - \kappa}  \left[{2x j'_n(x) + \left({x^2 - \kappa}\right) j_n(x)}\right] 
- \frac{2x \kappa}{\left({x^2 - \kappa}\right)^2} \left({j'_n(x)}\right)^2 = - 2j_n(x) j'_n(x) \\
&& - \frac{2x}{\left({x^2 - \kappa}\right)^2} \left({j'_n(x)}\right)^2 \left[{2(x^2-\kappa)+\kappa}\right]   
= - \frac{2x}{\left({x^2 - \kappa}\right)^2} \left({j'_n(x)}\right)^2 \left[{2x^2-\kappa}\right] - \frac{d}{dx} \left[j_n^2(x)\right] .
\end{eqnarray*}
Now, if we put
\begin{equation*}
\Lambda_n(x) = j_n^2(x) + \left[{ \frac{x^2}{x^2-\kappa} \left({j'_n(x)}\right)^2 }\right] ,
\end{equation*}
then
\begin{equation*}
\frac{d}{dx} \Lambda_n(x) = - \frac{2x}{\left({x^2 - \kappa}\right)^2} \left({j'_n(x)}\right)^2 \left[{2x^2-\kappa}\right] < 0 
\end{equation*}
for all $x > \sqrt{\frac{n(n+1)}{2}}$, i.e. $\Lambda_n(x)$
monotonously decreases.  Given that $\Lambda_n(\gamma'_{n,k}) =
j_n^2(\gamma_{n,k})$ and
\begin{equation*}
\sqrt{\frac{n(n+1)}{2}} < \gamma'_{n,1} < \gamma'_{n,2} < \dots  ,
\end{equation*}
we get the conclusion.
\end{proof}

Now, we can prove Theorem \ref{theo:ball}.
\begin{proof}
As earlier, the proof formalizes the idea that the eigenfunction
$u_{nk}$ is small in the large subdomain $B_{nk} = \{ x\in B ~:~ |x| <
R s_n/\alpha_{nk}\}$ (with $s_n = (n+1/2) - (n+1/2)^{2/3}$) and large
in the small subdomain $A_{nk} = \{ x\in B ~:~R (n+1/2)/\alpha_{nk} <
|x| < R \gamma'_{n,1}/\alpha_{nk}\}$.  Since $A_{nk}\subset B$, we
have for $1\leq p < \infty$
\begin{equation*}
\frac{\|u_{nk}\|_{L_p(B_{nk})}^p}{\|u_{nk}\|_{L_p(B)}^p} <
\frac{\|u_{nk}\|_{L_p(B_{nk})}^p}{\|u_{nk}\|_{L_p(A_{nk})}^p} = 
\frac{\int\limits_0^{Rs_n/\alpha_{nk}} dr ~r^2~ |j_n(r \alpha_{nk}/R)|^p}
{\int\limits_{R (n+1/2)/\alpha_{nk}}^{R \gamma'_{n,1}/\alpha_{nk}} dr~ r^2~ |j_n(r \alpha_{nk}/R)|^p} = 
\frac{\int\limits_0^{s_n} dz z^2~ |j_n(z)|^p}{\int\limits_{n+1/2}^{\gamma'_{n,1}} dz z^2~ |j_n(z)|^p} .
\end{equation*}
The numerator can be bounded by the inequality (\ref{eq:jnz}):
\begin{equation*}
\begin{split}
\int\limits_{0}^{s_n} dz z^2~ |j_n(z)|^p & < \left(\frac{\pi}{2n  + 1}\right)^{p/2} 
\exp\left(\frac{2p}{3}- \frac{p}{3} \left({n  + \frac{1}{2}}\right)^{1/3}\right) \frac{s_n^3}{3} \\
& <  \frac{(\pi/2)^{p/2}}{3} \exp\left(\frac{2p}{3}- \frac{p}{3} \left({n  + \frac{1}{2}}\right)^{1/3}\right) (n+1/2)^{3-p/2} \\
\end{split}
 (n=1,2,3,...).
\end{equation*}
In order to bound the denominator, we use the inequalities
(\ref{eq:jn_bound}, \ref{eq:gamma_n}) and the fact that $j_n(z)$
increases on the interval $[n+1/2,\gamma'_{n,1}]$ (up to the first
maximum at $\gamma'_{n,1}$):
\begin{equation*}
\begin{split}
& \int\limits_{n+1/2}^{\gamma'_{n,1}} dz z^2~ |j_n(z)|^p > [j_n(n+1/2)]^p \frac{(\gamma'_{n,1})^3 - (n+1/2)^3}{3}  \\
& >  [\tilde{C}_1 (n+1/2)^{-5/6}]^p ~ \frac{(n+1/2 + \tilde{C}_2(n+1/2)^{1/3})^3 - (n+1/2)^3}{3} 
 >  \tilde{C}_1^p \tilde{C}_2 (n+1/2)^{7/3 -5p/6}  \\
\end{split}
\end{equation*}
for $n$ large enough, from which
\begin{equation*}
\frac{\|u_{nk}\|_{L_p(B_{nk})}}{\|u_{nk}\|_{L_p(A_{nk})}} < \frac{\sqrt{\pi/2}}{\tilde{C}_1 (3\tilde{C}_2)^{1/p}} 
\exp\left(\frac{2}{3}- \frac{1}{3} \left({n  + \frac{1}{2}}\right)^{1/3}\right) (n+1/2)^{1/3 + 2/(3p)}     \qquad (n\gg 1)
\end{equation*}
that implies Eq. (\ref{eq:eigen_ball_L2}).  The case $p = \infty$
is treated similarly.
Finally, from Lemma \ref{lem:jn_3d}, we have for $n$ large enough
\begin{equation*}
1 > \frac{\mu_3(B_{nk})}{\mu_3(B)} = \left(\frac{s_n}{\alpha_{nk}}\right)^3 > \frac{s_n^3}{\gamma_{n,k}^3} > 
\frac{(n+1/2 - (n+1/2)^{2/3})^3}{(n+1/2 + \tilde{c}_k (n+1/2)^{1/3})^3}
\end{equation*}
so that the ratio of volumes tends to $1$ as $n$ goes to infinity.
\end{proof}

The proof of Theorem \ref{theo:focusingmodes_3d} for a ball is similar
to that of Theorem \ref{theo:focusingmodes}.
\begin{proof}
For $p = \infty$, the explicit representation (\ref{eq:u_ball}) of
eigenfunctions leads to
\begin{equation}
\label{eq:A_auxil1_3d}
 \frac{\|u_{nk}\|_{L_{\infty}(B(R))} }{ \|u_{nk}\|_{L_{\infty}(B)} }  
= \frac{\max\limits_{r\in[R,1]}{\left|j_n(\alpha_{nk} r)\right|}}{\max\limits_{r\in[0,1]}{\left|j_n(\alpha_{nk} r)\right|}}  
= \frac{\max\limits_{r\in[R,1]}{\left|j_n(\alpha_{nk} r)\right|}}{\left|j_n(\gamma'_{n,1})\right|} ,
\end{equation}
where we used the fact that the first maximum (at $\gamma'_{n,1}$) is
the largest (Lemma \ref{lem:extrema_3d}).  Since $\lim\limits_{k\to
\infty} \alpha_{nk} = \infty$, the spherical Bessel function
$j_{n}(\alpha_{nk} r)$ with $k\gg 1$ can be approximated in the
interval $[R,1]$ as \cite{Abramowitz}
\begin{equation}
\label{eq:BesselapproxOne_3d} 
j_n(\alpha_{nk} r) = \sqrt{\frac{\pi}{2\alpha_{nk} r}}~ J_{n+1/2}(\alpha_{nk} r) \approx 
\frac{1}{\alpha_{nk} r} \cos\left(\alpha_{nk} r - \frac{(n+1)\pi}{2}\right) .
\end{equation}
It means that there exists a positive integer $K_0$ and a constant
$A_0>0$ (e.g., $A_0 = 2$) such that
\begin{equation}
\label{eq:sBessel_ineq}
|j_n(\alpha_{nk} r)| < \frac{A_0}{\alpha_{nk} r} \leq \frac{A_0}{\alpha_{nk} R}, \quad \forall ~r\in[R,1], ~ k > K_0.
\end{equation}
Given that the denominator in Eq. (\ref{eq:A_auxil1_3d}) is fixed,
while the numerator decays as $\alpha_{nk}^{-1}$, one gets
Eq. (\ref{eq:focusingNorm_3d}).

For $p > 3$, the ratio of $L_p$ norms is 
\begin{equation}
\frac{\| u_{nk} \|_{L_p(D(R))}^p}{\| u_{nk} \|_{L_p(D)}^p} = 
\frac{\int\limits_{\alpha_{nk} R}^{\alpha_{nk}} dr ~ r^2 ~ |j_n(r)|^p}{\int\limits_{0}^{\alpha_{nk}} dr ~ r^2 ~ |j_n(r)|^p} .
\end{equation}
The inequality (\ref{eq:sBessel_ineq}) allows one to bound the
numerator as
\begin{equation*}
\int\limits_{\alpha_{nk} R}^{\alpha_{nk}} dr ~ r^2 ~ |j_n(r)|^p \leq A_0^{p} \int\limits_{\alpha_{nk} R}^{\alpha_{nk}} dz~ z^{2-p}
= \frac{A_0^p [R^{3-p} - 1]}{p-3} ~ \alpha_{nk}^{3-p} ,
\end{equation*}
while the denominator can be simply bounded from below by a constant 
\begin{equation*}
\int\limits_{0}^{\alpha_{nk}} dr ~ r^2 ~ |j_n(r)|^p \geq \int\limits_{0}^{1} dr ~ r^2 ~ |j_n(r)|^p .
\end{equation*}
As a consequence, the ratio of $L_p$ norms goes to $0$ as $k$ increases.

For $1\leq p < 3$, one can write
\begin{equation*}
\frac{\| u_{nk} \|_{L_p(B(R))}^p}{\| u_{nk} \|_{L_p(B)}^p} = 1 - \frac{\tilde{f}_{p,n}(\alpha_{nk} R)}{\tilde{f}_{p,n}(\alpha_{nk})} ,
\end{equation*}
where
\begin{equation}
\label{eq:fpt_def}
\tilde{f}_{p,n}(z) \equiv \int\limits_0^z dr ~ r^2 ~ |j_n(r)|^p .
\end{equation}
As discussed in Remark \ref{theo:fpt}, the function
$\tilde{f}_{p,n}(z)$ behaves asymptotically as $z^{3-p}$ for large
$z$, i.e., there exists $0 < \tilde{c}_{p,n} < \infty$ such that for
any $\ve > 0$, there exists $z_0>0$ such that for any $z > z_0$
[cf. Eq. (\ref{eq:fpt})]
\begin{equation*}
(\tilde{c}_{p,n} - \ve) z^{3-p} \leq \tilde{f}_{p,n}(z) \leq (\tilde{c}_{p,n} + \ve) z^{3-p} ,
\end{equation*}
from which one immediately deduces
\begin{equation*}
\frac{\tilde{c}_{p,n} - \ve}{\tilde{c}_{p,n} + \ve} R^{3-p} \leq \frac{\tilde{f}_{p,n}(\alpha_{nk} R)}{\tilde{f}_{p,n}(\alpha_{nk})} 
\leq \frac{\tilde{c}_{p,n} + \ve}{\tilde{c}_{p,n} - \ve} R^{3-p} .
\end{equation*}
As a consequence, for any $R < 1$, one can always choose $\ve$ such
that the right-hand side is strictly smaller than $1$ so that the
ratio of $L_p$ norms is then strictly positive.  Moreover, the
limiting value is $1 - R^{3-p}$ that completes the proof of
Eq. (\ref{eq:focusingNorm_3d}) for $1\leq p < 3$. 
\end{proof}

\begin{remark}
\label{theo:fpt}
The function $\tilde{f}_{p,n}(z)$ defined by Eq. (\ref{eq:fpt_def})
asymptotically behaves as $z^{3-p}$ for large $z$, i.e., the limit
\begin{equation}
\label{eq:fpt}
\tilde{c}_{p,n} = \lim\limits_{z\to\infty} \frac{\tilde{f}_{p,n}(z)}{z^{3-p}}
\end{equation}
exists, is finite and strictly positive: $0 < \tilde{c}_{p,n} < \infty$.
\end{remark}
As for Remark \ref{theo:fp}, a rigorous proof of this result is beyond
the scope of the paper.  The asymptotic behavior from
Eq. (\ref{eq:focusingNorm_3d}) for focusing modes is illustrated on
Fig. \ref{fig:ratio_focusing}.

\section{Analysis of Mathieu functions}
\label{sec:CalMathieu}

Many algorithms have been proposed for a numerical computation of
Mathieu functions \cite{Alhargen96,Shirts93,Vlasov92,Leeb79}.  The
main difficulty is the computation of Mathieu characteristic numbers
(MCNs).  Alhargan introduced a complete method for calculating the
MCNs for Mathieu functions of integer orders by using recurrence
relations for MCNs \cite{Alhargen96}.  His algorithm is a good
compromise between complexity, accuracy, speed and ease of use.
Nevertheless, for illustrative purposes of the paper, we used a
simpler approach by Zhang {\it et al.} \cite{Zhang96}.  In this
approach, the problem of calculating expansion coefficients of Mathieu
functions is reduced to an eigenproblem for sparse tridiagonal
matrices.  We have rebuilt the computation of Mathieu functions and
modified Mathieu functions and checked the accuracy of the numerical
algorithm by comparing their values to whose published in the
literature \cite{Chen94,Zhang96,Kirkpatrick60}.  We also checked that
the truncation of the underlying tridiagonal matrices to the size
$K_{max} = 200$ was enough for getting very accurate results, at least
for the examples presented in the paper.

\section{Asymptotic behavior of Mathieu functions for large $q$}
\label{sec:Aelliptical}

The large $q$ asymptotic expansions of $\ce_n(z,q)$ and
$\se_{n+1}(z,q)$ for $z \in [0,\frac{\pi}{2})$ and $n = 0,1,2,...$ are
\cite{Mclachlan47,Frenkel2001}
\begin{eqnarray}
\label{eq:asympExpanCosinMathieu}
\ce_n(z,q) &=& C_n(q) \left({   e^{2\sqrt{q}\sin z} h^+_n(z) \sum\limits_{k=0}^{\infty} \frac{f^{+}_k(z)}{q^{k/2}} + 
e^{-2\sqrt{q}\sin z} h^-_n(z) \sum\limits_{k=0}^{\infty}  \frac{f^{-}_k(z)}{q^{k/2}}  }\right), \\
\label{eq:asympExpanSinMathieu}
\se_{n+1}(z,q) &=& S_{n+1}(q) \left({   e^{2\sqrt{q}\sin z} h^+_n(z) \sum\limits_{k=0}^{\infty} \frac{f^{+}_k(z)}{q^{k/2}} 
- e^{-2\sqrt{q}\sin z} h^-_n(z) \sum\limits_{k=0}^{\infty}  \frac{f^{-}_k(z)}{q^{k/2}}   }\right) ,
\end{eqnarray}
where 
\begin{eqnarray}
\label{eq:A_h1n}
h^+_n(z) &=& 2^{n+\frac{1}{2}} \frac{\left[{\cos \left({\frac{1}{2}z + \frac{\pi}{4}}\right)}\right]^{2n+1}}{\left({\cos z}\right)^{n+1}}
= \sqrt{ \frac{(1 - \sin z)^n}{(1 + \sin z)^{n+1}} } ,  \\
\label{eq:A_h2n}
h^-_n(z) &=& 2^{n+\frac{1}{2}} \frac{\left[{\sin \left({\frac{1}{2}z + \frac{\pi}{4}}\right)}\right]^{2n+1}}{\left({\cos z}\right)^{n+1}}
= \sqrt{ \frac{(1 + \sin z)^n}{(1 - \sin z)^{n+1}} } , % \\
\end{eqnarray}
and the coefficients $C_n(q)$ and $S_{n+1}(q)$ are given explicitly in
\cite{Mclachlan47}.
The coefficients $f^\pm_k(z)$ can be computed through the recursive
formulas given in \cite{Frenkel2001}, e.g.
\begin{equation*}
f^{\pm}_0(z) =1,  \qquad   f^{\pm}_1(z) = \frac{2n+1 \mp \left({n^2+n+1}\right)\sin{z}}{8\cos(z)^2}.
\end{equation*}
When $q$ is large enough, one can truncate the asymptotic expansions
(\ref{eq:asympExpanCosinMathieu}, \ref{eq:asympExpanSinMathieu}) by
keeping only two terms ($k = 0,1$) and get accurate approximations for
$\ce_n$ and $\se_{n+1}$, as illustrated on
Fig. \ref{fig:check_ce_m_numerical_vs_Goldstein_expansion}.

\begin{figure}
\begin{center}
\includegraphics[width=60mm]{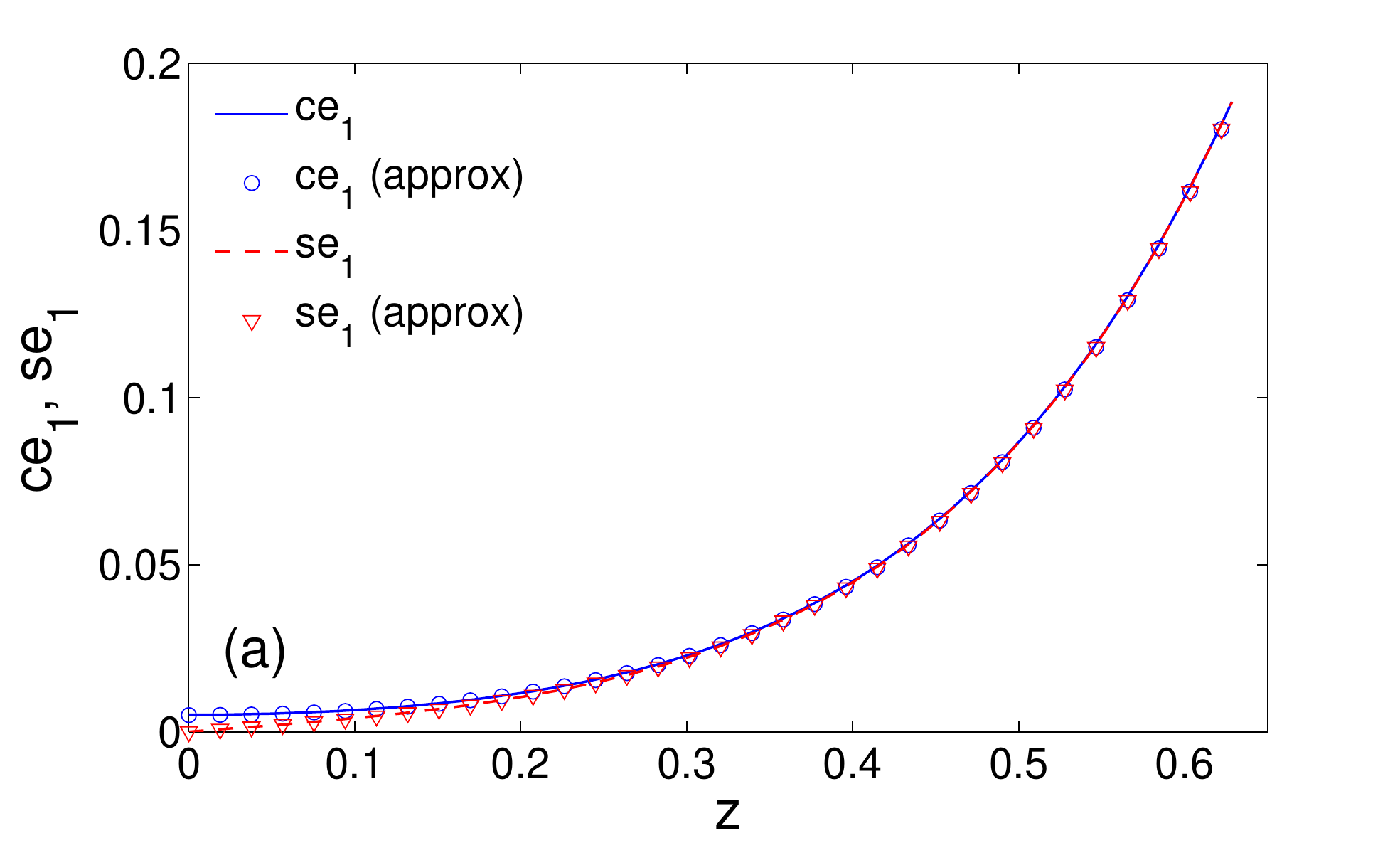}
\end{center}
\caption{
The functions $\ce_1(z,q)$ and $\se_1(z,q)$ (solid and
dashed lines), computed by our algorithm with $K_{max} = 200$, and
their approximations (circles and triangles) by the asymptotic
expansions (\ref{eq:asympExpanCosinMathieu},
\ref{eq:asympExpanSinMathieu}) truncated to two terms ($k = 0,1$),
with $q = 20$. }
\label{fig:check_ce_m_numerical_vs_Goldstein_expansion}
\label{fig:check_L_1n_gamma_q}
\end{figure}

It is convenient to define the functions
\begin{equation*}
G_n^\pm(z,q) = h^+_n(z) \pm e^{-4\sqrt{q} \sin(z)} h^-_n(z) +  h^+_n(z) \sum\limits_{k=1}^{\infty} 
 \frac{f^{+}_k(z)}{q^{k/2}} \pm  e^{-4\sqrt{q} \sin(z)} h^-_n(z) \sum\limits_{k=1}^{\infty}  \frac{f^{-}_k(z)}{q^{k/2}} , 
\end{equation*}
in order to write
\begin{equation*}
\ce_n(z,q) = C_n(q) e^{2\sqrt{q} \sin(z)} G_n^+(z,q) ,  \qquad  \se_{n+1}(z,q) = S_{n+1}(q) e^{2\sqrt{q} \sin(z)} G_n^-(z,q) .
\end{equation*}
In what follows, we will estimate the functions $G_n^\pm(z,q)$ by
their leading terms, given that the remaining part is getting small
for large $q$.

\begin{lemma}
\label{lem:bound}
For $\gamma\in \left(0, \frac{\pi}{2}\right)$, there exists $N_\gamma
> 0$ such that for $q > N_\gamma$ and $z\in (0,\gamma)$
\begin{equation}
\left|\sum\limits_{k=1}^{\infty} \frac{f^{\pm}_k(z)}{q^{k/2}} \right|  < \frac12 .
\end{equation}
\end{lemma}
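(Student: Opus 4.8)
The plan is to reduce the estimate to a geometric series: I will show that the coefficients $f_k^\pm(z)$ grow at most geometrically in $k$, uniformly for $z$ in the closed interval $[0,\gamma]$, and then dominate the series $\sum_{k\ge1} f_k^\pm(z)/q^{k/2}$ by a geometric series in $q^{-1/2}$ whose sum tends to $0$ as $q\to\infty$. Choosing $N_\gamma$ large enough that this dominating sum falls below $\tfrac12$ then gives the claim.

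First I would recall the recursion relations for the $f_k^\pm(z)$ from \cite{Frenkel2001}. These express each $f_k^\pm$ through the lower-order coefficients $f_{k-1}^\pm$ (and possibly $f_{k-2}^\pm$) and their derivatives, with coefficients that are rational functions of $\sin z$ and $\cos z$ whose only singularities (as functions of $z$) occur at $\cos z = 0$, i.e. at $z=\pi/2$; indeed already $f_1^\pm(z)=\frac{2n+1\mp(n^2+n+1)\sin z}{8\cos^2 z}$ has this form, and the recursion preserves it, only increasing the order of the pole at $z=\pi/2$ by a bounded amount at each step. Since $\gamma<\pi/2$, on $[0,\gamma]$ we have $\cos z\ge\cos\gamma>0$, so every such rational function is bounded on $[0,\gamma]$ by a constant depending only on $\gamma$ and on the fixed order $n$; the same applies to the auxiliary quantities needed to pin down any constants of integration appearing in the recursion (these are fixed by the matching/normalization conditions and are finite).

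Next, by induction on $k$ I would establish a bound of the form
\[
\sup_{z\in[0,\gamma]}\bigl|f_k^\pm(z)\bigr|\le A_\gamma\,B_\gamma^{\,k}\qquad(k\ge0)
\]
for suitable constants $A_\gamma,B_\gamma>0$ depending on $\gamma$ and $n$. The base cases $f_0^\pm\equiv1$ and the explicit $f_1^\pm$ above are immediate from $\cos z\ge\cos\gamma$. Because the recursion involves the derivatives of the lower-order coefficients, I would strengthen the induction hypothesis so as to control simultaneously $f_k^\pm$ together with its first derivative on $[0,\gamma]$ (equivalently, bound the coefficients of $f_k^\pm$ viewed as a polynomial in $\sin z$ and $\sec z$ of degree $O(k)$); differentiation of such a bounded expression on $[0,\gamma]$ then costs only another constant factor, which is absorbed into $B_\gamma$, and the single step of integration (if present) costs at most a factor $\gamma$.

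With the geometric bound in hand, for every $q>B_\gamma^2$ and every $z\in(0,\gamma)$,
\[
\left|\sum_{k=1}^{\infty}\frac{f_k^\pm(z)}{q^{k/2}}\right|\le\sum_{k=1}^{\infty}\frac{A_\gamma B_\gamma^{\,k}}{q^{k/2}}=\frac{A_\gamma B_\gamma/\sqrt q}{1-B_\gamma/\sqrt q},
\]
and the right-hand side tends to $0$ as $q\to\infty$; hence there is $N_\gamma\ge B_\gamma^2$ for which it is below $\tfrac12$ whenever $q>N_\gamma$, proving the lemma. The main obstacle is the inductive step: extracting a clean geometric growth rate $B_\gamma$ from the differential-recursion for the $f_k^\pm$, which requires careful bookkeeping of how both the polynomial degree in $n$ and the order of the pole at $z=\pi/2$ accumulate with $k$, and of the effect of the derivative terms. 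Once this bookkeeping is carried out, the remaining estimates are routine.
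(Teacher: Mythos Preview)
The paper states Lemma~\ref{lem:bound} without proof, so there is no argument to compare against directly. Your strategy---bound $|f_k^\pm(z)|$ geometrically in $k$ on $[0,\gamma]$ and then sum a geometric series in $q^{-1/2}$---is the natural one, and the final step is indeed routine once the geometric bound $|f_k^\pm(z)|\le A_\gamma B_\gamma^{\,k}$ is in hand.

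The genuine gap is the step you yourself flag as ``the main obstacle'': you do not actually establish the geometric growth, and it is not clear that it holds. The paper explicitly calls (\ref{eq:asympExpanCosinMathieu})--(\ref{eq:asympExpanSinMathieu}) \emph{asymptotic expansions}, and coefficients of such expansions arising from differential recursions typically grow \emph{factorially} in $k$, not geometrically; in that case the infinite series $\sum_{k\ge1} f_k^\pm(z)/q^{k/2}$ diverges for every $q$ and the lemma, read literally, is empty. Your inductive sketch controls the order of the pole at $z=\pi/2$ (which does grow linearly in $k$ and contributes a factor like $(\cos\gamma)^{-ck}$, harmless), but it does not rule out an additional combinatorial factor of $k!$ type coming from the differentiation-and-convolution structure of the recursion in \cite{Frenkel2001}. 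Absorbing ``another constant factor'' into $B_\gamma$ at each step is exactly what fails when that factor depends on $k$.

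Two ways forward: either go into the explicit recursion of \cite{Frenkel2001} and verify that no $k$-dependent factors accumulate (if true, this is the missing content of your proof), or---more safely and sufficient for the downstream use in Lemma~\ref{lem:localizedfunctionGm} and Theorem~\ref{theo:localizationEllipticalAnnuliMathieuEigenf}---reinterpret the statement in the standard asymptotic sense: for each fixed truncation order $K$, the remainder after $K$ terms is $O(q^{-K/2})$ uniformly on $[0,\gamma]$, and in particular can be made $<\tfrac12$ for $q$ large. That version requires only bounding finitely many $f_k^\pm$ on $[0,\gamma]$, which your argument already handles.
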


Now, we establish the upper and lower bounds for the functions
$G_n^\pm$.

\begin{lemma}
\label{lem:localizedfunctionGm}
Let $\alpha \in \left(0,\frac{\pi}{2}\right)$, $\gamma \in
\left(\alpha,\frac{\pi}{2}\right)$.  Then, there exists
$N_\gamma >0$ such that for any $\beta \in (\alpha,\gamma)$ and
$q>N_\gamma$:
\begin{eqnarray}
\left|G_n^\pm(z_1,q)\right| &<& \frac{3}{2} \left(1 + h^-_n(\alpha) e^{-4\sqrt{q} \sin(z_1)}\right) \qquad \forall ~z_1 \in (0,\alpha),  \\
\left|G_n^\pm(z_2,q)\right| &>& \frac{1}{2} h^+_n(\gamma)     \qquad   \forall ~z_2 \in (\beta,\gamma) . 
\end{eqnarray}
\end{lemma}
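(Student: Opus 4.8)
The plan is to prove the two bounds in Lemma \ref{lem:localizedfunctionGm} by isolating the explicit leading terms of $G_n^\pm(z,q)$, namely $h^+_n(z) \pm e^{-4\sqrt{q}\sin z}\,h^-_n(z)$, and controlling the tail sums via Lemma \ref{lem:bound}. Recall that
\begin{equation*}
G_n^\pm(z,q) = h^+_n(z)\Bigl(1 + \sum_{k\ge 1}\frac{f^+_k(z)}{q^{k/2}}\Bigr) \pm e^{-4\sqrt{q}\sin z}\,h^-_n(z)\Bigl(1 + \sum_{k\ge 1}\frac{f^-_k(z)}{q^{k/2}}\Bigr).
\end{equation*}
Fix $\gamma\in(\alpha,\pi/2)$ and take $N_\gamma$ from Lemma \ref{lem:bound} applied with this $\gamma$, so that for $q>N_\gamma$ and $z\in(0,\gamma)$ both tail sums are bounded in absolute value by $1/2$; hence the parenthesised factors lie in $(1/2,3/2)$. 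This reduces everything to elementary monotonicity properties of the explicit functions $h^\pm_n$ on $(0,\gamma)$, which follow directly from the closed forms
\begin{equation*}
h^+_n(z) = \sqrt{\frac{(1-\sin z)^n}{(1+\sin z)^{n+1}}}, \qquad h^-_n(z) = \sqrt{\frac{(1+\sin z)^n}{(1-\sin z)^{n+1}}}
\end{equation*}
in \eqref{eq:A_h1n}--\eqref{eq:A_h2n}.

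For the upper bound, fix $z_1\in(0,\alpha)$. Since $\sin z$ is increasing on $(0,\pi/2)$, the map $z\mapsto h^+_n(z)$ is decreasing, so $h^+_n(z_1) < h^+_n(0) = 1$; and $z\mapsto h^-_n(z)$ is increasing, so $h^-_n(z_1) < h^-_n(\alpha)$. Combining with the tail estimate,
\begin{equation*}
\bigl|G_n^\pm(z_1,q)\bigr| \le h^+_n(z_1)\cdot\tfrac32 + e^{-4\sqrt q\sin z_1}\,h^-_n(z_1)\cdot\tfrac32 < \tfrac32\bigl(1 + h^-_n(\alpha)\,e^{-4\sqrt q\sin z_1}\bigr),
\end{equation*}
which is exactly the claimed inequality (using $h^+_n(z_1)<1$ in the first term). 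For the lower bound, fix $\beta\in(\alpha,\gamma)$ and $z_2\in(\beta,\gamma)$. The negative-exponential term is a nuisance only through the sign, so I use the reverse triangle inequality: $|G_n^\pm(z_2,q)| \ge h^+_n(z_2)(1-\tfrac12) - e^{-4\sqrt q\sin z_2}h^-_n(z_2)(1+\tfrac12)$. Since $h^+_n$ is decreasing, $h^+_n(z_2) > h^+_n(\gamma)$ when $z_2<\gamma$ — but we need $z_2>\beta$ to get a \emph{uniform} positive floor and to kill the exponential term. On $(\beta,\gamma)$ we have $\sin z_2 \ge \sin\beta > 0$, so $e^{-4\sqrt q\sin z_2} \le e^{-4\sqrt q\sin\beta}$, while $h^-_n(z_2) \le h^-_n(\gamma)$. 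Thus
\begin{equation*}
\bigl|G_n^\pm(z_2,q)\bigr| \ge \tfrac12\,h^+_n(\gamma) - \tfrac32\,h^-_n(\gamma)\,e^{-4\sqrt q\sin\beta}.
\end{equation*}
Since $h^+_n(\gamma), h^-_n(\gamma)$ depend only on $n$ and $\gamma$ (not on $q$), the second term tends to $0$ as $q\to\infty$; enlarging $N_\gamma$ if necessary (now depending also on $n$ and $\beta$, but one can first note the statement only needs existence of \emph{some} $N_\gamma$ for each admissible $\beta$) so that $\tfrac32 h^-_n(\gamma) e^{-4\sqrt q\sin\beta} \le \tfrac14 h^+_n(\gamma)$ would already give a bound of the form $\tfrac14 h^+_n(\gamma)$; to hit the stated constant $\tfrac12 h^+_n(\gamma)$ exactly one instead chooses $N_\gamma$ large enough that the second term is negligible and absorbs it, i.e. so that $h^+_n(z_2)(1-\tfrac12) - \tfrac32 h^-_n(\gamma)e^{-4\sqrt q\sin\beta} > \tfrac12 h^+_n(\gamma)$ — this is possible because $h^+_n(z_2)$ is \emph{strictly} larger than $h^+_n(\gamma)$ uniformly for $z_2\le\beta'$ for any $\beta'<\gamma$; alternatively, reading the lemma as allowing $\beta$ to range over $(\alpha,\gamma)$ and $z_2\in(\beta,\gamma)$, one picks an auxiliary point to create the gap. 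I would streamline this by replacing the crude $h^+_n(z_2)>h^+_n(\gamma)$ with $h^+_n(z_2) \ge h^+_n(\gamma)\cdot c(\beta,\gamma)$ for an explicit $c>1$.

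The main obstacle is purely bookkeeping: making the constants in the lower bound match $\tfrac12 h^+_n(\gamma)$ while correctly tracking the dependence of $N_\gamma$ on the various parameters. There is no genuine analytic difficulty — the exponential $e^{-4\sqrt q\sin\beta}$ decays and the explicit $h^\pm_n$ are monotone — so the whole proof is a careful application of the triangle inequality together with Lemma \ref{lem:bound} and the elementary inequalities $h^+_n(0)=1$, $h^+_n$ decreasing, $h^-_n$ increasing on $(0,\pi/2)$. I would present it as: (i) invoke Lemma \ref{lem:bound} to replace the tail factors by constants in $(1/2,3/2)$; (ii) for $z_1\in(0,\alpha)$, bound $h^+_n(z_1)<1$ and $h^-_n(z_1)<h^-_n(\alpha)$; (iii) for $z_2\in(\beta,\gamma)$, bound $h^+_n(z_2)>h^+_n(\gamma)$ from below and $e^{-4\sqrt q\sin z_2}h^-_n(z_2)\le h^-_n(\gamma)e^{-4\sqrt q\sin\beta}\to 0$, then enlarge $N_\gamma$ so the latter is absorbed, leaving the factor $\tfrac12 h^+_n(\gamma)$.
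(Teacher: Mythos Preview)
Your approach is essentially the paper's: invoke Lemma~\ref{lem:bound} so the tail factors lie in $(1/2,3/2)$, then use the monotonicity of $h^\pm_n$ from \eqref{eq:A_h1n}--\eqref{eq:A_h2n}. Your upper bound argument is identical to the paper's.

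The one substantive difference is in the lower bound. You treat $G_n^+$ and $G_n^-$ uniformly via the reverse triangle inequality, which forces you to absorb the term $\tfrac32 h^-_n(\gamma)e^{-4\sqrt q\sin\beta}$ by enlarging $N_\gamma$ --- this is exactly the ``bookkeeping obstacle'' you flag. The paper instead splits the two cases. For $G_n^+$ the two summands $h^+_n(z_2)(1+S^+)$ and $e^{-4\sqrt q\sin z_2}h^-_n(z_2)(1+S^-)$ are both \emph{positive} (each factor is positive, and $1+S^\pm>1/2$), so
\[
G_n^+(z_2,q) > \tfrac12\bigl(h^+_n(z_2) + h^-_n(z_2)e^{-4\sqrt q\sin z_2}\bigr) > \tfrac12\,h^+_n(z_2) > \tfrac12\,h^+_n(\gamma),
\]
and the constant $\tfrac12 h^+_n(\gamma)$ drops out immediately with no need to enlarge $N_\gamma$ or introduce $\beta$ at all. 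Only for $G_n^-$ does the paper use the reverse triangle inequality as you do, arriving at the same expression $\tfrac12 h^+_n(z_2) - \tfrac32 h^-_n(z_2)e^{-4\sqrt q\sin z_2}$ and then passing to $\tfrac12 h^+_n(\gamma)$ (the paper is in fact equally terse about that last absorption step). So your argument is correct; separating the $+$ case simply removes half of the bookkeeping you were worried about.
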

\begin{proof}
From Lemma \ref{lem:bound}, there exists $N_\gamma > 0$ such that for
$q> N_\gamma$ and $z_1 \in (0,\alpha)$, one has
\begin{equation*}
\left|G_n^\pm(z_1,q)\right|  < \frac{3}{2} \left({h^+_n(z_1) + h^-_n(z_1) e^{-4\sqrt{q} \sin(z_1)}}\right)
< \frac{3}{2} \left(1 + h^-_n(\alpha) e^{-4\sqrt{q} \sin(z_1)}\right) . 
\end{equation*}

For $q> N_\gamma$ and $z_2 \in (\beta,\gamma)$, one has
\begin{equation*}
\left|G_n^+(z_2,q)\right| 
 > \frac{1}{2} \left({h^+_n(z_2) + h^-_n(z_2) e^{-4\sqrt{q} \sin(z_2)}}\right)
> \frac{1}{2} h^+_n(\gamma) > 0
\end{equation*}
and
\begin{eqnarray*}
\left|G_n^-(z_2,q) - \left(h^+_n(z_2) - h^-_n(z_2) e^{-4\sqrt{q} \sin(z_2)}\right)\right| 
< \frac{1}{2} \left(h^+_n(z_2) + h^-_n(z_2) e^{-4\sqrt{q} \sin(z_2)}\right) .
\end{eqnarray*}
The last inequality implies
\begin{equation*}
\left|G_n^-(z_2,q)\right| > \min \left\{ \left(h^+_n(z_2) - h^-_n(z_2) e^{-4\sqrt{q} \sin(z_2)}\right),
\frac12 \left(h^+_n(z_2) - 3 h^-_n(z_2) e^{-4\sqrt{q} \sin(z_2)}\right) \right\}  .
\end{equation*}
Since $h^-_n(z_2) > 0$ and $h^+_n(z_2)$ is a decreasing function, one
gets
\begin{equation*}
\left|G_n^-(z_2,q)\right| > \frac12 h^+_n(\gamma) ,
\end{equation*}
that completes the proof.
\end{proof}

Now we can prove Theorem \ref{theo:localizationEllipticalAnnuliMathieuEigenf}.
\begin{proof}
We first consider the case $i=1$.  Using the symmetric properties of
Mathieu functions \cite{Zhang96}, one has
\begin{equation*}
\frac{\left\|u_{nk1}\right\|^p_{L_p(\Omega \setminus \Omega_{\alpha})}}{\left\|{u_{nk1}}\right\|^p_{L_p(\Omega_\alpha)}} = 
\frac{\int\limits_0^{\alpha} |\ce_n(z_1,q_{nk1})|^p dz_1}{\int\limits_{\alpha}^{\pi/2} |\ce_n(z_2,q_{nk1})|^p dz_2}.
\end{equation*}
Choosing $\beta = \frac{\pi}{4}+\frac{\alpha}{2}$ and $\gamma=
\frac{3\pi}{8}+\frac{\alpha}{4}$, one gets
\begin{equation*}
\frac{\int\limits_0^{\alpha} |\ce_n(z_1,q_{nk1})|^p dz_1}{\int\limits_{\alpha}^{\pi/2} |\ce_n(z_2,q_{nk1})|^p dz_2} < 
\frac{\int\limits_0^{\alpha} |\ce_n(z_1,q_{nk1})|^p dz_1}{\int\limits_{\beta}^{\gamma} |\ce_n(z_2,q_{nk1})|^p dz_2} .
\end{equation*}
From Lemma \ref{lem:localizedfunctionGm}, there exists $N_{\gamma} >0$
such that for $q > N_\gamma$,
\begin{eqnarray*}
\int\limits_0^{\alpha} && |\ce_n(z_1,q)|^p dz_1 = (C_n(q))^p \int\limits_0^{\alpha} e^{2p\sqrt{q}\sin z_1} |G_n^+(z_1,q)|^p dz_1 \\
&<& (C_n(q))^p \left(\frac32\right)^p 
\int\limits_0^\alpha \left(\sum\limits_{k=0}^p \binom{p}{k} [e^{2\sqrt{q}\sin z_1}]^{p-2k} (h^-_n(\alpha))^k \right) dz_1  \\
&\leq& \alpha (C_n(q))^p \left(\frac32\right)^p \left(\sum\limits_{k=0}^{[p/2]} \binom{p}{k} [e^{2\sqrt{q}\sin \alpha}]^{p-2k} (h^-_n(\alpha))^k
+ \sum\limits_{k=[p/2]+1}^{p} \binom{p}{k} (h^-_n(\alpha))^k \right) ,  \\
\end{eqnarray*}
where the terms $e^{m\sqrt{q}\sin z_1}$ were bounded by
$e^{m\sqrt{q}\sin \alpha}$ for $m > 0$, and by $1$ for $m \leq 0$
(here $[x]$ denotes the integer part of $x$).  In addition,
\begin{eqnarray*}
\int\limits_{\beta}^{\gamma} |\ce_n(z_2,q)|^p dz_2 &>& (C_n(q))^p\left(\frac12\right)^p (h^+_n(\gamma))^p
\int\limits_{\beta}^{\gamma} e^{2p\sqrt{q}\sin z_2} dz_2 \\
&>& (C_n(q))^p\left(\frac12\right)^p (h^+_n(\gamma))^p  (\gamma-\beta) e^{2p\sqrt{q}\sin \beta} , 
\end{eqnarray*}
from which
\begin{eqnarray*}
&& \frac{\left\|u_{nk1}\right\|^p_{L_p(\Omega \setminus \Omega_{\alpha})}}{\left\|{u_{nk1}}\right\|^p_{L_p(\Omega_\alpha)}} <
\frac{3^p \alpha}{(\gamma-\beta) (h^+_n(\gamma))^p} e^{-2p\sqrt{q_{nk1}}(\sin\beta - \sin\alpha)} \Biggl(1 + \\
&& \left[\sum\limits_{k=1}^{[p/2]} \binom{p}{k} (e^{2\sqrt{q}\sin \alpha})^{-2k} (h^-_n(\alpha))^k + e^{-2p\sqrt{q}\sin\alpha}
\sum\limits_{k=[p/2]+1}^{p} \binom{p}{k} (h^-_n(\alpha))^k \right]\Biggr) .
\end{eqnarray*}
Taking $q$ large enough, one can make the terms in large brackets
smaller than any prescribed threshold $\epsilon$.  For $\epsilon = 1$,
one can simplify the estimate as
\begin{equation*}
\frac{\left\|u_{nk1}\right\|^p_{L_p(\Omega \setminus \Omega_{\alpha})}}{\left\|{u_{nk1}}\right\|^p_{L_p(\Omega_\alpha)}} < 
2 \frac{3^p \alpha}{(\gamma-\beta) (h^+_n(\gamma))^p} \exp\biggl[-2p\sqrt{q_{nk1}}(\sin\beta - \sin\alpha)\biggr] .
\end{equation*}
Substituting $\beta = \pi/4 + \alpha/2$ and $\gamma = 3\pi/8 +
\alpha/4$, one gets Eq. (\ref{eq:bound_elliptical}) after
trigonometric simplifications.

For $i=2$, one can use similar estimates for $\se_{n+1}$.
\end{proof}

\section{No localization in rectangle-like domains}
\label{sec:Arectangle}

Theorem \ref{theo:rectangle} relies on the following simple estimate.
\begin{lemma}
For $0 \leq a < b$ and any positive integer $m$, one has
\begin{equation}
\label{eq:sin2}
\begin{split}
\int\limits_{a}^{b} |\sin(m x)| dx \geq \int\limits_{a}^{b} \sin^2(m x) dx \geq \epsilon(a,b) & > 0 , \\ 
\int\limits_{a}^{b} |\cos(m x)| dx \geq \int\limits_{a}^{b} \cos^2(m x) dx \geq \epsilon(a,b) & > 0 , \\
\end{split}
\end{equation}
where
\begin{equation}
\epsilon(a,b) = \min\left\{\frac{b-a}{4},  \frac{b-a}{2} - \frac{1}{2} \left|\frac{\sin(n(b-a))}{n}\right| : n=1,2,\dots, 
\left[{\frac{2}{b-a}}\right]\right\} > 0,
\end{equation}
\end{lemma}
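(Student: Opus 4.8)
The plan is to reduce the statement to one explicit integral and then separate the contributions of large and small $m$. The first inequality in each line of \eqref{eq:sin2} is trivial: since $|\sin(mx)|\le 1$ and $|\cos(mx)|\le 1$, one has $\sin^2(mx)\le|\sin(mx)|$ and $\cos^2(mx)\le|\cos(mx)|$ pointwise, hence after integrating over $[a,b]$. So the whole content lies in the lower bound $\epsilon(a,b)$ on the integrals of the squares.

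First I would use the half-angle identities to compute
$$\int_a^b \sin^2(mx)\,dx = \frac{b-a}{2} - \frac{\sin(2mb)-\sin(2ma)}{4m}, \qquad \int_a^b \cos^2(mx)\,dx = \frac{b-a}{2} + \frac{\sin(2mb)-\sin(2ma)}{4m},$$
and then bound the oscillatory term by the product-to-sum formula: $|\sin(2mb)-\sin(2ma)| = 2\,|\cos(m(a+b))\sin(m(b-a))| \le 2\,|\sin(m(b-a))|$. Hence both integrals are at least
$$g(m) := \frac{b-a}{2} - \frac{|\sin(m(b-a))|}{2m},$$
and it remains to show $g(m)\ge\epsilon(a,b)$ for every positive integer $m$.

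The last step is a case split on the size of $m(b-a)$. If $m(b-a)\ge 2$, then $|\sin(m(b-a))|/(2m)\le 1/(2m)\le (b-a)/4$, so $g(m)\ge (b-a)/2-(b-a)/4=(b-a)/4$. If $m(b-a)<2$, then $m$ lies in the finite set $\{1,2,\dots,[2/(b-a)]\}$, and for each such $m$ the quantity $g(m)=\frac{b-a}{2}-\frac12\bigl|\frac{\sin(m(b-a))}{m}\bigr|$ is exactly one of the terms appearing in the definition of $\epsilon(a,b)$. Taking the minimum over the two cases gives $g(m)\ge\epsilon(a,b)$ for all $m$, and hence \eqref{eq:sin2}. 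Finally, strict positivity of $\epsilon(a,b)$ follows from the strict inequality $|\sin t|<t$ for $t>0$ (applied to $t=n(b-a)$, showing each listed term is positive) together with $(b-a)/4>0$, since a minimum of finitely many positive numbers is positive.

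I do not expect any genuine obstacle here; this is an elementary lemma. The only point requiring a little care is matching the index range $n\le[2/(b-a)]$ in the definition of $\epsilon$ with the case split — in particular, when $2/(b-a)$ happens to be an integer, the borderline value $m=2/(b-a)$ is already handled by the large-$m$ case, and including it redundantly in the finite list only keeps $\epsilon(a,b)$ a valid (and still strictly positive) lower bound.
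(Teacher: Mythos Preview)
Your proof is correct. The paper does not actually give a proof beyond the sentence ``The proof of this lemma is elementary,'' and your argument is precisely the elementary computation that the explicit form of $\epsilon(a,b)$ is designed around: explicit evaluation of $\int \sin^2$ and $\int \cos^2$, the product-to-sum bound reducing both to $g(m)=\tfrac{b-a}{2}-\tfrac{1}{2}\bigl|\tfrac{\sin(m(b-a))}{m}\bigr|$, and the case split $m(b-a)\gtrless 2$ that produces exactly the finite list in the definition of $\epsilon(a,b)$.
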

It is important to stress that the lower bound $\epsilon(a,b)$ does
not depend on $m$.  The proof of this lemma is elementary.

The proof of Theorem \ref{theo:rectangle} is a simple consequence.
\begin{proof}
The condition (\ref{eq:rect_cond}) ensures that all the eigenvalues
are simple so that each eigenfunction is 
\begin{equation*}
u_{n_1,...,n_d}(x_1,...,x_d) =
\begin{cases}
\sin(\pi n_1 x_1/\ell_1) ... \sin(\pi n_d x_d/\ell_d) \qquad \rm{(Dirichlet)}, \cr
\cos(\pi n_1 x_1/\ell_1) ... \cos(\pi n_d x_d/\ell_d) \qquad \rm{(Neumann)} . \end{cases}
\end{equation*}
For any open subset $V$, there exists a ball included in $V$ and thus
there exists a rectangle-like domain $\Omega_V = [a_1,b_1]\times
... \times [a_d,b_d]\subset V$, with $0 \leq a_i < b_i \leq \ell_i$
for all $i=1,...,d$.  The $L_1$-norm of $u$ in $V$ can be estimated as
\begin{equation*}
\begin{split}
& \|u_{n_1,...,n_d}\|_{L_1(V)} \geq \|u_{n_1,...,n_d}\|_{L_1(\Omega_V)} = 
\prod\limits_{i=1}^d \int\limits_{a_i}^{b_i} dx_i \begin{cases} |\sin(\pi n_i x_i/\ell_i)|  \cr |\cos(\pi n_i x_i/\ell_i)| \end{cases} \\
& = \frac{\ell_1 ... \ell_d}{\pi^d} \prod\limits_{i=1}^d \int\limits_{\pi a_i/\ell_i}^{\pi b_i/\ell_i} dx_i 
\begin{cases} |\sin(n_i x_i)| \cr  |\cos(n_i x_i)| \end{cases}
\geq \frac{\ell_1 ... \ell_d}{\pi^d} \prod\limits_{i=1}^d \epsilon(\pi a_i/\ell_i, \pi b_i/\ell_i)  ,\\
\end{split}
\end{equation*}
where the last inequality results from (\ref{eq:sin2}).  To complete
the proof, one uses the Jensen's inequality for $L_p$-norms and
$\mu_d(V) \geq \mu_d(\Omega_V) = (b_1-a_1)...(b_d-a_d)$
\begin{equation*}
\begin{split}
\frac{\|u_{n_1,...,n_d}\|_{L_p(V)}}{\|u_{n_1,...,n_d}\|_{L_p(\Omega)}} & >
\frac{\|u_{n_1,...,n_d}\|_{L_1(V)} (\mu_d(V))^{\frac{1}{p}-1}}{\|u_{n_1,...,n_d}\|_{L_\infty(\Omega)} (\mu_d(\Omega))^{\frac{1}{p}}} 
 \geq \frac{1}{\pi^d} \prod\limits_{i=1}^d \left(\frac{b_i - a_i}{\ell_i}\right)^{\frac{1}{p}-1} 
\epsilon(\pi \frac{a_i}{\ell_i}, \pi \frac{b_i}{\ell_i}) > 0 . \\
\end{split}
\end{equation*}
Since the right-hand side is strictly positive and independent of
$n_1$, ... , $n_d$, the infimum of the left-hand side over all
eigenfunctions is strictly positive.
\end{proof}

\end{document}